\documentclass{comsoc2020}

    \long\def\symbolfootnote[#1]#2{\begingroup%
    \def\thefootnote{\fnsymbol{footnote}}\footnote[#1]{#2}\endgroup} 

\usepackage[dvipsnames]{xcolor}
\usepackage{times}
\usepackage{soul}
\usepackage{url}
\usepackage[hidelinks]{hyperref}
\usepackage[utf8]{inputenc}
\usepackage[small]{caption}
\usepackage{graphicx}
\usepackage{amsmath}
\usepackage{amsthm}
\usepackage{booktabs}
\usepackage{tikz}
\usetikzlibrary{backgrounds,positioning,fit,decorations.pathreplacing}

\usepackage[noend,ruled,linesnumbered]{algorithm2e}
\urlstyle{same}
\usepackage[round]{natbib}
\usepackage{mathtools}
\usepackage{amssymb}
\usepackage{cleveref}
\usepackage{graphicx} 
\usepackage{subcaption}
\usepackage{wasysym} 

\newtheorem{theorem}{Theorem}
\newtheorem{corollary}[theorem]{Corollary}
\newtheorem{lemma}[theorem]{Lemma}
\newtheorem{proposition}[theorem]{Proposition}
\newtheorem{example}{Example}
\newtheorem{definition}[theorem]{Definition}

\newcommand{\avg}{\text{avg}} 
\newcommand{\gr}{\kappa} 
\newcommand{\R}{\mathcal{R}} 
\newcommand{\Set}[1]{\{ #1 \}} 
\newcommand{\mc}{\mathrm{mc}} 
\newcommand{\tsc}{\mathrm{sc}} 
\newcommand{\credit}{\mathrm{\cent}} 


\title{Dynamic Proportional Rankings\symbolfootnote[1]{A short version of this paper appears in the proceedings of IJCAI-2021 (Israel and Brill, 2021).}}
\author{Jonas Israel and Markus Brill\\ 
Efficient Algorithms Research Group\\
Technische Universität Berlin}

\pagestyle{plain}

\begin{document}

\nocite{IsBr21b}

\begin{abstract}
    Proportional ranking rules aggregate approval-style preferences of agents into a collective ranking 
    such that groups of agents with similar preferences are adequately represented. 
    Motivated by the application of live Q\&A platforms, where submitted questions need to be ranked based on the interests of the audience, we study a dynamic extension of the proportional rankings setting. In our setting, the goal is to maintain the proportionality of a ranking when alternatives (i.e., questions)---not necessarily from the top of the ranking---get selected sequentially. 
    We propose generalizations of well-known aggregation rules to this setting and study their monotonicity and proportionality properties. 
    We also evaluate the performance of these rules experimentally, using realistic probabilistic assumptions on the selection procedure.  
\end{abstract}

\section{Introduction}
\label{sec:intro}

From “ask-me-anything” sessions to panel discussions and town hall meetings, an increasing number of both virtual and in-person discussion formats are enhanced by digital tools that aim to make the event more interactive and responsive to the audience. Using live Q\&A platforms such as 
\textit{slido} (\url{https://www.sli.do}), 
\textit{Mentimeter} (\url{https://www.mentimeter.com}) 
or \textit{Pigeonhole Live} (\url{https://pigeonholelive.com}), participants in the audience can submit questions and upvote questions submitted by others; a moderator then selects the most popular questions for the discussion. By reducing barriers to participation (e.g., by allowing anonymous submissions), these tools aim to better represent the diversity in the audience.

The moderator of the discussion is presented with an aggregated list, in which audience questions are ranked by popularity (i.e., number of upvotes).
Based on this ranking, the moderator then picks the next question.
When selecting a question, it is usually not required to follow the ranking strictly; rather, the choice is at the moderator's discretion, allowing him or her to take into account other factors such as discussion flow, etc. That being said, it is generally expected that questions at the top of the ranking are more likely to be selected than questions further down in the list. 
After a question has been selected, it is removed from the ranking.

Ranking questions solely by popularity, though intuitively appealing, has a major downside: minority opinions might go completely unrepresented, even when the minority makes up a substantial proportion of the audience. To illustrate this phenomenon, which is often referred to as ``tyranny of the majority,'' consider a situation in which the audience is composed of two groups. One group makes up 60\% of the entire audience and is only interested in questions related to topic~$A$; the remaining  40\% of participants are only interested in questions on a different topic $B$. Now, assuming that sufficiently many questions on topic $A$ have been submitted, and that participants only upvote questions related to their own interest, questions on topic $B$ are unlikely to appear anywhere near the top of the ranking, which is populated exclusively by questions on topic $A$. As a consequence, questions on topic $B$ are very unlikely to be selected, despite the fact that these questions are supported by 40\% of the audience. 
 
In this paper, we propose an approach to avoid the  problem of underrepresenting minority opinions. Specifically, we model the scenario described above as a \textit{proportional representation} problem and employ ranking algorithms based on (approval-based) proportional voting rules \citep{ABC+16a,BFJL16a}. 
The algorithms we consider aggregate the upvotes of the participants into a \textit{proportional} ranking over questions, such that each minority (i.e., group of participants with similar preferences) is represented in the ranking to an extent that is proportional to the group's size.
Whenever a question is selected by the moderator, our methods dynamically recompute the ranking, pushing questions supported by underrepresented groups closer to the top.

At a technical level, our point of departure is the theory of proportional rankings \citep{SLB+17a}, which studies how a collective ranking over a set of alternatives can be constructed in such a way that majority and minority opinions are represented adequately.  
The question we are interested in is how proportional ranking algorithms can be adapted to the dynamic setting. More specifically, we ask:
\begin{quote}
\textit{How can the proportional representativeness of a collective ranking be maintained in a dynamic setting, where alternatives get selected sequentially?} 
\end{quote}
To answer this question, we consider two well-known aggregation rules dating back to the late 19th century: sequential \citet{Phra94a} and sequential PAV \citep{Thie95a}. These two rules, together with a few variants of the latter, performed best in the analysis conducted by \citet{SLB+17a}. For both rules, we propose two distinct generalizations to the sequential selection setting: a \textit{dynamic} variant and a \textit{myopic} variant (see \Cref{sec:dynamic_rules} for details). 
As a benchmark, we also consider the rule that simply orders questions by the number of received upvotes.

\paragraph{Our Contribution.}
In this paper, 
(i) we formalize the setting of dynamic ranking rules and generalize the rules of Phragmén and Thiele to this setting (\Cref{sec:dynamic_rules});
(ii)~we define a notion of satisfaction monotonicity and analyze to what extent the considered rules satisfy it (\Cref{sec:impl_mono});
(iii) we provide theoretical bounds regarding two different proportionality notions (\Cref{sec:proportionality});
and (iv) we experimentally evaluate our dynamic ranking rules (\Cref{sec:experiments}).
Omitted proofs and further details can be found in the appendix. 

\paragraph{Related Work.}
Proportional representation is a fundamental desideratum in multiwinner elections \citep{Monr95a,FSST17a,LaSk20a}. For approval preferences in particular, a wide variety of proportionality axioms have been studied \citep{ABC+16a,SFF+17a,Jans18a,PeSk20a}. Proportionality in the context of rankings has been considered in the aforementioned paper by \citet{SLB+17a} and (for linear preferences) by \citet{Schu11b}.

Notions of fairness over multiple elections among a fixed set of voters have received considerable attention in previous years. This line of work includes, e.g., the study of long-term fairness over different decisions \citep{freeman2017fair,lackner2020perpetual}, single decisions under changing preferences \citep{tennenholtz2004transitive,boutilier2012dynamic,parkes2013dynamic,oren2014online,hemaspaandra2017complexity}, and storable votes \citep{casella2005storable,casella2012storable}. 

In a practical attempt to avoid the underrepresentation of minorities, the  live Q\&A app \textit{SpeakUp} (\url{https://speakup.digital/}) allows audience members to add \textit{attributes} (relating to, e.g., gender or education) to submitted questions. The moderator can then manually filter questions with attributes that have been underrepresented in the discussion.  
Requiring organizers to identify relevant attributes poses the risk of overlooking important subgroups or introducing unwanted biases; it also presumes the willingness of participants to reveal potentially sensitive information.
In contrast, the ranking algorithms considered in this paper do not require attributes in order to ensure the representation of minority opinions.

\section{Preliminaries}
\label{sec:prelims}

We briefly introduce some basic concepts from the theory of approval-based preference aggregation; for details, see the survey by \citet{LaSk20a}. Let $C$ be a finite set of candidates and $N=\{1, \ldots, n\}$ a finite set of voters. An \textit{(approval) profile} $A = (A_1, \ldots, A_n)$ is a list that contains, for each $i \in N$, the approval set $A_i \subseteq C$ of voter $i$.
Given an approval profile $A$ and a candidate $c \in C$, we let $N_c = \{i \in N : c \in A_{i}\}$ denote the \emph{supporters} of $c$. The \emph{approval score} of $c$ is given by $|N_c|$.
In the motivating application, $C$ consists of all submitted questions and $A_i$ contains the questions that have been upvoted by participant~$i$.

To measure satisfaction of a group of voters $V \subseteq N$ with a set $S \subseteq C$ of candidates, we often use the \emph{average satisfaction} of $V$ with $S$, i.e., 
\[ \avg_V(S) = \frac{1}{|V|} \cdot \sum_{i \in V} |A_i \cap S|. \]

For a finite set $S$, we let $\mathcal{L}(S)$ denote the set of all linear orders, or \emph{rankings}, over $S$. We often write a ranking \mbox{$r \in \mathcal{L}(S)$} as a sequence $r = (r_1, r_2, \ldots, r_{|S|})$, and for $j \le |S|$, we let $r_{\leq j}$ denote the set $\Set{r_1, r_2, ... , r_j}$ of the first $j$ elements in $r$.

An approval-based \textit{ranking rule} maps an approval profile $A$ to a ranking $r \in \mathcal{L}(C)$ of all candidates.
We will make use of the following three (non-dynamic) ranking rules.\footnote{All rules may encounter ties; we assume that a priority ordering over candidates is used as a tiebreaker. In the motivating example, the submission time of a question yields a natural priority ordering.}

\paragraph{Approval Voting (AV).}
AV ranks the candidates according to their approval score. This rule is not proportional and we use it mainly as a benchmark. 

\paragraph{Sequential PAV (seqPAV).}
This rule ranks candidates iteratively, in each iteration choosing an unranked candidate maximizing the marginal contribution in terms of weighted voter satisfaction. 
Formally, for a subset $S \subseteq C$ of candidates, define
\[\tsc(S) = \sum_{i \in N} \sum_{j=1}^{|A_i \cap S|} \frac{1}{j}.\]
If $k$ candidates have already been ranked, the marginal contribution of an unranked candidate $c$
is given by
$\mc(c) = \tsc(r_{\leq k} \cup \{c\}) - \tsc(r_{\leq k})$.

\paragraph{Sequential Phragmén.}
This rule can be described in terms of voters buying candidates.\footnote{An equivalent formulation of this method is in terms of a load balancing procedure \citep{Jans16a,BFJL16a}.}  
Every candidate costs 1 credit. All voters start without any credits but earn them continuously over time (at a constant and identical rate).
As soon as a group of voters who all approve the same candidate~$c$ together own~1 credit, they immediately buy that candidate; at this point, their balance is reset to~$0$ and candidate $c$ is added in the next position of the ranking. This is done until all candidates are ranked.

\section{Dynamic Ranking Rules}
\label{sec:dynamic_rules}

In this section, we formally introduce the setting of dynamic ranking rules and we adapt existing (non-dynamic) ranking rules to this setting. The input of a dynamic ranking rule consists of two parts: an approval profile and a (potentially empty) sequence of candidates that have already been ``implemented'' or ``executed'';  the output is a ranking of all not-yet-implemented candidates. To formalize this notion, we let $X = (x_1, x_2, \ldots, x_j)$ denote the sequence of implemented candidates (where $j \in \{0, \ldots, |C|\}$); whenever the order of elements in $X$ does not matter, we slightly abuse notation and treat $X$ as the set $X = \{x_1, x_2, \ldots, x_j\}$. 

\begin{definition}
An \emph{(approval-based) dynamic ranking rule}~$\R$ maps a profile $A$ and a sequence $X = (x_1, x_2, \ldots, x_j)$ of candidates to a ranking $\R(A, X) \in \mathcal{L}(C\setminus X)$. 
\end{definition}
 
Applying a dynamic ranking rule to a sequential selection process (as outlined in the introduction) is now straightforward: At the beginning, when no candidate has yet been implemented, $X=()$ and the ranking $\R(A,())$ ranks all candidates in $C$. Given this ranking, a \textit{decision maker (DM)} selects an alternative $x_1 \in C$ to be implemented. The updated ranking of the remaining candidates is then given by $\R(A,(x_1))$, and the process is repeated. At iteration $t \in \mathbb{N}$, when $t-1$ candidates have been implemented and thus $X = (x_1, x_2, \ldots, x_{t-1})$, we let $r^t$ denote the ranking $\R(A,X) \in \mathcal{L}(C \setminus X)$ from which the DM can make a choice. 

We will sometimes make the assumption that the DM only ever implements alternatives that appear near the top of the ranking. In this \textit{depth-restricted setting}, we are given a natural number $h$ and we assume that $x_t \in r^t_{\le h}$ for all time steps~$t$. This setting models situations in which the DM does not have the resources (or the ability) to consider the whole ranking.

The straightforward ranking rule AV trivially translates to the dynamic setting: When a candidate is implemented, it is simply removed from the ranking; the order between the remaining candidates does not change. AV is used in all of the live Q\&A platforms mentioned in \Cref{sec:intro}. 

In the following, we propose dynamic variants of other ranking rules. For a more detailed description of these rules, including pseudocode and asymptotic runtime analysis, we refer to \Cref{app:rules}.

\paragraph{Dynamic seqPAV.}
For this straightforward dynamization of seqPAV, we modify the notion of marginal contribution to also take into account the satisfaction derived from previously implemented candidates:
\[ \mc_\text{dyn}(c) = \tsc(X \cup r_{\leq k} \cup  \{c\}) - \tsc(X \cup r_{\leq k}). \]
Dynamic seqPAV ranks candidates iteratively, adding in each round a candidate $c$ maximizing $\mc_\text{dyn}(c)$.
Note that $X$ is treated as a set here, as the order of elements in $X$ does not matter.

\paragraph{Dynamic Phragmén.}
Our first dynamization of sequential Phragmén works in two phases. As before, voters buy candidates and every candidate has a cost of 1 credit. Voters do not start with~$0$ credits, however; they may have an initial \textit{debt} due to previously implemented candidates they approve. The debts of voters are determined in the \textit{first phase}, which iterates through the sequence~$X$ (starting with $x_1$) and, for each implemented candidate $x_j \in X$, divides the cost of $1$ among the voters in $N_{x_j}$. More precisely, this assignment of debts is done in such a way that, in each iteration~$j$, the maximum total debt across all voters in $N_{x_j}$ is as small as possible. (The assignment of debts, therefore, mimics the assignment of loads in the load balancing formulation of sequential Phragmén.)
We let $d_i\ge 0$ denote the total debt of voter $i \in N$ resulting from this first phase. In the \textit{second phase}, we run sequential Phragmén to obtain the desired ranking of candidates in $C\setminus X$. At the beginning of this phase, each voter $i$ has a credit balance of $-d_i \le 0$. As in sequential Phragmén, voters continuously earn credits, and voters starting with debts can only participate in the purchase of a candidate once they have a positive balance. 

\medskip

These dynamic rules rank candidates in the same fashion as their non-dynamic counterparts, while taking the sequence $X$ of previously implemented candidates into account. (Note that the implementation \textit{order} matters for dynamic Phragmén, but not for dynamic seqPAV.) \hspace{0.1em} In particular, both dynamic rules coincide with their non-dynamic counterpart when $X=()$. 
Moreover, the ranking among the remaining candidates does not change whenever the top-ranked candidate is implemented: if $r^t=(r_1, r_2, r_3, \ldots)$ and $x_t = r_1$, then $r^{t+1}=(r_2, r_3, \ldots)$.

\medskip

We also consider two ``myopic'' dynamic ranking rules.

\paragraph{Myopic seqPAV.}
In this myopic dynamization of seqPAV, we compute the marginal contribution of each candidate $c \in C\setminus X$ only with respect to the set $X$ of previously implemented candidates, i.e., 
$\mc_\text{myopic}(c) = \tsc(X \cup \{c\}) - \tsc(X)$.
Then, we simply rank those candidates according to decreasing $\mc_\text{myopic}(c)$-value.

\paragraph{Myopic Phragmén.}
In this myopic dynamization of sequential Phragmén, we first run the first phase of dynamic Phragmén in order to determine the debts $\{d_i\}_{i \in N}$ of voters. 
Then, for each candidate $c \in C \setminus X$, we compute the voter debts that would result from adding candidate $c$ to~$X$ (and running the first phase for one more iteration). 
Let the debts induced by candidate $c$ be $\{d_i^c\}_{i \in N}$. 
Myopic Phragmén ranks the candidates in $C \setminus X$ according to increasing 
$\max_{i \in N_{c}} d^c_i$,
breaking ties according to the second highest debt and so on. 

\medskip

Intuitively, myopic seqPAV and myopic Phragmén rank candidates according to their suitability of being the next implemented candidate. In contrast to dynamic seqPAV and dynamic Phragmén, this way of comparing candidates does not lead to rankings that are representative by themselves. In particular, both myopic rules coincide with AV when $X=()$.

\medskip 
\begin{figure}
    \centering
    \begin{tikzpicture}
        [blue_cand/.style={rectangle, fill=blue!15, minimum width=1.7em}, 
        red_cand/.style={rectangle, fill=red!15, minimum width=1.7em}, 
        green_cand/.style={rectangle, fill=ForestGreen!20, minimum width=1.7em},
        chosen/.style={}, cand/.style={}, refr/.style={}, rank/.style={}]

        \node[rank] (rank) {};
        \node[blue_cand] (cand3) [right=0.5em of rank] {$b$};
        \node[blue_cand] (cand1) [above=1.35em of cand3] {$a$};
        \node[red_cand] (cand2) [above=0.1em of cand3] {$c$};
        \node[red_cand] (cand4) [below=0.11em of cand3] {$d$};
        \node[green_cand] (cand5) [below=1.6em of cand3] {$e$};
        \node[chosen] (chosen) [right=0.1em of cand3] {\checkmark};
        \node[refr] (dyn) [below right=1.5cm and 2cm of rank] {\small dynamic rules};
        \node[refr] (ref1) [left=0.3em of cand1.north west] {};
        \node[refr] (ref2) [right=0.3em of cand1.north east] {};
        \draw[very thick] (ref1) to (ref2);
        \draw [decorate,decoration={brace,amplitude=5pt}] (cand5.210) -- (cand1.150) node [black,midway,xshift=-1.5em] {$r^1$};
        
        \node[refr] (refA) [above right=1.2cm and 6.5cm of rank] {};
        \node[refr] (refB) [below right=1.5cm and 6.5cm of rank] {};
        \draw[thick] (refA) to (refB);
        
        \begin{scope}[xshift=2.4cm, yshift=-0.7em]
        \node[rank] (rank) {$\phantom{r^2}$};
        \node[blue_cand] (cand3) [above right=-0.75em and 0.5em of rank] {$a$};
        \node[blue_cand] (cand1) [above=1.35em of cand3] {$b$};
        \node[red_cand] (cand2) [above=0.1em of cand3] {$c$};
        \node[red_cand] (cand4) [below=0.11em of cand3] {$d$};
        \node[green_cand] (cand5) [below=1.6em of cand3] {$e$};
        \node[chosen] (chosen) [right=0.1em of cand4] {\checkmark};
        \node[refr] (ref1) [left=0.3em of cand1.south west] {};
        \node[refr] (ref2) [right=0.3em of cand1.south east] {};
        \draw[very thick] (ref1) to (ref2);
        \path [decorate,decoration={brace,amplitude=5pt}, thick] (cand1.south west) -- (cand1.north west)node [black,midway,xshift=-1.5em] {$X^2$};
        \draw [decorate,decoration={brace,amplitude=5pt}] (cand5.210) -- (cand2.150)node [black,midway,xshift=-1.5em] {$r^2$};
        \end{scope}
        
        \begin{scope}[xshift=4.8cm, yshift=-1em]
        \node[rank] (rank) {$\phantom{r^2}$};
        \node[blue_cand] (cand3) [above right=-0.75em and 0.5em of rank] {$a$};
        \node[blue_cand] (cand1) [above=1.6em of cand3] {$b$};
        \node[red_cand] (cand2) [above=0.1em of cand3] {$d$};
        \node[red_cand] (cand4) [below=0.11em of cand3] {$c$};
        \node[green_cand] (cand5) [below=1.35em of cand3] {$e$};
        \node[refr] (ref1) [left=0.3em of cand2.south west] {};
        \node[refr] (ref2) [right=0.3em of cand2.south east] {};
        \draw[very thick] (ref1) to (ref2);
        \draw [decorate,decoration={brace,amplitude=5pt}] (cand2.210) -- (cand1.150)node [black,midway,xshift=-1.5em] {$X^3$};
        \draw [decorate,decoration={brace,amplitude=5pt}] (cand5.210) -- (cand3.150)node [black,midway,xshift=-1.5em] {$r^3$};
        \end{scope}
        
        \begin{scope}[xshift=7.2cm]
        \node[rank] (rank) {$\phantom{r^1 =}$};
        \node[red_cand] (cand3) [right=0.5em of rank] {$c$};
        \node[blue_cand] (cand1) [above=1.6em of cand3] {$a$};
        \node[blue_cand] (cand2) [above=0.1em of cand3] {$b$};
        \node[red_cand] (cand4) [below=0.11em of cand3] {$d$};
        \node[green_cand] (cand5) [below=1.6em of cand3] {$e$};
        \node[chosen] (chosen) [right=0.1em of cand2] {\checkmark};
        \node[refr] (dyn) [below right=1.35cm and 2cm of rank] {\small myopic rules};
        \node[refr] (ref1) [left=0.3em of cand1.north west] {};
        \node[refr] (ref2) [right=0.3em of cand1.north east] {};
        \draw[very thick] (ref1) to (ref2);
        \draw [decorate,decoration={brace,amplitude=5pt}] (cand5.210) -- (cand1.150)node [black,midway,xshift=-1.5em] {$r^1$};
        \end{scope}
        
        \begin{scope}[xshift=9.6cm, yshift=-0.7em]
        \node[rank] (rank) {$\phantom{r^2 =}$};
        \node[red_cand] (cand3) [above right=-0.95em and 0.5em of rank] {$d$};
        \node[blue_cand] (cand1) [above=1.35em of cand3] {$b$};
        \node[red_cand] (cand2) [above=0.1em of cand3] {$c$};
        \node[blue_cand] (cand4) [below=0.11em of cand3] {$a$};
        \node[green_cand] (cand5) [below=1.35em of cand3] {$e$};
        \node[chosen] (chosen) [right=0.1em of cand3] {\checkmark};
        \node[refr] (ref1) [left=0.3em of cand1.south west] {};
        \node[refr] (ref2) [right=0.3em of cand1.south east] {};
        \draw[very thick] (ref1) to (ref2);
        \path [decorate,decoration={brace,amplitude=5pt}, thick, white] (cand1.south west) -- (cand1.north west)node [black,midway,xshift=-1.5em] {$X^2$};
        \draw [decorate,decoration={brace,amplitude=5pt}] (cand5.210) -- (cand2.150)node [black,midway,xshift=-1.5em] {$r^2$};
        \end{scope}
        
        \begin{scope}[xshift=12cm, yshift=-1em]
        \node[rank] (rank) {$\phantom{r^2}$};
        \node[blue_cand] (cand3) [above right=-0.75em and 0.5em of rank] {$a$};
        \node[blue_cand] (cand1) [above=1.6em of cand3] {$b$};
        \node[red_cand] (cand2) [above=0.1em of cand3] {$d$};
        \node[red_cand] (cand4) [below=0.11em of cand3] {$c$};
        \node[green_cand] (cand5) [below=1.35em of cand3] {$e$};
        \node[refr] (ref1) [left=0.3em of cand2.south west] {};
        \node[refr] (ref2) [right=0.3em of cand2.south east] {};
        \draw[very thick] (ref1) to (ref2);
        \draw [decorate,decoration={brace,amplitude=5pt}] (cand2.210) -- (cand1.150)node [black,midway,xshift=-1.5em] {$X^3$};
        \draw [decorate,decoration={brace,amplitude=5pt}] (cand5.210) -- (cand3.150)node [black,midway,xshift=-1.5em] {$r^3$};
        \end{scope}
        \end{tikzpicture}
    \caption{Rankings discussed in \Cref{ex:initial_ex}. Candidates approved by the three voter groups are marked in blue, red, and green, respectively. The rankings produced by the dynamic ranking rules are depicted on the left, the ones produced by the myopic rules on the right. The candidate that is chosen by the DM is marked with ``\checkmark'' and appears in the sequence of implemented candidates $X$ in the next iteration.
    In each iteration $t$, the thick line separates the sequence $X^t$ of implemented candidates (above the line) from the ranking over the remaining candidates in $C\setminus X^t$ (below the line).}
    \label{fig:initial_ex}
\end{figure}
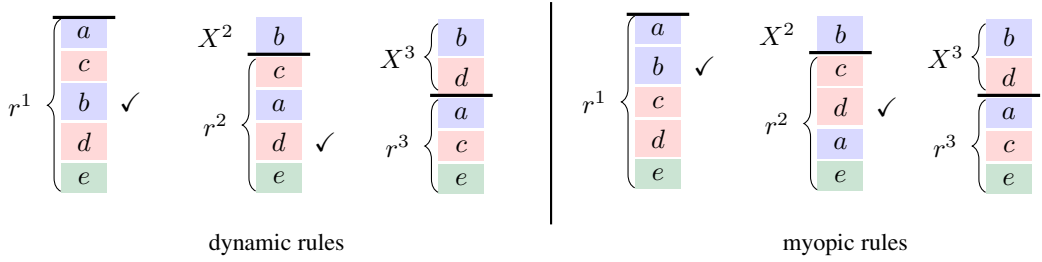
We illustrate these rules with a simple example. The rankings discussed in this example are depicted in \Cref{fig:initial_ex}.
\begin{example}\label{ex:initial_ex}
Let $C=\Set{a,b,c,d,e}$ and assume alphabetic tiebreaking. Consider a set of $9$ voters with the following approval sets:
\[5 \times \Set{a,b}, \qquad 3 \times \Set{c,d}, \qquad 1 \times \Set{e} \text. \]
Let $V$ denote the group consisting of the $5$ $\{a,b\}$-voters and $V'$ the group consisting of the $3$ $\{c,d\}$-voters.
First, consider dynamic seqPAV and dynamic Phragmén. 
In the first iteration, 
both rules output $r^1 = (a,c,b,d,e)$, effectively alternating between candidates supported by voter groups $V$ and $V'$. Let us assume that the DM first implements candidate $x_1=b$, i.e., $X^2=(b)$. Then, the two rules output $r^2=(c,a,d,e)$. If the DM implements candidate $x_2=d$ next (and thus $X^3=(b,d)$), both rules output $r^2=(a,c,e)$. 

Next, consider myopic seqPAV and myopic Phragmén. 
In the first iteration, both rules (and AV) rank the candidates according to their approval scores: $r^1 = (a,b,c,d,e)$. 
After the implementation of $b$, both rules output $r^2 = (c,d,a,e)$, which differs from the AV ranking $r^2=(a,c,d,e)$. If the DM then implements candidate $x_2=d$, the two rules output $r^2=(a,c,e)$. 
\end{example}

In this example, all of our ranking rules demote candidate $a$ in $r^2$ because voter group $V$ is already (partially) satisfied with $X^2=(b)$. The myopic rules even rank \textit{both} $c$ and $d$ higher than $a$ in~$r^2$, since implementing either $c$ or $d$ would yield a more proportional sequence $X$ than implementing $a$ would.

All presented ranking rules can be computed in polynomial time; see \Cref{app:rules} for details.

\section{Monotonicity of Voter Satisfaction}
\label{sec:impl_mono}

We start our analysis of dynamic ranking rules by considering the satisfaction of voters during the sequential selection process. In doing so, we assume that voters derive satisfaction not only from implemented candidates they approve, but also---possibly to a lesser extent---from approved candidates appearing near the top of the ranking: high positions in the ranking come with increased attention (and, presumably, high selection probabilities in future iterations) for the respective candidates.  
In particular, improved ranking positions of supported candidates can be viewed as a kind of compensation for (groups of) voters who are not (yet) well-represented by the implemented alternatives. 
To make this concrete, consider an iteration~$t$, where the DM is confronted with ranking $r^t$ and chooses to implement candidate $x_t$.  
Following the logic outlined above, it might be natural to expect that voters \textit{not} approving $x_t$ (or, more precisely, the candidates approved by these voters) should get a ``boost'' in the ranking. At the very least, it seems reasonable to expect that the satisfaction of such voters with the new ranking $r^{t+1}$ is at least as high as with the old ranking $r^t$. 
AV trivially satisfies this property, which we informally refer to as \textit{satisfaction \mbox{monotonicity}}.
Somewhat surprisingly, however, the following simple example demonstrates that this intuitive monotonicity notion is not achievable for dynamic ranking rules that satisfy a minimal degree of representativeness. 
The rankings discussed in this example are depicted in \Cref{fig:simple_mono}.\footnote{\Cref{ex:simple_mono} can be turned into an impossibility result: Every dynamic ranking rule that \textit{(i)} ranks the approval winner at the top in the first iteration and \textit{(ii)} gives priority to less satisfied voter groups fails satisfaction monotonicity.}

\begin{example}\label{ex:simple_mono}
Consider the following profile with 7 voters: 
\[ 1 \times \Set{a}, \qquad 3 \times \Set{b}, \qquad 3 \times \Set{a, c}. \]
All rules considered in this paper rank the approval winner~$a$ first in $r^1$.
If the DM chooses to implement candidate~$x_1 = c$,  
all of our rules---except AV---output $r^2 = (b,a)$ in the second iteration. 
Intuitively, the rules give more voting power to the 3 supporters of $b$ (all of which are unrepresented by $c$) than to the 4 supporters of $a$ (3 of which are already partially represented). 
Observe that the satisfaction of the voter approving $a$ decreases when going from $r^1$ to $r^2$, despite the fact that this voter does not approve the candidate being implemented.  
\end{example}

The following definition is motivated by the question whether monotonicity failures can be prevented by moving to the depth-restricted setting and putting lower bounds on the size of voter groups for which monotonicity should hold. 

\begin{definition}
For $h \geq 1$ and $\alpha \in (0,1]$, a dynamic ranking rule satisfies \emph{$(h,\alpha)$-monotonicity} if, for all profiles and all groups of voters $V \subseteq N$ of size $|V| \geq \alpha\cdot |N|$, the following holds for every iteration $t$:
\[ \text{If } x_t \notin \bigcup_{i \in V} A_i \text{, then } \avg_V(r^{t+1}_{\leq h}) \geq \avg_V(r^t_{\leq h}). \]
\end{definition}

That is, $(h,\alpha)$-monotonicity requires that satisfaction monotonicity holds for groups that make up at least an $\alpha$-fraction of the electorate, and when measuring satisfaction with respect to the first~$h$ positions in a ranking.

AV trivially satisfies $(h,\alpha)$-monotonicity for all~$h$ and all~$\alpha$. On the other hand, all other considered rules violate this notion unless we consider rather large groups of voters. 

\begin{figure}[t]
    \centering
    \begin{tikzpicture}
        [blue_cand/.style={rectangle, fill=blue!15, minimum width=1.7em}, 
        red_cand/.style={rectangle, fill=red!15, minimum width=1.7em}, 
        green_cand/.style={rectangle, fill=green!15, minimum size=1.7em},
        chosen/.style={}, cand/.style={minimum width=1.7em}, refr/.style={}, rank/.style={}]

        \node[rank] (rank) {$\phantom{r^1 =}$};
        \node[cand] (cand2) [right=0.5em of rank] {$b$};
        \node[red_cand] (cand1) [above=0.1em of cand2] {$a$};
        \node[cand] (cand3) [below=0.11em of cand2] {$c$};
        \node[chosen] (chosen) [right=0.1em of cand3] {\checkmark};
        \node[refr] (ref1) [left=0.3em of cand1.north west] {};
        \node[refr] (ref2) [right=0.3em of cand1.north east] {};
        \draw[very thick] (ref1) to (ref2);
        \draw [decorate,decoration={brace,amplitude=5pt}] (cand3.210) -- (cand1.150)node [black,midway,xshift=-1.5em] {$r^1$};
        
        \begin{scope}[xshift=2.5cm]
        \node[rank] (rank) {\phantom{or}};
        \node[cand] (cand2) [right=1.5em of rank] {$c$};
        \node[red_cand] (cand1) [above=0.1em of cand2] {$a$};
        \node[cand] (cand3) [below=0.11em of cand2] {$b$};
        \node[chosen] (chosen) [right=0.1em of cand2] {\checkmark};
        \node[refr] (ref1) [left=0.3em of cand1.north west] {};
        \node[refr] (ref2) [right=0.3em of cand1.north east] {};
        \draw[very thick] (ref1) to (ref2);
        \draw [decorate,decoration={brace,amplitude=5pt}] (cand3.210) -- (cand1.150) node (r1right) [black,midway,xshift=-1.5em] {$r^1$};
        \node[rank] (or) [below left=-0.5em and 0em of r1right.west] {or};
        \end{scope}
        
        \begin{scope}[xshift=6cm, yshift=-0.5em]
        \node[rank] (rank) {$\phantom{r^2 =}$};
        \node[cand] (cand2) [above right=-0.75em and 0.5em of rank] {$b$};
        \node[cand] (cand1) [above=0.1em of cand2] {$c$};
        \node[red_cand] (cand3) [below=0.11em of cand2] {$a$};
        \node[refr] (ref1) [left=0.3em of cand1.south west] {};
        \node[refr] (ref2) [right=0.3em of cand1.south east] {};
        \draw[very thick] (ref1) to (ref2);
        \draw [decorate,decoration={brace,amplitude=5pt}] (cand3.210) -- (cand2.150)node [black,midway,xshift=-1.5em] {$r^2$};
        \path [decorate,decoration={brace,amplitude=5pt}] (cand1.210) -- (cand1.150)node [black,midway,xshift=-1.5em] {$X^2$};
        \end{scope}
        
        \end{tikzpicture}
    \caption{Rankings discussed in \Cref{ex:simple_mono}. 
    All rules considered here either output $r^1=(a,b,c)$ or $r^1=(a,c,b)$. If the DM chooses to implement candidate~$x_1 = c$, all of the rules---except AV---output $r^2 = (b,a)$ in the second iteration. This violates an intuitive understanding of \emph{monotonicity} for the voter with ballot $\Set{a}$.}
    \label{fig:simple_mono}
\end{figure}

\begin{proposition}\label{thm:implmono}
Consider the depth-restricted setting for some $h\geq 3$. 
Then, dynamic seqPAV and dynamic Phragmén fail to satisfy $(h,\alpha)$-monotonicity for all $\alpha < \frac{6}{2h+5}$.  
Furthermore, myopic seqPAV and myopic Phragmén fail to satisfy $(h,\alpha)$-monotonicity for all $\alpha < \frac{1}{h}$.
\end{proposition}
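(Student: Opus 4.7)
The plan is to produce, for each of the four rules, a parametric family of profiles witnessing the failure of $(h,\alpha)$-monotonicity for every $\alpha$ strictly less than the stated threshold. In each family I fix the very first iteration (so $X^1=()$ and the DM implements a single candidate $c$), designate a candidate $a$ approved by every voter in the group $V$, and verify that $a\in r^1_{\leq h}$, that $c\in r^1_{\leq h}$ (so the depth restriction permits $c$ to be implemented), and that $a\notin r^2_{\leq h}$. Given $\alpha$, the scale parameter of the family is chosen large enough that $|V|\ge\alpha\cdot|N|$.

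A single template handles both myopic rules. Take $t$ voters with ballot $\{a\}$, two voters with ballot $\{a,c\}$, $t+1$ voters with ballot $\{c\}$, and $h$ independent groups of $t+2$ voters each with ballots $\{b_1\},\ldots,\{b_h\}$. In the first iteration the myopic rules coincide with plain AV, so $r^1=(c,a,b_1,\ldots,b_h)$ modulo the priority order on tied candidates, and both $a$ and $c$ lie in $r^1_{\leq h}$. A direct calculation gives $\mc_{\mathrm{myopic}}(a)=t+1$ and $\mc_{\mathrm{myopic}}(b_i)=t+2$; the analogous min--max-debt computation for myopic Phragm\'en shows that the max debt induced by each $b_i$ is $1/(t+2)$, strictly below the balanced max debt $(t+5)/[(t+3)(t+2)]$ induced by $a$. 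So every $b_i$ strictly beats $a$, giving $r^2=(b_1,\ldots,b_h,a)$ and $a\notin r^2_{\leq h}$. I then enlarge $V$ beyond the $t$ voters of ballot $\{a\}$ to include every $\{b_i\}$-voter for $i\leq h-2$ (their satisfaction is unchanged because $b_i\in r^1_{\leq h}\cap r^2_{\leq h}$) and $t-1$ of the $\{b_{h-1}\}$-voters, which yields $|V|=ht+2h-5$ and $|N|=(h+2)t+2h+3$, so $|V|/|N|\to h/(h+2)$. Since $h/(h+2)>1/h$ for $h\ge 3$, this establishes the myopic claim.

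For the dynamic rules the same profile is insufficient: computing $r^1$ under sequential seqPAV/Phragm\'en places each $b_i$ above $a$ at every step once $c$ has been absorbed, so $a$ never enters $r^1_{\leq h}$ with this separate-$b_i$ design. To recover, I would augment the profile with an auxiliary candidate $d$ approved by every $V$-voter plus one extra voter of ballot $\{d\}$, so that $d$ takes position~$1$ of $r^1$ by a single unit of approval, then $a$ takes position~$2$ via the halved marginal of $V$-voters, and $c$ takes position~$3$ because enough $\{c\}$-voters are included. In $r^2$, the extra $\{d\}$-voter again forces $d$ to position~$1$; once $d$ is absorbed, every $V$-voter has $|A_i\cap\{c,d\}|=1$, so the dynamic marginal of $V$ on $a$ collapses to $t/2$, and $h$ separate $b_i$-groups of size slightly above $t/2$ overtake $a$ in the remaining steps. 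A careful parameter count then yields $|V|/|N|\to 6/(2h+5)$.

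The main obstacle is the dynamic case. Under sequential seqPAV/Phragm\'en the rankings $r^1$ and $r^2$ are near-symmetric---the same iterative scoring produces them, differing only in whether $c$ sits in the implemented set $X$ or in the rank-set $r$---so a naive construction puts $a$ at the same relative position in both rankings and yields no violation. Introducing $d$ breaks the symmetry because the $V$-voters spend balance/marginal on $d$ before reaching $a$ in $r^2$, whereas in $r^1$ the inclusion of $c$ in the ranked set has already partially drained $V$'s $a$-pull through the $\{a,c\}$-voters. Verifying simultaneously that (i) $d$ does not displace $a$ or $c$ from $r^1_{\leq h}$, (ii) $d$ is placed at position~$1$ of both $r^1$ and $r^2$, and (iii) the $b_i$-groups are sized so as to beat the halved $V$-pull in $r^2$ but not so large that they displace $c$ or $a$ in $r^1$, is the technically delicate part of the argument and is what pins down the constant~$6$ in $6/(2h+5)$.
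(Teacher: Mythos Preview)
Your treatment of the myopic rules is correct and in fact establishes a stronger threshold than the proposition asks for: your block construction yields $|V|/|N|\to h/(h+2)>1/h$. The construction is different from the paper's, which reuses a modified version of the profile built for the dynamic rules (with overlapping ballots such as $\{a,c,d\}$) rather than disjoint $\{b_i\}$-blocks. Your approach is cleaner for the myopic case. Two small points: you never explicitly compute $\avg_V(r^1_{\le h})-\avg_V(r^2_{\le h})$ for the enlarged $V$; the $(t-1)$ voters with ballot $\{b_{h-1}\}$ actually \emph{gain} satisfaction (since $b_{h-1}\notin r^1_{\le h}$ but $b_{h-1}\in r^2_{\le h}$), so the net drop is only $1/|V|$---still a violation, but worth stating. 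You also rely on a specific tie-breaking priority ($a$ before the $b_i$) which should be made explicit.

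The dynamic part, however, is not a proof but a plan. You correctly diagnose why the myopic profile fails for dynamic seqPAV/Phragm\'en, and you propose an auxiliary candidate $d$ to break the symmetry. But you never exhibit the profile, never verify conditions (i)--(iii), and close with ``a careful parameter count then yields $|V|/|N|\to 6/(2h+5)$.'' Since you yourself call this ``the technically delicate part,'' leaving it undone is a genuine gap. Moreover, some claims in the sketch are shaky: for instance, after $c$ is implemented the $V$-voters who now also approve $d$ still carry full weight, so the assertion that ``the extra $\{d\}$-voter again forces $d$ to position~1'' in $r^2$ needs justification---$d$'s marginal is $|V|+1$ and must be compared against every $b_i$ and against $a$. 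The paper takes a rather different route: it builds a single profile (for $h=3$) with overlapping ballots $\{a,b\}$ and $\{a,c,d\}$, sets $V$ to be the $\{a\}$- and $\{a,c,d\}$-voters, implements $b$, and shows directly that $r^1=(a,b,c,d,e)$ becomes $r^2=(c,d,e,a)$; the ratio $6/11=6/(2\cdot 3+5)$ falls out of the sizes of those overlapping groups, and larger $h$ is handled by cloning $e$. To complete your argument you would need to actually construct the profile and check all the rankings, which is the substance of the claim.
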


\begin{proof}[Proof sketch]
We first consider dynamic seqPAV and then extend the argument to the other rules. Let $h=3$ and $j = 6\cdot y$ for some $y \in \mathbb{N}$ and consider the profile given by
\begin{align*}
    2 &\times \Set{a}, ~ 15 \times \Set{a,b}, ~ \left(\frac{j}{2} + 6\right) \times \Set{b}, ~ 10 \times \Set{c}, \\
    10 &\times \Set{d}, ~ (j + 6) \times \Set{a,c,d}, ~ \left(\frac{j}{3} + 12 \right) \times \Set{e}.
\end{align*}
Initially, dynamic seqPAV outputs $r^1=(a,b,c,d,e)$. After the DM implements $x_1 = b$, dynamic seqPAV outputs $r^2=(c,d,e,a)$ (where the ordering of $c$ and $d$ might depend on tie-breaking). Now consider the voter group~$V$ consisting of all the voters with approval sets $\Set{a}$ or $\Set{a,c,d}$. In the first iteration, the average satisfaction of this group is $\avg_V(r^1_{\leq 3}) = \frac{1}{8+j} \cdot (14 + 2j)$; in the second iteration, it is only $\avg_V(r^2_{\leq 3}) = \frac{1}{8+j} \cdot (12 + 2j)$. For $j \rightarrow \infty$, this group of voters makes up nearly $6/11$ of the electorate (and the rankings remain unchanged).
To extend this example to the case of $h > 3$, we can clone alternative $e$ and all its supporters. 

The dynamic version of Phragmén's rule violates monotonicity on the same examples (see \Cref{app:implmono} for details).
After replacing the two terms $j/3$ and $j/2$ in the above profile each with $j$, and for each copy of $e$ adding 4 voters approving that candidate only, the same holds for myopic seqPAV and myopic Phragmén.
\end{proof}

The monotonicity requirement can be weakened further by only requiring satisfaction monotonicity in cases in which the implemented candidate is never co-approved with any candidate that is approved by a member of the group under consideration (i.e., there is no $c \in \bigcup_{k \in V} A_k$ with $\{c,x_t\}\subseteq A_i$ for some $i \in N$). Both myopic rules satisfy this weak implementation monotonicity, whereas the two dynamic versions fail it. For a thorough discussion of this weaker version of monotonicity, we refer to \Cref{app:implmono}.

Despite the negative results in this section, we rarely found monotonicity violations of any kind in our experiments (see \Cref{sec:experiments}).

\section{Proportional Representation}\label{sec:proportionality}

We now turn to analyzing the proportional representativeness that is provided by our dynamic ranking rules. The following two sections capture different perspectives on representation, focusing on the representativeness of the ranking $r^t$ at any given iteration $t$ (\Cref{sec:exp_guarantee}) and on the representativeness of the set $X$ of implemented candidates (\Cref{sec:sel_guarantee}).

\subsection{Proportionality of Rankings}\label{sec:exp_guarantee}

In certain applications of dynamic ranking rules, such as the live Q\&A platforms mentioned in the introduction, it is desirable for the ranking $r^t$ to provide a representative overview of the opinions of the voters at any given iteration $t$. In this section, we prove proportionality guarantees that are satisfied by ranking $r^t$ for any fixed iteration $t$.

Measures for the proportionality of a ranking have been proposed by \citet{SLB+17a}. In particular, \textit{$\kappa$-group representation} measures, informally speaking, how far down in the ranking a group of voters needs to go in order to obtain a given amount of satisfaction. 
In order to adapt the notion of $\kappa$-group representation to the dynamic ranking setting, we need the following notation. For iteration $t$, let $X^t=\{x_1, \ldots, x_{t-1}\}$ denote the set of candidates implemented in the first $t-1$ rounds and, for a group $V\subseteq N$ of voters, let $\lambda^t(V) = |\bigcap_{i\in V} A_i \setminus X^t|$ denote the cohesiveness of~$V$ with respect to the remaining candidates $C \setminus X^t$.

\begin{definition}[Group representation]\label{def:grouprep}
Let $\gr(\alpha,\lambda)$ be a function from $((0,1] \cap \mathbb{Q}) \times \mathbb{N})$ to $\mathbb{N}$.
A dynamic ranking rule satisfies \emph{$\kappa$-group representation} if the following holds for all profiles $A$, groups of voters $V\subseteq N$, rational numbers $\alpha \in (0,1]$, and integers $\lambda, t \le |C|$: 

If $|V| \geq \alpha \cdot n$ and $\lambda^t(V) \geq \lambda$, then
$\avg_V(r^t_{\leq \gr(\alpha,\lambda)}) \geq \lambda$.
\end{definition}

In words: If a group $V$ of voters makes up an $\alpha$-fraction of the electorate and has at least $\lambda$ commonly approved candidates remaining at iteration $t$, then this group derives an average satisfaction of at least $\lambda$ from the candidates ranked in the top $\kappa(\alpha,\lambda)$ positions of ranking $r^t$.\footnote{A natural lower bound for $\kappa(\alpha,\lambda)$ is given by $\lceil \lambda / \alpha \rceil$.
Note that the $\kappa$ functions used in this section not only depend on $\alpha$ and $\lambda$, but also on the set $V$ and on the sequence $X$ of previously implemented candidates. In an attempt to simplify notation, we decided to not make this dependencies explicit in \Cref{def:grouprep}.}

Our first result in this section is for dynamic Phragmén.
We recall that $d_i$ denotes the initial debt of voter $i$ at the end of the first phase of the method, and let $d^V_\text{avg}=\frac{1}{|V|} \sum_{i \in V} d_i$ denote the average debt of voters in $V$.

\begin{theorem}\label{thm:gr_dphrag}
Dynamic Phragmén satisfies $\kappa$-group representation for 
    \[ \kappa(\alpha, \lambda) = \left\lceil \frac{2(\lambda + m + 1) + s\cdot|V|}{\alpha}\right\rceil \text, \] 
    where $m = |\bigcup_{i\in V} A_i \cap X|$ and $s = \sum_{i\in V} (d_i - d^V_\text{avg})^2$. 
\end{theorem}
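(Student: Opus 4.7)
The plan is to argue by contradiction: assume $\avg_V(r^t_{\leq \kappa}) < \lambda$ and analyze the execution of the second phase of Dynamic Phragmén to exhibit a purchase that \emph{should} have occurred but did not. Since every candidate in $C_V := \bigcap_{i\in V} A_i \setminus X^t$ is approved by every voter in $V$, one has $\avg_V(r^t_{\leq \kappa}) \geq |C_V \cap r^t_{\leq \kappa}|$; hence the assumption forces $|C_V \cap r^t_{\leq \kappa}| < \lambda$, and together with $|C_V| = \lambda^t(V) \geq \lambda$ this yields a commonly approved, unbought candidate $c^* \in C_V \setminus r^t_{\leq \kappa}$. Let $\tau^*$ be the time at which the $\kappa$-th candidate is bought in the second phase. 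Phragmén's selection rule then gives an \emph{upper} bound on the total non-negative balance of voters in $N_{c^*}$ at $\tau^*$: it is strictly less than $1$, for otherwise $c^*$ would already have been bought. Since $V \subseteq N_{c^*}$, the same bound holds for the non-negative balance restricted to $V$.

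Next, I would derive a matching \emph{lower} bound on the same quantity. Writing $V^{\mathrm{act}} := \{i \in V : d_i \leq \tau^*\}$, the non-negative balance restricted to $V$ equals
\[ B_{V^{\mathrm{act}}}(\tau^*) = |V^{\mathrm{act}}|\,\tau^* - \sum_{i \in V^{\mathrm{act}}} d_i - \mathrm{spent}_{V^{\mathrm{act}}}. \]
Two bounds feed into this: since a voter's contribution to any single purchase is at most $1$, $\mathrm{spent}_i \leq |A_i \cap r^t_{\leq \kappa}|$, so summing and using the contradiction hypothesis yields $\mathrm{spent}_{V^{\mathrm{act}}} \leq |V| \avg_V(r^t_{\leq \kappa}) < \lambda|V|$; and since during the first phase each implemented candidate distributes a total of $1$ credit among its supporters, only candidates in $X^t$ approved by some voter in $V$ add debt to $V$, giving $\sum_{i \in V} d_i \leq m$.

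The delicate step is controlling $|V^{\mathrm{act}}|$ and, implicitly, the spread of debts inside $V$: voters with $d_i > \tau^*$ are inactive and contribute nothing. A Chebyshev-type inequality applied to the debts in $V$ bounds $|V \setminus V^{\mathrm{act}}|$ by $s/(\tau^* - d^V_\text{avg})^2$ once $\tau^* > d^V_\text{avg}$. Combined with the crude lower bound $\tau^* \geq \kappa/n \geq \alpha \kappa/|V|$ coming from the global balance constraint ($\kappa$ credits spent cannot exceed $n\tau^*$ credits earned), this translates the shortfall $|V| - |V^{\mathrm{act}}|$ into an additive correction proportional to $s|V|/\alpha$ in the required size of $\kappa$, which is precisely the variance term appearing in the statement.

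Finally, substituting $\tau^* \geq \alpha\kappa/|V|$ along with the bounds on $\mathrm{spent}_{V^{\mathrm{act}}}$ and $\sum_{i \in V^{\mathrm{act}}} d_i$ into the expression for $B_{V^{\mathrm{act}}}(\tau^*)$, the choice $\kappa \geq (2(\lambda + m + 1) + s|V|)/\alpha$ forces $B_{V^{\mathrm{act}}}(\tau^*) \geq 1$, contradicting the upper bound from Phragmén's selection rule. The factor~$2$ is the slack needed to absorb simultaneously the $\lambda$-, $m$-, and $s$-contributions together with the rounding incurred when passing from $|V^{\mathrm{act}}|$ back to $|V|$. I expect the main technical obstacle to be the variance correction: a naive application of the weak bound $\tau^* \geq \kappa/n$ loses a factor of $n$ and, more importantly, a careless treatment of the inactive voters in $V$ either throws the entire $s|V|$ term into the constant or yields a bound that depends on $\max_i d_i$ rather than the more benign $s$.
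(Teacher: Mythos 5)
Your approach is genuinely different from the paper's: the paper first proves a bound on the \emph{proportionality degree} of dynamic Phragmén via a potential function $\phi(\theta)=\sum_{i\in V}(p_i(\theta)-p_{\mathrm{avg}}(\theta))^2$ tracking the variance of credits within $V$ over time (the initial debts enter as the starting value $\phi(0)\propto s$), and then translates that into $\kappa$-group representation via a separate lemma. You instead attempt a direct balance count at the single time $\tau^*$ when the $\kappa$-th candidate is bought. Unfortunately, the direct count does not close, and the gap is structural rather than a matter of constants. Your lower bound on the group's earnings is $|V^{\mathrm{act}}|\tau^*\le |V|\tau^*$, and the only available lower bound on $\tau^*$ is $\tau^*\ge \kappa/n$, so the earnings are bounded below by roughly $\alpha\kappa$. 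But your upper bound on the group's spending is $\mathrm{spent}_{V^{\mathrm{act}}}<\lambda|V|$, which scales with $|V|$. The contradiction $B_{V^{\mathrm{act}}}(\tau^*)\ge 1$ therefore requires $\alpha\kappa \gtrsim \lambda|V|+m+1$, i.e.\ $\kappa\gtrsim (\lambda|V|+m+1)/\alpha \approx \lambda n$, which is weaker than the claimed $\lceil(2(\lambda+m+1)+s|V|)/\alpha\rceil$ by a factor of order $|V|$. With $\kappa$ as in the theorem statement, $\alpha\kappa - m - \lambda|V| = 2\lambda+m+2+s|V|-\lambda|V|$ is negative for any moderately large group, so no contradiction is reached. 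Moreover, $\tau^*\ge\kappa/n$ is tight (consider a unanimous profile), so this cannot be repaired by sharpening the other estimates; one needs an ingredient that forces voters in $V$ to pay \emph{often}, which is exactly what the potential-function argument supplies (each payment by a voter whose balance exceeds roughly $n/|V|$ strictly decreases $\phi$, and $\phi$ can only decrease by $\phi(0)$ in total).

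A secondary issue is the variance term. In the paper, $s$ enters cleanly as the initial value of the potential, yielding the additive correction $s|V|/4$ in the proportionality degree. Your Chebyshev bound gives $|V\setminus V^{\mathrm{act}}|\le s/(\tau^*-d^V_{\mathrm{avg}})^2$, which contributes a term $s\tau^*/(\tau^*-d^V_{\mathrm{avg}})^2$ to the deficit; this degenerates when $\tau^*$ is close to $d^V_{\mathrm{avg}}$ and does not obviously reduce to an additive $s|V|$ inside $\alpha\kappa$. Your bounds $\sum_{i\in V}d_i\le m$ and the identification of the unbought common candidate $c^*$ are both correct and also appear (in rescaled form) in the paper's argument, but the core of the proof needs to be rebuilt around a mechanism---such as the potential function---that converts the variance of balances into a lower bound on the number of purchases made by $V$.
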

Observe that this function is increasing both in the number $m$ of already implemented candidates that are approved by some voter in $V$ and in the variance $s$ of debts of voters in~$V$. For the special case $X=()$, \Cref{thm:gr_dphrag} implies a group representation of $\left\lceil \frac{2\lambda + 2}{\alpha}\right\rceil$ for (non-dynamic) sequential Phragmén. For $\lambda \ge 2$, this is an improvement over the $\kappa$-group representation bound of $\left\lceil \frac{5\lambda}{\alpha^2} + \frac{1}{\alpha}\right\rceil$ proved by \citet{SLB+17a}.

The proof of \Cref{thm:gr_dphrag} employs the notion of  \emph{proportionality degree} \citep{Skow18a}. In particular, we first prove a bound on the proportionality degree of dynamic Phragmén, using a potential function approach that is similar to the one used by \citet{Skow18a} for the non-dynamic setting. Then, we establish a relationship between the proportionality degree and group representation, and use it to translate the bound on the former into a bound on the latter. 
For details, see \Cref{app:exposure}.

For dynamic seqPAV we prove the following generalisation of Theorem 3 by \citet{SLB+17a}; the latter theorem corresponds to the special case $X = ()$.

\begin{theorem}\label{thm:gr_dseqpav}
Dynamic seqPAV satisfies $\kappa$-group representation for 
\[ \kappa(\alpha, \lambda) = \left\lceil \frac{2(\lambda + 1 + \avg_V(X))^2}{\alpha^2} \right\rceil. \]
\end{theorem}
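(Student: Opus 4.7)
The plan is to adapt the proof of Theorem~3 from \citet{SLB+17a}, carefully tracking the extra satisfaction voters in $V$ derive from the implemented candidates in $X$. Write $\mu = \avg_V(X)$ and $k = \lceil 2(\lambda+1+\mu)^2/\alpha^2 \rceil$, and suppose for contradiction that $\avg_V(r^t_{\leq k}) < \lambda$.

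First I would observe that $|B| \geq \lambda$, where $B = (\bigcap_{i \in V} A_i) \setminus X$, by the cohesiveness assumption $\lambda^t(V) \geq \lambda$. Since every voter in $V$ approves every candidate in $B$, the strict inequality $\avg_V(r^t_{\leq k}) < \lambda$ forces the existence of a witness $c^\star \in B \setminus r^t_{\leq k}$ (otherwise each voter in $V$ would be satisfied by all of $B$). Letting $q_i^{j-1} = |A_i \cap (X \cup r^t_{\leq j-1})|$ and $T_{j-1} = \sum_{i \in V} q_i^{j-1}$, I note that $T_0 = |V|\mu$. At every step $j \in \{1, \ldots, k\}$, because dynamic seqPAV greedily maximises $\mc_\text{dyn}$ and $c^\star$ is available throughout,
\[
\mc_\text{dyn}^j(r_j) \;\geq\; \mc_\text{dyn}^j(c^\star) \;\geq\; \sum_{i \in V}\frac{1}{q_i^{j-1}+1} \;\geq\; \frac{|V|^2}{T_{j-1}+|V|},
\]
where the last step is Cauchy--Schwarz. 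Under the contradiction hypothesis $T_{j-1} < |V|(\mu+\lambda)$, so $\mc_\text{dyn}^j(r_j) \geq |V|/(\mu+\lambda+1)$ uniformly over $j$.

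Summing this lower bound gives $\sum_{j=1}^k \mc_\text{dyn}^j(r_j) \geq k|V|/(\mu+\lambda+1)$, which must then be reconciled with a matching upper bound on the same sum. Expanding $\sum_j \mc_\text{dyn}^j(r_j) = \tsc(X \cup r^t_{\leq k}) - \tsc(X)$ voter by voter and using the concavity of $H_m = \sum_{\ell=1}^m 1/\ell$ yields a bound of the form $|V| \cdot H_\lambda + (n-|V|) \cdot H_k$: contributions from $V$-voters are controlled because $\sum_{i \in V}|A_i \cap r^t_{\leq k}| < \lambda|V|$ under the contradiction hypothesis, while contributions from voters outside $V$ are bounded crudely. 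Combining the two bounds with $|V| \geq \alpha n$ produces an inequality incompatible with $k \geq 2(\lambda+1+\mu)^2/\alpha^2$. The main technical obstacle is this last arithmetic step: the shift from $T_0 = 0$ in the non-dynamic setting to $T_0 = |V|\mu$ here replaces every factor $(\lambda+1)$ in the original proof of \citet{SLB+17a} with $(\lambda+1+\mu)$, and confirming that this substitution respects their concavity and potential-function accounting is what closes the argument.
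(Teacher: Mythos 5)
Your argument is correct, and its first half is exactly the paper's: writing $\mu=\avg_V(X)$, you use the same witness $c^\star\in\bigcap_{i\in V}A_i\setminus(X\cup r^t_{\leq\kappa})$ guaranteed by $\lambda^t(V)\ge\lambda$ together with the contradiction hypothesis, the same AM--HM/Cauchy--Schwarz bound $\mc_\text{dyn}^j(r_j)\ge\mc_\text{dyn}^j(c^\star)\ge|V|^2/(T_{j-1}+|V|)>|V|/(\lambda+\mu+1)$, and the same observation that the only effect of the dynamic setting is the shift from $\lambda+1$ to $\lambda+1+\mu$ in the denominator. You diverge in the second half. The paper does not compare the total score increase against harmonic numbers; following \citet{SLB+17a}, it tracks the potential $T(k)=\sum_{i\in N}\frac{1}{a_i(k)+1}$ with $a_i(k)=|A_i\cap(X\cup r_{\le k})|$, which lies in $[0,n]$, and shows via Cauchy--Schwarz that each ranking step decreases it by at least $\frac{1}{2n}$ times the \emph{square} of the winning marginal contribution, hence by more than $\frac{\alpha^2 n}{2(\lambda+\mu+1)^2}$; summed over $\kappa$ steps this exceeds $n$, and the squaring is precisely where the quadratic form of $\kappa$ originates. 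Your alternative---telescoping $\sum_j\mc_\text{dyn}^j(r_j)=\tsc(X\cup r^t_{\le\kappa})-\tsc(X)$ and bounding it above by $|V|H_\lambda+(n-|V|)H_\kappa$ using $H_{a+b}-H_a\le H_b$ and concavity---is also sound, and the arithmetic you flag as the remaining obstacle does close: with $|V|\ge\alpha n$ your lower bound is at least $2(\lambda+1+\mu)n/\alpha$, your upper bound is at most $n(1+\ln\kappa)$, and $2w>1+\ln(2w^2+1)$ holds for $w=(\lambda+1+\mu)/\alpha\ge 2$, i.e., for every nontrivial $\lambda\ge 1$. Your accounting is lossier per step but needs only the first power of the marginal contribution, so carried out optimally it would even yield a near-linear bound $\kappa=O\bigl(\frac{\lambda+1+\mu}{\alpha}\log\frac{\lambda+1+\mu}{\alpha}\bigr)$; the paper's potential-function route is the one that produces the stated constant natively, but both routes establish the theorem as stated.
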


AV does not perform any different in the dynamic ranking setting compared to the non-dynamic one. Thus, it satisfies the same bounds on group representation as those stated in Theorem 2 by \citet{SLB+17a}. Since myopic seqPAV and myopic Phragmén both agree with AV in the case \mbox{$X=()$}, the same bounds hold for these two rules.  

\begin{proposition}\label{thm:gr_lazy}
Myopic seqPAV and myopic Phragmén fail $\kappa$-group representation for \[\kappa(\alpha, \lambda) \leq \left\lceil \frac{\lambda \cdot \alpha}{2\alpha -1} \right\rceil -1\] and for all functions $\kappa(\alpha, \lambda)$ if $\alpha \leq \frac{m+1}{m+2}$, where $m = |\bigcup_{i\in V} A_i \cap X|$.
\end{proposition}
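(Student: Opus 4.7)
My plan is to establish the two claims via two separate adversarial constructions. Since both myopic rules coincide with AV whenever $X = ()$, the first bound can be addressed by a tightness example in the static (non-dynamic) setting, whereas the second bound genuinely requires a non-trivial prefix $X$ that saturates the myopic ``weight'' of the $V$-voters.

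For the first bound (which is vacuous for $\alpha \leq 1/2$, so I assume $\alpha > 1/2$), I would set $\kappa := \lceil \lambda\alpha/(2\alpha-1)\rceil - 1$, take $X = ()$, and construct a profile in which $\kappa$ ``decoy'' candidates tie exactly with the $\lambda$ $V$-approved candidates in approval score, so that adversarial tiebreaking floats the decoys to the top. Concretely, partition $V$ into $V_1$ of size $(2\alpha-1)n$ and $V_2$ of size $(1-\alpha)n$, let $V' = N\setminus V$ have size $(1-\alpha)n$, and choose approval sets $A_i = D\cup G$ for $i \in V_1$, $A_i = D$ for $i \in V_2$, and $A_i = G$ for $i \in V'$, where $|D| = \lambda$ and $|G| = \kappa$. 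Then $\bigcap_{i\in V}A_i = D$, every $d \in D$ and every $g \in G$ has approval score $\alpha n$, and the top $\kappa$ slots of the AV ranking are exactly $G$. A short calculation gives $\avg_V(r^1_{\leq \kappa}) = (2\alpha-1)\kappa/\alpha$, which is strictly below $\lambda$ by the choice of $\kappa$. (Divisibility issues are handled by scaling $n$ so that all indicated group sizes are integers.)

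For the second bound, I would use an instance in which every voter in $V$ (of size $\alpha n$) approves $\{c_1,\ldots,c_m\}\cup D$ with $|D|=\lambda$, while every voter in $V' = N\setminus V$ (of size $(1-\alpha)n$) approves a large set $E$ of $K \geq \kappa$ fresh distractor candidates disjoint from both $\{c_1,\ldots,c_m\}$ and $D$. I let the DM implement $c_1,\ldots,c_m$ in the first $m$ iterations (which is feasible since these candidates are tied at the top of the initial ranking), and then compare scores. For myopic seqPAV, a direct computation yields $\mc_\text{myopic}(d_j) = \alpha n/(m+1)$ and $\mc_\text{myopic}(e_i) = (1-\alpha)n$. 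For myopic Phragmén, the first phase distributes the cost of each $c_k$ uniformly across $V$, leaving every voter in $V$ with initial debt $m/(\alpha n)$ and every voter in $V'$ with debt $0$; the max debts induced by $d_j$ and $e_i$ are then $(m+1)/(\alpha n)$ and $1/((1-\alpha)n)$, respectively. In both cases the inequality ``every distractor is ranked weakly above every $d_j$'' rearranges to precisely $\alpha \leq (m+1)/(m+2)$. With adversarial tiebreaking and $K \geq \kappa$, the top $\kappa$ positions of $r^{m+1}$ then contain no $V$-approved candidate, so $\avg_V(r^{m+1}_{\leq \kappa}) = 0 < \lambda$.

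The main technical point I expect to verify is that a single construction breaks both myopic rules in the second part. This reduces to the (pleasantly symmetric) observation that the seqPAV inequality $(1-\alpha) \geq \alpha/(m+1)$ and the Phragmén inequality $1/((1-\alpha)n) \leq (m+1)/(\alpha n)$ are both equivalent to $\alpha \leq (m+1)/(m+2)$; once this is in place, the remaining details (divisibility of group sizes and realizability of the chosen implementation order by the DM) are routine.
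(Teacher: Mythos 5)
Your proposal is correct and follows essentially the same route as the paper: the first bound is inherited from the known AV counterexample (since both myopic rules coincide with AV when $X=()$, a construction you reconstruct explicitly rather than cite), and the second bound uses exactly the paper's device of letting $X$ consist of $m$ candidates approved only by $V$ so that the myopic score of $V$'s remaining candidates is deflated by a factor $m+1$, with the comparison against the distractors' scores yielding the threshold $\alpha\leq\frac{m+1}{m+2}$ for both rules. Your added realizability argument for the DM's choices is unnecessary (the proposition places no assumptions such as (A1) on the selection behavior, and $X$ is simply part of the input), but it is harmless.
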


\subsection{Proportionality of Implemented Candidates}
\label{sec:sel_guarantee}

In this section, we study worst-case bounds on the representativeness of the set $X$ of implemented candidates. 
Clearly, no non-trivial bounds are obtainable without restricting the selection behavior of an adversarial DM. Therefore, we will make the following two assumptions throughout this section:
\begin{enumerate}
    \item[(A1)] The DM is depth-restricted and always implements a candidate from the top $h$ positions of the ranking. 
    \item[(A2)] Every candidate $c \in C$ has sufficiently\footnote{Given an upper bound $T$ on the number of iterations, $h+T-1$ clones suffice (as at least $h$
    clones will always remain in the ranking).} many ``clones,'' i.e., candidates $c'$ with identical supporter set $N_{c'} = N_c$. 
\end{enumerate}
Assumptions (A1) and (A2) together ensure that the DM can be forced to implement a candidate approved by a voter, by populating the top $h$ positions exclusively with such candidates. 
Arguably the most natural way to ensure (A2) is to assume that we are in the \textit{party-approval} setting \citep{BGP+19a}, where candidates are interpreted as parties and can be selected arbitrarily often. In the motivating example of live Q\&A platforms, party-approval preferences could result from assigning attributes to questions and eliciting participants' approval preferences over attributes.

Recall that $X^{t+1}$ denotes the set containing the implemented candidates from the first $t$ rounds. 
The following property is a natural adaption of the well-studied proportionality axiom \textit{proportional justified representation (PJR)} \citep{SFF+17a}.

\begin{definition}\label{def:PJS}
A dynamic ranking rule satisfies \textit{proportional justified selection (PJS)} if the following holds for all $t,\ell \in \mathbb{N}$ and for all groups $V \subseteq N$ of voters:
If $|V| \geq \frac{\ell}{t} \cdot |N|$ and $|\bigcap_{i \in V} A_i|\ge \ell$, then 
$|X^{t+1} \cap \bigcup_{i\in V} A_i | \geq \ell$.
\end{definition}

A weaker version of this axiom is obtained by fixing $\ell=1$; in analogy to a well-known notion due to \citet{ABC+16a}, we refer to the resulting property as \emph{justified selection (JS)}.

We prove the following theorem by interpreting the set $X^{t+1}$ of implemented alternatives as a committee. 

\begin{theorem}
Under assumptions (A1) and (A2), myopic Phragmén satisfies PJS.
\end{theorem}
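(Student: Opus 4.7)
The plan is to reduce PJS to the known fact that (non-dynamic) sequential Phragmén satisfies proportional justified representation (PJR) \citep{BFJL16a}. Concretely, I aim to show that, under assumptions (A1) and (A2), the set $X^{t+1}$ produced after $t$ iterations of myopic Phragmén coincides (up to tie-breaking) with the committee $W$ of size $t$ that sequential Phragmén outputs on the profile $A$. Once this identification is established, PJR for $W$ directly yields PJS: the hypotheses $|V| \geq (\ell/t)\cdot n$ and $|\bigcap_{i\in V} A_i| \geq \ell$ imply $|W \cap \bigcup_{i\in V} A_i| \geq \ell$ by PJR, and the latter is exactly the PJS inequality at iteration $t$.

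The reduction proceeds by induction on $t$. The key identity is that the debts $\{d_i\}_{i\in N}$ produced by the first phase of myopic Phragmén on sequence $X^t$ agree with the Phragmén loads after selecting $X^t$, since both are generated by the same load-balancing procedure applied iteratively to the candidates in $X^t$. Consequently, myopic Phragmén's ranking criterion---minimise the debt vector $(d^c_i)_{i \in N_c}$, sorted in decreasing order, in the lexicographic order obtained by adding $c$---coincides with the greedy step of sequential Phragmén, so the top-ranked candidate $c^*$ in $r^t$ is always a valid choice for sequential Phragmén at step $t$.

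It remains to verify that the depth restriction (A1) does not permit the DM to deviate from a sequential-Phragmén choice, and this is exactly where (A2) enters: if $c^*$ is lex-minimum, then every clone of $c^*$ shares the supporter set $N_{c^*}$ and therefore induces exactly the same debt vector, so every clone is also lex-minimum. By the clone budget specified in the footnote to (A2), at least $h$ clones of $c^*$ remain in $C \setminus X^t$ at every iteration $t$, and these fill (together with other tied non-clones, if any) the top $h$ positions of $r^t$. Hence whichever candidate the DM picks from the top $h$ is a valid sequential-Phragmén selection at step $t$, and the induction propagates to iteration $t+1$.

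The main obstacle to guard against is a subtle feature of myopic Phragmén's tie-breaking rule: it compares entire sorted debt vectors rather than only maximum debts, so I have to rule out the possibility that a candidate with a strictly lex-larger vector creeps into the top $h$ positions. This is resolved by the observation that clones induce \emph{identical} debt vectors, not merely identical maxima, so the set of lex-minimum candidates is closed under cloning. With this in place, the reduction is complete, and PJR of sequential Phragmén finishes the proof.
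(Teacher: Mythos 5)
Your proposal is correct and follows essentially the same route as the paper's proof: identify the debts of myopic Phragmén with the loads of sequential Phragmén, use (A2) to argue that the top $h$ positions are occupied entirely by (lex-minimum) clones so that any depth-restricted choice by the DM is a valid sequential-Phragmén selection, and then invoke PJR of sequential Phragmén \citep{BFJL16a}. The only difference is that you spell out the induction and the tie-breaking subtlety explicitly, which the paper compresses into a single observation.
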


\begin{proof} 
First observe that myopic Phragmén always ranks clones consecutively. Due to (A2), there are always at least $h$ clones of each candidate, so that the first $h$ position of each ranking $r^{t'}$ (where $t'\le t$) will be occupied by a set of candidates that are all clones of each other. Due to (A1), the DM selects a candidate from this top-ranked clone set in each iteration. 
Now consider the set $X^{t+1}=\{x_1, \ldots, x_t\}$ of implemented candidates. Since the assignment of debts under myopic Phragmén mimics the distribution of loads under sequential Phragmén, this set consists precisely of the first~$t$ candidates that sequential Phragmén selects on the same instance. Since sequential Phragmén satisfies PJR \citep{BFJL16a}, it follows that myopic Phragmén satisfies PJS.
\end{proof}

Analogously, we can translate a representation guarantee for sequential PAV \citep{SFF+17a} into a guarantee for myopic seqPAV.

\begin{proposition}
Under assumptions (A1) and (A2), myopic seqPAV satisfies JS for $t\leq 5$. 

\end{proposition}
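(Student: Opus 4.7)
The plan is to reduce the statement to the known representation guarantee for sequential PAV on the underlying (non-dynamic) committee selection instance. The bridge is exactly the clone-forcing argument used in the proof for myopic Phragmén just above: assumptions (A1) and (A2) together force the DM, in each iteration, to implement a clone of the candidate currently maximising $\mc_\text{myopic}$; consequently, the set $X^{t+1}$ of implemented alternatives can be identified with the committee of size $t$ that sequential PAV returns on the same profile.

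First, I would observe that myopic seqPAV ranks clones consecutively. Two candidates $c,c'$ with $N_c=N_{c'}$ contribute identically to every marginal-contribution expression, so $\mc_\text{myopic}(c)=\mc_\text{myopic}(c')$ whenever neither has been implemented yet. Since the myopic rule simply orders $C\setminus X$ by $\mc_\text{myopic}$-value (with a fixed tie-breaker), all clones of any fixed underlying candidate appear as a contiguous block in every ranking $r^{t'}$. Assumption (A2) then guarantees that there are always at least $h$ clones of the $\mc_\text{myopic}$-maximiser still available, so those clones fill the top $h$ positions of $r^{t'}$. By (A1) the DM must pick from those positions, so $x_{t'}$ is a clone of the $\mc_\text{myopic}$-maximiser at that iteration. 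Unwinding this over $t'=1,\dots,t$, the implemented sequence $x_1,\dots,x_t$ coincides, up to choosing representatives within clone classes, with the sequence produced by $t$ rounds of sequential PAV on the same profile.

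Next, I would invoke the result of \citet{SFF+17a} that sequential PAV (equivalently, reweighted approval voting) satisfies justified representation whenever the committee size is at most $5$. Applied to the committee $X^{t+1}$ with $t\le 5$, this yields: whenever $|V|\ge |N|/t$ and the voters in $V$ share at least one commonly approved candidate, $X^{t+1}$ contains at least one candidate approved by some member of $V$. This is precisely the condition of \Cref{def:PJS} specialised to $\ell=1$, i.e., JS.

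The main conceptual point, and the reason the bound cannot be pushed beyond $t=5$ by this route, is the tightness of the external JR guarantee for sequential PAV, which is known to fail for committee sizes $\ge 6$; any strengthening would have to either replace that citation or break the clone-forcing reduction. All other pieces of the argument are essentially bookkeeping, so I expect no real technical obstacle beyond carefully phrasing the identification between clone-cloaked dynamic selection and the static sequential PAV instance.
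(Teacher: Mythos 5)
Your proposal is correct and matches the paper's intended argument: the paper proves this proposition exactly by the clone-forcing reduction used for myopic Phragmén (identifying $X^{t+1}$ with the committee that sequential PAV would select) and then invoking the justified-representation guarantee of sequential PAV for committee sizes at most $5$ from \citet{SFF+17a}. No gaps.
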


\newcommand{\dotcup}{\mathbin{\dot{\cup}}}

\noindent
Similar positive results are not possible for the other rules. 
To see this, consider the following example, which is consistent with assumptions (A1) and (A2). The rankings discussed in this example are depicted in \Cref{fig:dyn_fail_js}.
\begin{example}\label{ex:dyn_fail_js}
Let $N=V \cup G$  be the electorate consisting of two disjoint groups of voters of equal size, i.e., $V \cap G = \emptyset$ and $|V| = |G|$. Now assume that each voter in $V$ approves of all candidates in $\Set{ a_1, a_2, \ldots}$ and each voter in $G$ approves of all candidates in $\Set{b_1, b_2, \ldots}$. If we assume alphabetic tie-breaking between the parties then both dynamic rules will in the first iteration output the ranking $r^1 = (a_1, b_1, a_2, b_2, \ldots)$. If we set $h = 4$ then it is possible for an adversarial DM to implement 3 candidates supported by $G$ before in the fourth iteration we have $r^4_{\leq 4} = \Set{a_1,a_2,a_3,a_4}$. In particular we have $X^3 = (b_1, b_2)$ and $X^4 = (b_1, b_2, b_3)$.
\end{example}
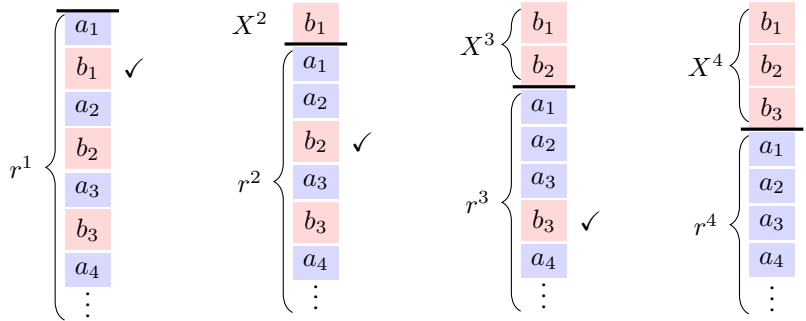
\begin{figure}
    \centering
    \begin{tikzpicture}
        [blue_cand/.style={rectangle, fill=blue!15, minimum width=1.7em}, 
        red_cand/.style={rectangle, fill=red!15, minimum width=1.7em}, 
        green_cand/.style={rectangle, fill=green!15, minimum size=1.5em},
        chosen/.style={}, cand/.style={minimum width=1.7em}, refr/.style={}, rank/.style={}]

        \node[rank] (rank) {$\phantom{r^1 =}$};
        \node[red_cand] (cand4) [right=0.5em of rank] {$b_2$};
        \node[blue_cand] (cand1) [above=3.15em of cand4] {$a_1$};
        \node[red_cand] (cand2) [above=1.5em of cand4] {$b_1$};
        \node[blue_cand] (cand3) [above=0.1em of cand4] {$a_2$};
        \node[blue_cand] (cand5) [below=0.11em of cand4] {$a_3$};
        \node[red_cand] (cand6) [below=1.5em of cand4] {$b_3$};
        \node[blue_cand] (cand7) [below=3.15em of cand4] {$a_4$};
        \node[cand] (cand8) [below=3.7em of cand4] {\vdots};
        \node[chosen] (chosen) [right=0.1em of cand2] {\checkmark};
        \draw [decorate,decoration={brace,amplitude=7pt}] (cand8.225) -- (cand1.150)node [black,midway,xshift=-1.6em] {$r^1$};
        \node[refr] (ref1) [left=0.3em of cand1.north west] {};
        \node[refr] (ref2) [right=0.3em of cand1.north east] {};
        \draw[very thick] (ref1) to (ref2);
        
        \begin{scope}[xshift=3cm, yshift=-0.5em]
        \node[rank] (rank) {$\phantom{r^2 =}$};
        \node[red_cand] (cand4) [above right=-0.75em and 0.5em of rank] {$b_2$};
        \node[red_cand] (cand1) [above=2.9em of cand4] {$b_1$};
        \node[blue_cand] (cand2) [above=1.5em of cand4] {$a_1$};
        \node[blue_cand] (cand3) [above=0.1em of cand4] {$a_2$};
        \node[blue_cand] (cand5) [below=0.11em of cand4] {$a_3$};
        \node[red_cand] (cand6) [below=1.5em of cand4] {$b_3$};
        \node[blue_cand] (cand7) [below=3.15em of cand4] {$a_4$};
        \node[cand] (cand8) [below=3.7em of cand4] {\vdots};
        \node[refr] (ref1) [left=0.3em of cand1.south west] {};
        \node[refr] (ref2) [right=0.3em of cand1.south east] {};
        \draw[very thick] (ref1) to (ref2);
        \node[chosen] (chosen) [right=0.1em of cand4] {\checkmark};
        \draw [decorate,decoration={brace,amplitude=7pt}] (cand8.225) -- (cand2.150)node [black,midway,xshift=-1.6em] {$r^2$};
        \path [decorate,decoration={brace,amplitude=10pt}] (cand1.210) -- (cand1.150)node [black,midway,xshift=-1.6em] {$X^2$};
        \end{scope}
        
        \begin{scope}[xshift=6cm, yshift=-1.5em]
        \node[rank] (rank) {$\phantom{r^3 =}$};
        \node[blue_cand] (cand4) [above right=0.3em and 0.5em of rank] {$a_2$};
        \node[red_cand] (cand1) [above=3.15em of cand4] {$b_1$};
        \node[red_cand] (cand2) [above=1.5em of cand4] {$b_2$};
        \node[blue_cand] (cand3) [above=0.1em of cand4] {$a_1$};
        \node[blue_cand] (cand5) [below=0.11em of cand4] {$a_3$};
        \node[red_cand] (cand6) [below=1.5em of cand4] {$b_3$};
        \node[blue_cand] (cand7) [below=3.15em of cand4] {$a_4$};
        \node[cand] (cand8) [below=3.7em of cand4] {\vdots};
        \node[refr] (ref1) [left=0.3em of cand2.south west] {};
        \node[refr] (ref2) [right=0.3em of cand2.south east] {};
        \draw[very thick] (ref1) to (ref2);
        \node[chosen] (chosen) [right=0.1em of cand6] {\checkmark};
        \draw [decorate,decoration={brace,amplitude=7pt}] (cand8.225) -- (cand3.150)node [black,midway,xshift=-1.6em] {$r^3$};
        \draw [decorate,decoration={brace,amplitude=7pt}] (cand2.210) -- (cand1.150)node [black,midway,xshift=-1.6em] {$X^3$};
        \end{scope}
        
        \begin{scope}[xshift=9cm, yshift=-2.3em]
        \node[rank] (rank) {$\phantom{r^4 =}$};
        \node[blue_cand] (cand4) [above right=0.9em and 0.5em of rank] {$a_1$};
        \node[red_cand] (cand1) [above=3.35em of cand4] {$b_1$};
        \node[red_cand] (cand2) [above=1.72em of cand4] {$b_2$};
        \node[red_cand] (cand3) [above=0.1em of cand4] {$b_3$};
        \node[blue_cand] (cand5) [below=0.11em of cand4] {$a_2$};
        \node[blue_cand] (cand6) [below=1.5em of cand4] {$a_3$};
        \node[blue_cand] (cand7) [below=2.9em of cand4] {$a_4$};
        \node[cand] (cand8) [below=3.6em of cand4] {\vdots};
        \node[refr] (ref1) [left=0.3em of cand3.south west] {};
        \node[refr] (ref2) [right=0.3em of cand3.south east] {};
        \draw[very thick] (ref1) to (ref2);
        \draw [decorate,decoration={brace,amplitude=7pt}] (cand8.225) -- (cand4.150)node [black,midway,xshift=-1.6em] {$r^4$};
        \draw [decorate,decoration={brace,amplitude=7pt}] (cand3.210) -- (cand1.150)node [black,midway,xshift=-1.6em] {$X^4$};
        \end{scope}
        \end{tikzpicture}
    \caption{Rankings discussed in \Cref{ex:dyn_fail_js}. 
    }
    \label{fig:dyn_fail_js}
\end{figure}

\begin{proposition}\label{thm:dyn_fail_JS}
Dynamic seqPAV and dynamic Phragmén fail to satisfy JS, even under assumptions (A1) and (A2) and for $t=2$.
\end{proposition}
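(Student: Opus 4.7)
The plan is to reuse the two-party instance from \Cref{ex:dyn_fail_js}, adapted to $t=2$ instead of $t=3$. Take the electorate $N=V\dotcup G$ with $|V|=|G|=n/2$, where each voter in $V$ approves exactly the parties $\{a_1,a_2,\ldots\}$ and each voter in $G$ approves exactly $\{b_1,b_2,\ldots\}$; assumption (A2) supplies enough clones of each party. With $h=4$ and alphabetic tie-breaking, I will exhibit a DM strategy that implements $x_1=b_1$ and then $x_2=b_2$, so that $X^3=\{b_1,b_2\}$. Since $|V|\ge n/2$ and $|\bigcap_{i\in V}A_i|\ge 1$, JS with $\ell=1,t=2$ requires $X^3\cap\bigcup_{i\in V}A_i\ne\emptyset$; but $X^3$ consists entirely of $G$-approved candidates, violating JS for both rules simultaneously.

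The main work is to verify that $b_1\in r^1_{\le h}$ and $b_2\in r^2_{\le h}$ under both rules. For the first iteration, the computation already asserted in \Cref{ex:dyn_fail_js} shows that both rules output the alternating ranking $r^1=(a_1,b_1,a_2,b_2,\ldots)$, so the DM may legitimately implement $b_1$. For the second iteration I would show that $r^2$ begins $(a_1,a_2,b_2,\ldots)$. For dynamic seqPAV this is a direct marginal-contribution check: at position~$1$, $a_1$ contributes $|V|$ while $b_2$ only contributes $|G|/2$; at position~$2$, $a_2$ and $b_2$ tie at $|V|/2=|G|/2$ with alphabetic priority favouring $a_2$; at position~$3$, $b_2$'s contribution $|G|/2$ finally beats $a_3$'s contribution $|V|/3$. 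For dynamic Phragmén I would track balances in the second phase: every voter in $G$ starts with debt $1/|G|$, so $V$ buys $a_1$ first at time $1/|V|$, at which point $G$'s balance is exactly $0$; both groups are then perfectly synchronized, and alphabetic tie-breaking picks $a_2$, immediately followed by $b_2$ since $G$'s collective balance then reaches $1$ at the same instant.

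Once $b_2\in r^2_{\le h}$ is established, the conclusion is immediate: the adversarial DM selects $b_2$ in iteration~$2$, $X^3=\{b_1,b_2\}$ is disjoint from $\bigcup_{i\in V}A_i=\{a_1,a_2,\ldots\}$, and JS fails. The most delicate part will be the tie-breaking bookkeeping, since the equal group sizes cause repeated ties; I would need to argue that alphabetic priority really places $b_2$ at position~$3$ rather than pushing the whole $b$-sequence below the $a$-sequence. This is ensured by the fact that, immediately after $V$ buys a candidate in dynamic Phragmén, $G$'s collective balance reaches exactly $1$, forcing $b_j$ next; the analogous statement for dynamic seqPAV is that $V$'s next marginal contribution drops from $|V|/k$ to $|V|/(k{+}1)$ while $G$'s remains $|G|/2$, causing $G$'s candidate to overtake once $V$ has grabbed its second slot. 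I expect no further obstacle.
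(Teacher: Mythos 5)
Your proposal is correct and follows essentially the same route as the paper, which proves this proposition via exactly the two-party clone instance of \Cref{ex:dyn_fail_js}: the DM implements $b_1$ from position~2 of $r^1=(a_1,b_1,a_2,b_2,\ldots)$ and then $b_2$ from position~3 of $r^2=(a_1,a_2,b_2,\ldots)$, so that $X^3=\{b_1,b_2\}$ is disjoint from $\bigcup_{i\in V}A_i$ while $|V|=|N|/2\ge\frac{1}{2}|N|$. Your marginal-contribution and credit-balance calculations for the second iteration match the rankings depicted in \Cref{fig:dyn_fail_js}, so no gap remains.
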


\section{Experimental Evaluation}\label{sec:experiments}

In  order  to  better  understand  the behavior of the dynamic ranking rules considered in this paper, we conducted computational experiments using randomly generated approval profiles. Since we were mainly interested in the proportional representation of groups of voters with similar preferences, we generated profiles according to two probabilistic models that lead to polarized electorates with easily identifiable groups. 
We measured
(1) how the satisfaction of a voter group with the set of implemented candidates varies with the size of the group, and
(2) how the satisfaction of a voter group with the current ranking varies over time. 
\footnote{The code used to perform these experiments can be accessed via \url{https://git.tu-berlin.de/jonas.israel/dpr}.}

\paragraph{Setup.}
All of our profiles consist of 60 voters and 20 candidates, and the approval sets are generated according to two different models. 
In the \emph{blurred parties model}, we assign each voter and each candidate to one of \textit{two} parties. The size of the voter group $V$ associated with the first party varies over the experiments; the candidates are always divided equally. 
Each voter approves a candidate from their own party with 95\% probability and a candidate of the other party with 5\% probability. 
The \emph{spatial model} is an adaption of the 4-Gaussian model used by \citet{EFL+17a} for linear preferences. Voters and candidates correspond to points in the Euclidean plane and voters approve nearby candidates. There are \textit{three} parties with equidistant center locations, and candidates as well as voters get sampled as points around the party centers according to a normal distribution.  
We let the size of the voter group $V$ associated with the first party grow, and divide the remaining voters equally among the two remaining parties. There are 7 candidates associated with the first party. 

The selection behavior of the DM is modeled via Google click-through rates. In particular, the probability of selection decreases when going down the ranking. 
In \Cref{fig:experiments} we plot the averages of 100 generated elections. More details about the setup can be found in \Cref{app:experiments}. 

\begin{figure}
    \centering
    \scalebox{1.25}{ \hspace{-1.25em}
    \begin{subfigure}[b]{0.4\textwidth}   
        \centering 
        \includegraphics[width=\textwidth]{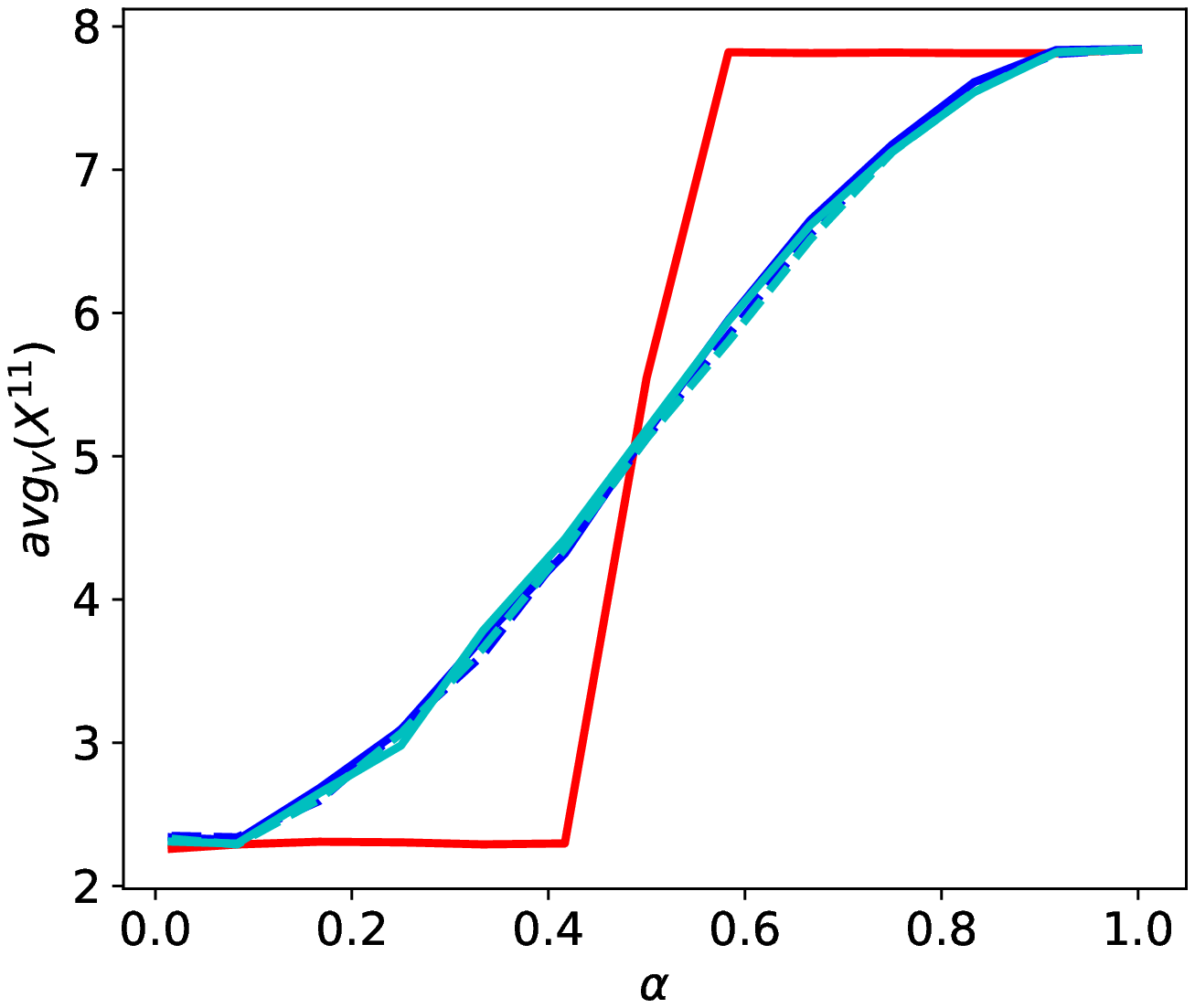}
    \end{subfigure}
    \begin{subfigure}[b]{0.4\textwidth}   
        \centering 
        \includegraphics[width=\textwidth]{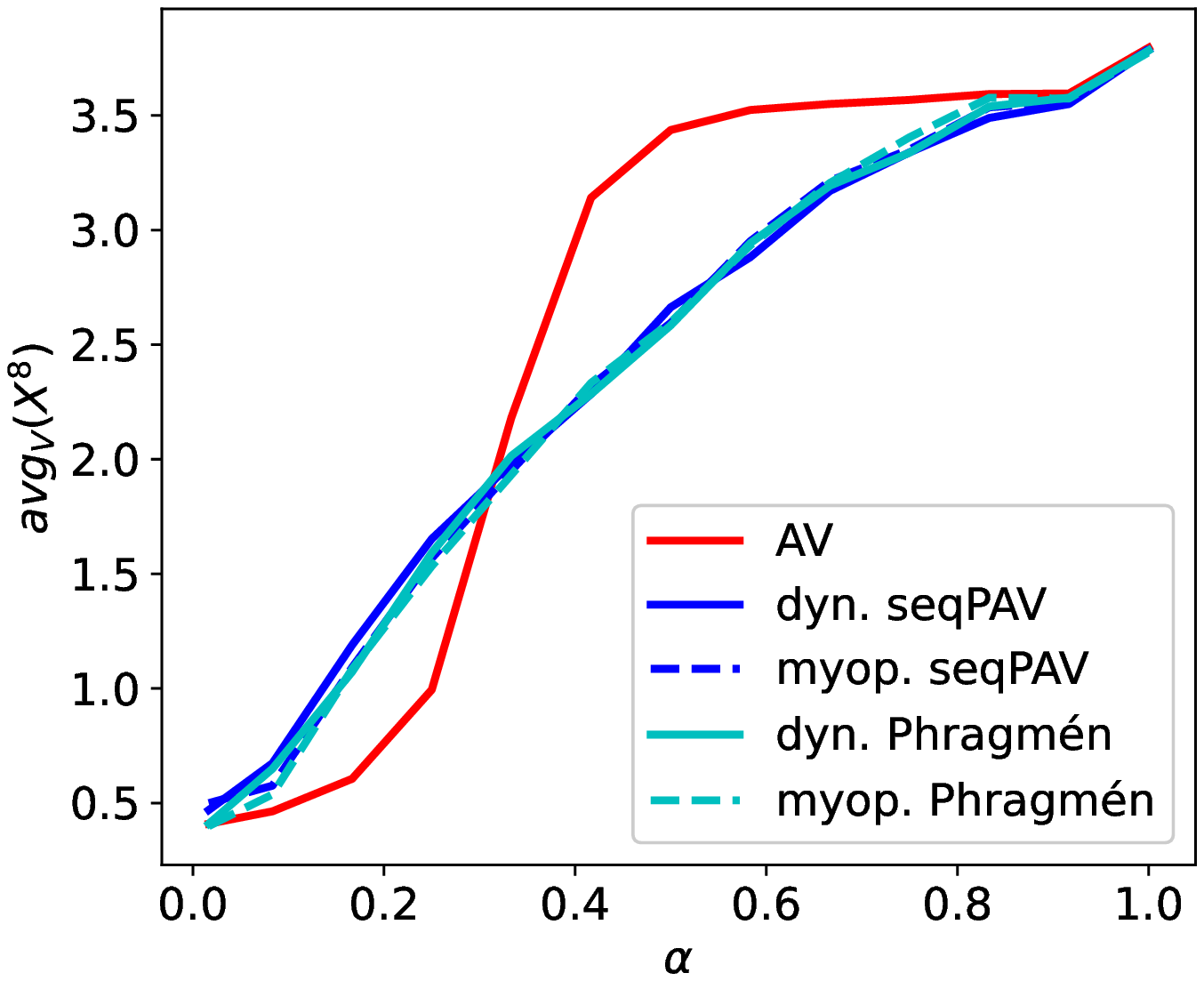}
    \end{subfigure}
    }
    \scalebox{1.3}{ \hspace{-1.25em}
    \begin{subfigure}[b]{0.4\textwidth}
        \centering
        \includegraphics[width=\textwidth]{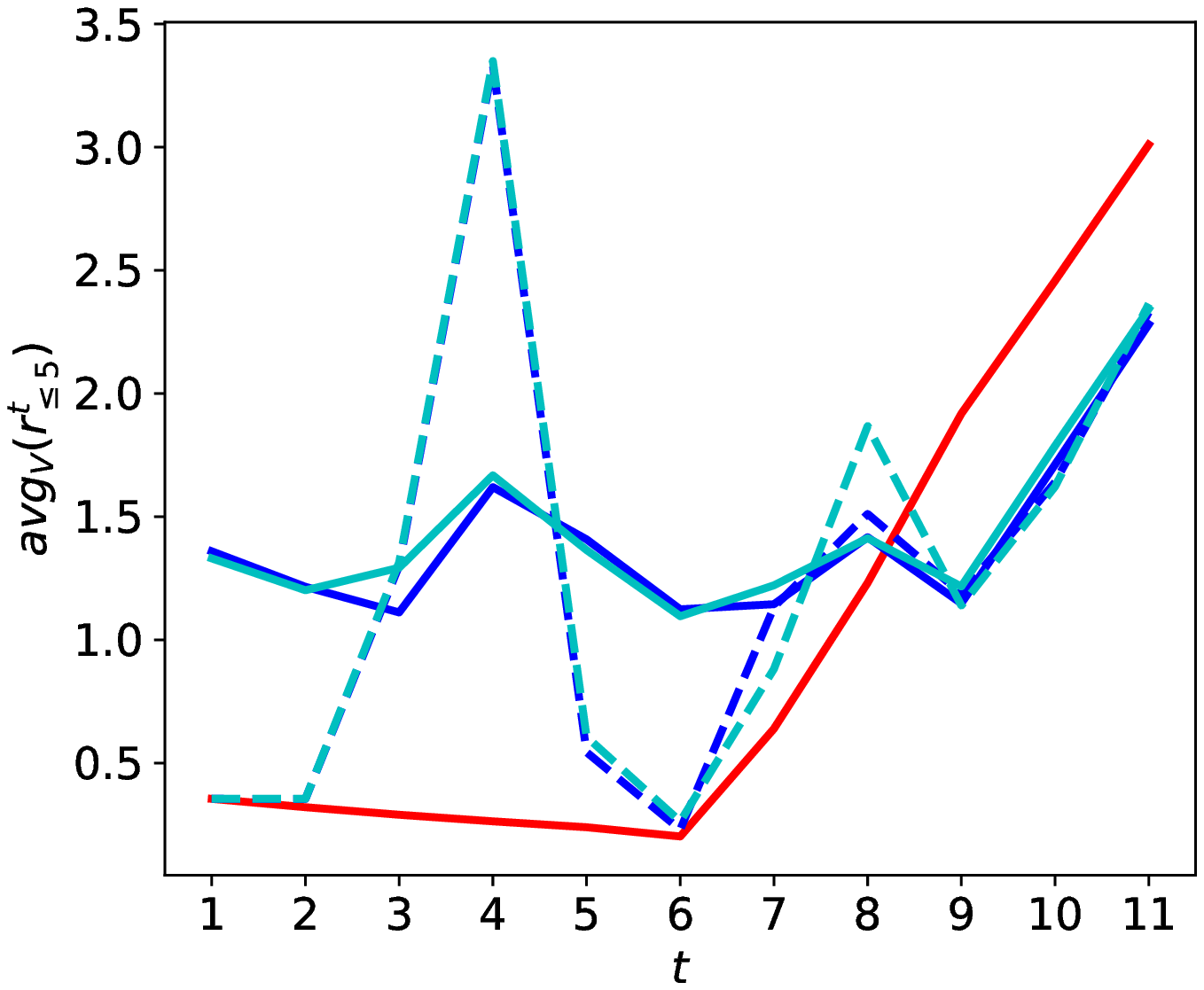}
    \end{subfigure}
    \begin{subfigure}[b]{0.4\textwidth}  
        \centering 
        \includegraphics[width=\textwidth]{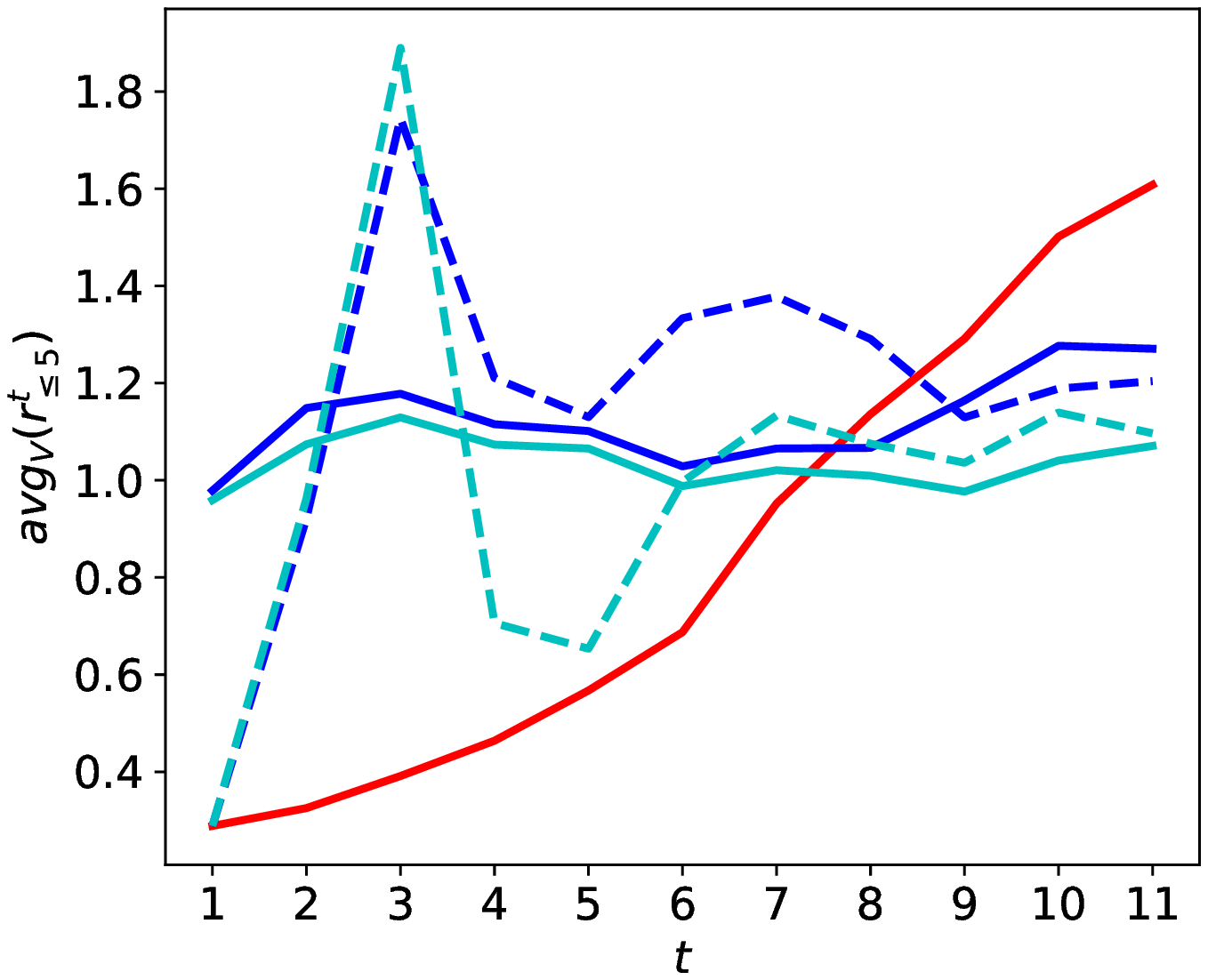}
    \end{subfigure}
    }
    \caption{Experimental results for the blurred parties model (left) and the spatial model (right). 
    The graphs in the first row show the average satisfaction of $V$ with the first $k$ implemented candidates, for relative group size $\alpha \in [0,1]$.
    The graphs in the second row show the average satisfaction of $V$ with $r^t_{\le 5}$, for $1 \le t \le 11$.} 
    \label{fig:experiments}
\end{figure}

\paragraph{Satisfaction with implemented candidates.}
We measure the average satisfaction of voter group $V\subseteq N$ with $X^{k+1}$, where~$k$ is the number of candidates associated with that group (i.e., $k=10$ for the blurred parties model and $k=7$ for the spatial model). We plot this value against the relative size $\alpha = |V|/|N|$ of the group $V$. The graphs in the first row of \Cref{fig:experiments} show that for both models, AV is not proportional: $\avg_V(X^{k+1})$ starts out very low and only jumps up as soon as $V$ becomes the biggest group (which happens at $\alpha = 1/2$ and $\alpha = 1/3$, respectively). 
In other words, AV underrepresents minorities and overrepresents majorities.
The performance of the other four rules are indistinguishable, as all yield proportionally increasing satisfaction values.

\paragraph{Satisfaction with rankings.}
The graphs in the second row of \Cref{fig:experiments} depict the average satisfaction of a group $V$ of size $\alpha = 1/4$ with the first 5 candidates of the ranking over the first 11 (respectively 8) iterations. Again, AV behaves poorly, as it gives satisfaction to $V$ only once the larger groups have been satisfied. The satisfaction values under the two myopic rules jump heavily from one iteration to the next, as these rules tend to mainly represent one group of voters per iteration. 
On the other hand, the two dynamic rules keep the satisfaction of~$V$ relatively constant at around one fourth of the maximum possible satisfaction. 
These rules provide proportional representation in each single iteration, which is in line with the theoretical results in \Cref{sec:exp_guarantee}.

\section{Conclusion}

Motivated by the problem of how submitted questions in a live Q\&A session can be ranked in a more representative way, we have introduced dynamic ranking rules. We proposed two paradigms of dynamizing existing ranking rules: under the \textit{dynamic paradigm}, we target proportional representation of voter interests at each individual time step; under the \textit{myopic paradigm}, we try to make the set of implemented candidates as representative as possible. While the former approach lends more flexibility for the decision maker and guarantees a proportional exposure of candidates in each ranking, the latter approach is computationally more efficient and yields stronger selection guarantees.
Our experimental results illustrate the difference between the two approaches, and verify that both approaches lead to proportional results. 

The application of live Q\&A platforms gives rise to some interesting extensions of our model. In realistic scenarios, neither the electorate nor the set of candidates is static, as people enter or leave the audience and new questions come up continuously. Moreover, participants can change their approval preferences throughout the event. Our approach can take these dynamic aspects into account in a straighforward manner: After each implementation, we can apply our ranking rules to the current set of not-yet-implemented candidates and to the current approval preferences---the only necessary information from previous iterations is the sequence of implemented candidates. 

The dynamic ranking rules proposed in this paper are applicable to a wide variety of sequential selection procedures in which proportional representation is desired and, at the same time, some flexibility on the part of the decision maker is necessary (e.g., think of human-in-the-loop decision support systems for hiring or budgeting decisions). 
Other applications of dynamic ranking rules include committee election scenarios in which some part of the committee is fixed (e.g., due to external constraints) and the remaining seats need to be filled in such a way that the committee as a whole is representative.

\section*{Acknowledgments}

We would like to thank IJCAI and COMSOC reviewers for their insightful comments. 
This work was partially supported by the Deutsche Forschungsgemeinschaft under grant BR~4744/2-1. 
We thank 
Paula Blechschmidt,
Cristina Cornelio,
Benny Kimelfeld,
Phokion Kolaitis, and
Julia Stoyanovich
for helpful comments and discussions.

\bibliographystyle{plainnat}

\begin{thebibliography}{26}
\providecommand{\natexlab}[1]{#1}
\providecommand{\url}[1]{\texttt{#1}}
\expandafter\ifx\csname urlstyle\endcsname\relax
  \providecommand{\doi}[1]{doi: #1}\else
  \providecommand{\doi}{doi: \begingroup \urlstyle{rm}\Url}\fi

\bibitem[Aziz et~al.(2017)Aziz, Brill, Conitzer, Elkind, Freeman, and
  Walsh]{ABC+16a}
H.~Aziz, M.~Brill, V.~Conitzer, E.~Elkind, R.~Freeman, and T.~Walsh.
\newblock Justified representation in approval-based committee voting.
\newblock \emph{Social Choice and Welfare}, 48\penalty0 (2):\penalty0 461--485,
  2017.

\bibitem[Boutilier and Procaccia(2012)]{boutilier2012dynamic}
C.~Boutilier and A.~Procaccia.
\newblock A dynamic rationalization of distance rationalizability.
\newblock In \emph{Proceedings of the 26th AAAI Conference on Artificial
  Intelligence (AAAI)}, pages 1278--1284. AAAI Press, 2012.

\bibitem[Brill et~al.(2017)Brill, Freeman, Janson, and Lackner]{BFJL16a}
M.~Brill, R.~Freeman, S.~Janson, and M.~Lackner.
\newblock Phragm\'{e}n's voting methods and justified representation.
\newblock In \emph{Proceedings of the 31st AAAI Conference on Artificial
  Intelligence (AAAI)}, pages 406--413. AAAI Press, 2017.

\bibitem[Brill et~al.(2020)Brill, G{\"o}lz, Peters, Schmidt-Kraepelin, and
  Wilker]{BGP+19a}
M.~Brill, P.~G{\"o}lz, D.~Peters, U.~Schmidt-Kraepelin, and K.~Wilker.
\newblock Approval-based apportionment.
\newblock In \emph{Proceedings of the 34th AAAI Conference on Artificial
  Intelligence (AAAI)}, pages 1854--1861. AAAI Press, 2020.

\bibitem[Casella(2005)]{casella2005storable}
A.~Casella.
\newblock Storable votes.
\newblock \emph{Games and Economic Behavior}, 51\penalty0 (2):\penalty0
  391--419, 2005.

\bibitem[Casella(2012)]{casella2012storable}
A.~Casella.
\newblock \emph{Storable Votes: Protecting the Minority Voice}.
\newblock Oxford University Press, 2012.

\bibitem[Elkind et~al.(2017)Elkind, Faliszewski, Laslier, Skowron, Slinko, and
  Talmon]{EFL+17a}
E.~Elkind, P.~Faliszewski, J.-F. Laslier, P.~Skowron, A.~Slinko, and N.~Talmon.
\newblock What do multiwinner voting rules do? {A}n experiment over the
  two-dimensional euclidean domain.
\newblock In \emph{Proceedings of the 31st AAAI Conference on Artificial
  Intelligence (AAAI)}, pages 494--501. AAAI Press, 2017.

\bibitem[Faliszewski et~al.(2017)Faliszewski, Skowron, Slinko, and
  Talmon]{FSST17a}
P.~Faliszewski, P.~Skowron, A.~Slinko, and N.~Talmon.
\newblock Multiwinner voting: A new challenge for social choice theory.
\newblock In U.~Endriss, editor, \emph{Trends in Computational Social Choice},
  chapter~2. 2017.

\bibitem[Freeman et~al.(2017)Freeman, Zahedi, and Conitzer]{freeman2017fair}
R.~Freeman, S.~M. Zahedi, and V.~Conitzer.
\newblock Fair social choice in dynamic settings.
\newblock In \emph{Proceedings of the 26th International Joint Conference on
  Artificial Intelligence (IJCAI)}, pages 4580--4587. IJCAI, 2017.

\bibitem[Hemaspaandra et~al.(2017)Hemaspaandra, Hemaspaandra, and
  Rothe]{hemaspaandra2017complexity}
E.~Hemaspaandra, L.~A. Hemaspaandra, and J.~Rothe.
\newblock The complexity of controlling candidate-sequential elections.
\newblock \emph{Theoretical Computer Science}, 678:\penalty0 14--21, 2017.

\bibitem[Israel and Brill(2021)]{IsBr21b}
J.~Israel and M.~Brill.
\newblock Dynamic proportional rankings.
\newblock In \emph{Proceedings of the 30th International Joint Conference on
  Artificial Intelligence (IJCAI)}. IJCAI, 2021.
\newblock Forthcoming.

\bibitem[Janson(2016)]{Jans16a}
S.~Janson.
\newblock Phragm{\'e}n's and {T}hiele's election methods.
\newblock Technical report, arXiv:1611.08826 [math.HO], 2016.

\bibitem[Janson(2018)]{Jans18a}
S.~Janson.
\newblock Thresholds quantifying proportionality criteria for election methods.
\newblock Technical report, arXiv:1810.06377 [cs.GT], 2018.

\bibitem[Lackner(2020)]{lackner2020perpetual}
M.~Lackner.
\newblock Perpetual voting: Fairness in long-term decision making.
\newblock In \emph{Proceedings of the 34th AAAI Conference on Artificial
  Intelligence (AAAI)}, pages 2103--2110. AAAI Press, 2020.

\bibitem[Lackner and Skowron(2020)]{LaSk20a}
M.~Lackner and P.~Skowron.
\newblock Approval-based committee voting: Axioms, algorithms, and
  applications.
\newblock Technical report, arXiv:2007.01795 [cs.GT], 2020.

\bibitem[Monroe(1995)]{Monr95a}
B.~L. Monroe.
\newblock Fully proportional representation.
\newblock \emph{The American Political Science Review}, 89\penalty0
  (4):\penalty0 925--940, 1995.

\bibitem[Oren and Lucier(2014)]{oren2014online}
J.~Oren and B.~Lucier.
\newblock Online (budgeted) social choice.
\newblock In \emph{Proceedings of the 28th AAAI Conference on Artificial
  Intelligence (AAAI)}, pages 1456--1462. AAAI Press, 2014.

\bibitem[Parkes and Procaccia(2013)]{parkes2013dynamic}
D.~Parkes and A.~Procaccia.
\newblock Dynamic social choice with evolving preferences.
\newblock In \emph{Proceedings of the 27th AAAI Conference on Artificial
  Intelligence (AAAI)}, pages 767--773. AAAI Press, 2013.

\bibitem[Peters and Skowron(2020)]{PeSk20a}
D.~Peters and P.~Skowron.
\newblock Proportionality and the limits of welfarism.
\newblock In \emph{Proceedings of the 21st ACM Conference on Economics and
  Computation (ACM-EC)}, pages 793--794. ACM Press, 2020.

\bibitem[Phragm{\'e}n(1894)]{Phra94a}
E.~Phragm{\'e}n.
\newblock Sur une m{\'e}thode nouvelle pour r{\'e}aliser, dans les
  {\'e}lections, la repr{\'e}sentation proportionnelle des partis.
\newblock \emph{\"Ofversigt af Kongliga Vetenskaps-Akademiens F\"orhandlingar},
  51\penalty0 (3):\penalty0 133--137, 1894.

\bibitem[S{\'a}nchez-Fern{\'a}ndez et~al.(2017)S{\'a}nchez-Fern{\'a}ndez,
  Elkind, Lackner, Fern{\'a}ndez, Fisteus, {Basanta Val}, and Skowron]{SFF+17a}
L.~S{\'a}nchez-Fern{\'a}ndez, E.~Elkind, M.~Lackner, N.~Fern{\'a}ndez, J.~A.
  Fisteus, P.~{Basanta Val}, and P.~Skowron.
\newblock Proportional justified representation.
\newblock In \emph{Proceedings of the 31st AAAI Conference on Artificial
  Intelligence (AAAI)}, pages 670--676. AAAI Press, 2017.

\bibitem[Schulze(2011)]{Schu11b}
M.~Schulze.
\newblock Free riding and vote management under proportional representation by
  the single transferable vote.
\newblock Available at \url{http://m-schulze.9mail.de/schulze2.pdf}, 2011.

\bibitem[Skowron(2018)]{Skow18a}
P.~Skowron.
\newblock Proportionality degree of multiwinner rules.
\newblock Technical report, arXiv:1810.08799 [cs.GT], 2018.

\bibitem[Skowron et~al.(2017)Skowron, Lackner, Brill, Peters, and
  Elkind]{SLB+17a}
P.~Skowron, M.~Lackner, M.~Brill, D.~Peters, and E.~Elkind.
\newblock Proportional rankings.
\newblock In \emph{Proceedings of the 26th International Joint Conference on
  Artificial Intelligence (IJCAI)}, pages 409--415. IJCAI, 2017.

\bibitem[Tennenholtz(2004)]{tennenholtz2004transitive}
M.~Tennenholtz.
\newblock Transitive voting.
\newblock In \emph{Proceedings of the 5th ACM Conference on Electronic Commerce
  (ACM-EC)}, pages 230--231. ACM Press, 2004.

\bibitem[Thiele(1895)]{Thie95a}
T.~N. Thiele.
\newblock Om flerfoldsvalg.
\newblock \emph{Oversigt over det Kongelige Danske Videnskabernes Selskabs
  Forhandlinger}, pages 415--441, 1895.

\end{thebibliography}


\clearpage
\appendix

\section{Additional Details on \Cref{sec:dynamic_rules}}
\label{app:rules}
This section contains more details on Here we study the introduced dynamic ranking rules algorithmically. For each of the rules we provide a pseudocode formulation and give bounds on its running time. We start with the two rules based on sequential PAV and then consider the two Phragmén variants.

All four rules take as input an approval profile $A$ and a sequence of already implemented candidates $X$. We denote the voters given implicitly by the profile by $N(A)$ and the respective candidates by $C(A)$. The rules output a ranking of all candidates that have not been implemented yet.

In the pseudocode we make use of the following conventions. We set $n \coloneqq |N(A)|$ as the number of voters and $m \coloneqq |C(A)|$ as the number of candidates.
Further, we use set-syntax also for sequences of objects. For example, $(i \in \mathbb{Z} \mid \text{sorted non-decreasingly})$ denotes the (ordered) sequence of all positive natural numbers in non-decreasing order, i.e., the sequence $(1,2,3,\ldots)$.\footnote{If there are ties in the orderings, this notation does not uniquely define a single sequence. Since our results do not rely on a specific tie-breaking rule, we sidestep this ambiguity by assuming a given tie-breaking order.} 
Additionally, we abbreviate \emph{sorted non-decreasingly} to \emph{sorted $\nearrow$} and \emph{sorted non-increasingly} to \emph{sorted $\searrow$}. We are then able to denote $(1,2,3,\ldots)$ by $(i \in \mathbb{Z} \mid \text{sorted}\nearrow)$.

\subsection{Algorithmic Aspects of Dynamic and Myopic seqPAV}
We begin with the two rules based on sequential PAV. They both use the notion of marginal contribution to rank candidates. For a set $S\subseteq C$ of candidates define
\[ \tsc(S) = \sum_{i \in N} \sum_{j=1}^{|A_i\cap S|} \frac{1}{j}. \]
This score basically models a voter's satisfaction with $S$ using the concept of diminishing returns. A voter $i \in N$ contributes 1 to the score for the first candidate in $A_i \cap S$. For the second candidate in $A_i \cap S$ that voter contributes only 1/2 of additional score and so on.
Using this score the marginal contribution of a candidate $c \notin S$ can then be computed by
\[ \mc(c) = \tsc(S \cup \Set{c}) - \tsc(S). \]

Dynamic seqPAV mimics the non-dynamic variant closely by computing the marginal contribution of a candidate according to all already ranked and already implemented candidates. The rule ranks candidates greedily one by one. In every round it selects a candidate with maximal marginal contribution and appends that candidate to the end of the ranking.
\begin{algorithm}[h]
    \DontPrintSemicolon
    \SetKwInOut{Input}{Input}
    \SetKwInOut{Output}{Output}
    \newcommand\commfont[1]{\small\texttt{#1}}
    \SetCommentSty{commfont}
    \Input{approval profile $A$, sequence of implemented candidates $X$}
    \Output{ranking $r \in \mathcal{L}(C(A) \setminus X)$}
    $C \coloneqq C(A) \setminus X$ \tcp*{unranked candidates}
    $r \coloneqq ()$ \tcp*{ranking of candidates}
    \While{$C \neq \emptyset$}{
        \ForAll{$c \in C$}{
            $\mc_\mathrm{dyn}(c) = \tsc(X \cup r \cup c) - \tsc(X \cup r)$ \tcp*{compute marginal contribution}
        }
        choose $c \in \arg\max_{c \in C} \mc_\mathrm{dyn}(c)$ \tcp*{choose candidate with max.\ $\mc$}
        append $c$ to $r$ \tcp*{add that candidate to $r$}
        $C \coloneqq C \setminus \Set{c}$\;
    }
    \Return{$r$}
    \caption{Dynamic sequential PAV}\label{alg:dyn_seqPAV}
\end{algorithm}
\begin{proposition}
Given an approval profile $A$ and a sequence of implemented candidates $X$, dynamic seqPAV outputs a ranking of all not-yet-implemented candidates in time $\mathcal{O}(m^3n)$.
\end{proposition}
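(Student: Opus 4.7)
The plan is to bound the running time by a straightforward analysis of the three levels of nested work in Algorithm~\ref{alg:dyn_seqPAV}. I would first observe that the outer \textbf{while} loop is executed at most $m$ times, since each iteration appends one new candidate to $r$ and the number of unranked candidates in $C = C(A) \setminus X$ is at most $m$. The inner \textbf{for} loop likewise iterates over the (at most $m$) still-unranked candidates, giving $\mathcal{O}(m^2)$ evaluations of $\mc_\mathrm{dyn}$ in total.

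The core step is to bound the cost of a single evaluation of $\mc_\mathrm{dyn}(c) = \tsc(X \cup r \cup \{c\}) - \tsc(X \cup r)$. By definition, $\tsc(S) = \sum_{i \in N} \sum_{j=1}^{|A_i \cap S|} 1/j$, so computing $\tsc(S)$ amounts to iterating over the $n$ voters and, for each voter $i$, determining $|A_i \cap S|$ and summing the corresponding prefix of the harmonic series. Since $|S| \le m$, both of these sub-tasks can be done in $\mathcal{O}(m)$ time per voter, for a total of $\mathcal{O}(nm)$ per $\tsc$-evaluation and hence $\mathcal{O}(nm)$ per $\mc_\mathrm{dyn}$-evaluation.

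Multiplying the three bounds yields $\mathcal{O}(m) \cdot \mathcal{O}(m) \cdot \mathcal{O}(nm) = \mathcal{O}(m^3 n)$, which dominates the $\mathcal{O}(m)$ cost of taking an argmax in each outer iteration and the $\mathcal{O}(1)$ bookkeeping of appending to $r$ and updating $C$. This gives the claimed bound.

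The only subtlety in making this rigorous is fixing the machine model: I would assume the standard convention that a single arithmetic operation on the (rational) satisfaction values takes constant time, so that the harmonic prefix sum $\sum_{j=1}^{|A_i \cap S|} 1/j$ costs $\mathcal{O}(|A_i \cap S|) = \mathcal{O}(m)$ time. There is no genuine obstacle here; at most one could tighten the bound to $\mathcal{O}(m^2 n)$ by caching the per-voter counts $|A_i \cap (X \cup r)|$ across outer iterations and updating them only for voters in $N_{c^\ast}$ when a new candidate $c^\ast$ is appended to $r$, but this is not needed for the stated $\mathcal{O}(m^3 n)$ bound.
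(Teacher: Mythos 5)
Your proposal is correct and follows essentially the same argument as the paper's proof: bound the two nested loops by $m$ each, bound one evaluation of $\mc_\mathrm{dyn}$ by $\mathcal{O}(mn)$, and note that the argmax and bookkeeping steps are dominated. Your extra detail on how $\tsc$ is evaluated and the remark about a possible $\mathcal{O}(m^2 n)$ improvement via caching go slightly beyond the paper but do not change the approach.
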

\begin{proof}
Termination of the algorithm is straightforward. 
Concerning running time, first note that $\mc_\mathrm{dyn}(c)$ can be computed in time $m \cdot n$. The loops starting in Lines 3 and 4 of \Cref{alg:dyn_seqPAV} each have length of at most $m$. Choosing a candidate with maximum marginal contribution (Line 6) can be done in an additional time of $m$. This however gets dominated by the running time of the loop in Line 4. Appending and deleting candidates in Lines 7 and 8 is possible in constant time.
Thus, the overall running time is in $\mathcal{O}(m^3n)$.
\end{proof}

The myopic version of this rule computes the marginal contribution of the candidates once up front---only with respect to the already implemented candidates---and then simply ranks all candidates according to this score.
\begin{algorithm}[h]
    \DontPrintSemicolon
    \SetKwInOut{Input}{Input}
    \SetKwInOut{Output}{Output}
    \newcommand\commfont[1]{\small\texttt{#1}}
    \SetCommentSty{commfont}

    \Input{approval profile $A$, sequence of implemented candidates $X$}
    \Output{ranking $r \in \mathcal{L}(C(A) \setminus X)$}
    $C \coloneqq C(A) \setminus X$ \tcp*{unranked candidates}
    \ForAll{$c \in C$}{
        $\mc_\mathrm{myopic}(c) = \tsc(X \cup c) - \tsc(X)$ \tcp*{compute marginal contribution}
    }
    $r \coloneqq (c \in C \mid \text{sorted$\searrow$ by } \mc_\mathrm{myopic}(c))$ \tcp*{\parbox[t]{16em}{\raggedright rank candidates by $\mc_\mathrm{myopic}$ in non-in\-creasing order}}

    \Return{$r$}
    \caption{Myopic sequential PAV}\label{alg:myopic_seqPAV}
\end{algorithm}
\begin{proposition}
Given an approval profile $A$ and a sequence of implemented candidates $X$, myopic seqPAV outputs a ranking of all not-yet-implemented candidates in time $\mathcal{O}(m^2n)$.
\end{proposition}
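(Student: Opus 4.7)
The plan is to argue termination and then bound the running time of each portion of \Cref{alg:myopic_seqPAV} in turn, mirroring the analysis already given for dynamic seqPAV.

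Termination is immediate: after the initial computation of $C$, the algorithm executes a single \textbf{for} loop over the finite set $C$, followed by a sorting step and a return; there are no nested while loops that could fail to terminate.

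For the running time, I would first note that computing $C := C(A)\setminus X$ takes $\mathcal{O}(m)$ time. Next, for the loop starting on line 2, the key step is to observe that, exactly as in the proof for dynamic seqPAV, each evaluation $\mc_\mathrm{myopic}(c) = \tsc(X\cup\{c\}) - \tsc(X)$ can be performed in $\mathcal{O}(mn)$ time, because $\tsc$ on a set of size at most $m$ requires, for each of the $n$ voters, computing an intersection and summing a harmonic tail of length at most $m$. Since the loop iterates at most $m$ times, the total cost of computing all marginal contributions is $\mathcal{O}(m^2 n)$.

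Finally, sorting the $|C|\le m$ candidates by their $\mc_\mathrm{myopic}$-values on line 4 requires $\mathcal{O}(m\log m)$ comparisons, which is dominated by $\mathcal{O}(m^2 n)$. Adding up the contributions gives an overall running time of $\mathcal{O}(m^2n)$, as claimed. The only step that requires care is the per-candidate bound on computing $\mc_\mathrm{myopic}$; once this is in hand (and it is identical to the argument already used for dynamic seqPAV), the remainder is a straightforward tally.
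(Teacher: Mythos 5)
Your proof is correct and follows essentially the same route as the paper's: bound each evaluation of $\mc_\mathrm{myopic}(c)$ by $\mathcal{O}(mn)$, multiply by the $m$ loop iterations, and observe that the final sort in $\mathcal{O}(m\log m)$ is dominated. No gaps.
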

\begin{proof}
Termination of the algorithm is straightforward. 
Concerning running time, first note that $\mc_\mathrm{myopic}(c)$ can be computed in time $m \cdot n$. The loop starting in Line 2 of \Cref{alg:myopic_seqPAV} has length of at most $m$. Ranking (i.e., sorting) all candidates takes an additional time of $m \log(m)$ which is however dominated by the running time of the above loop.
Thus, the overall running time is in~$\mathcal{O}(m^2n)$.
\end{proof}

\subsection{Algorithmic Aspects of Dynamic and Myopic Phragmén}

\newcommand{\compdebts}{\texttt{compute\_debts}\xspace}
\newcommand{\comptime}{\texttt{compute\_buying\_time}\xspace}

The two Phragmén variants defined in \Cref{sec:dynamic_rules} are based on similar ideas. We assume that a candidate costs 1 credit and voters approving a candidate can buy this candidate with the credits they own. 
The first version, dynamic Phragmén, closely resembles the original, non-dynamic rule sequential Phragmén. In the original rule, voters start out with a balance of 0 and earn money continuously and at the same rate. At the moment a set of voters who approve the same candidate together have 1 credit, they spend this money immediately to buy this candidate. The newly bought candidate gets appended to the ranking. The main difference in the dynamic adaption is that voters do not start out with 0 credits in the beginning but might have a negative balance to start with. This enables us to model that certain voters are already satisfied through the implemented candidates in~$X$. Once the initial debts are assigned voters again earn money continuously. As soon as a set of voters together have a (positive) amount of 1 credit they can spend this on a commonly approved candidate. Voters with initial debts have to wait until they earned enough credits to cover their debts---and thus have a positive balance---until they can participate in buying candidates. It is not possible to go into debt to buy a candidate. For a voter $i \in N(A)$ we denote $i$'s credits with $\credit_i$.

In the following we present a pseudocode formulation of dynamic Phragmén as \Cref{alg:dyn_phrag}. \Cref{alg:dyn_phrag} uses two subroutines that we will describe in more detail afterwards.
First, to compute the initial debts for dynamic Phragmén (and myopic Phragmén, which we discuss later in this section), we use a subroutine called \compdebts (see \Cref{alg:compute_debts}). This algorithm takes an approval profile $A$ and a sequence of implemented candidates $X$ as input and outputs the amount of debt each voter $i \in N(A)$ receives to accommodate the costs of the already implemented candidates in $X$. 
Dynamic Phragmén then interprets these debts as negative credits, setting $\credit_i = -d_i$. Dynamic Phragmén constructs the output ranking iteratively. In order to find the next candidate to rank, the algorithm searches for the candidate that can be bought by (a subset of) its supporters at the earliest point in time.
To analyse the running time of dynamic Phragmén, it is easier to \textit{not} think of giving credits to voters continuously to find the next candidate. Rather, in each iteration of the ranking process, we calculate for each unranked candidate $c$ the minimal time we have to wait until the supporters of $c$ can buy $c$. Then we rank the candidate with the smallest such time next and let the corresponding supporters pay for $c$. The calculation of this minimal time is done by the subroutine \comptime. 
\begin{algorithm}[h]
    \DontPrintSemicolon
    \SetKwInOut{Input}{Input}
    \SetKwInOut{Output}{Output}
    \newcommand\commfont[1]{\small\texttt{#1}}
    \SetCommentSty{commfont}
    \Input{approval profile $A$, sequence of implemented candidates $X$}
    \Output{ranking $r \in \mathcal{L}(C(A) \setminus X)$}
    $C \coloneqq C(A) \setminus X$ \tcp*{unranked candidates}
    $N \coloneqq N(A)$ \tcp*{voters}
    $r \coloneqq ()$ \tcp*{ranking of candidates}
    $(d_i)_{i \in N} \coloneqq \mathrm{\compdebts}(A,X)$ \tcp*{compute initial debts}
    \ForAll{$i \in N$}{
        $\credit_i \coloneqq -d_i$ \tcp*{starting credits}
    }
    \While{$C \neq \emptyset$}{
        \ForAll{$c\in C$}{
            $(t_c, V_c) \coloneqq \mathrm{\comptime}(A,c,(\credit_i)_{i \in N})$\;
        }
        choose $c \in \arg\min_{c \in C} t_c$\tcp*{choose candidate with min. $t_c$}
        \ForAll{$i \in V_c$}{
            $\credit_i = 0$ \tcp*{voters in $V_c$ pay for candidate $c$}
        }
        append $c$ to $r$ \tcp*{add that candidate to $r$}
        $C \coloneqq C \setminus \Set{c}$\;
    }
    \Return{$r$}
    \caption{Dynamic Phragmén}\label{alg:dyn_phrag}
\end{algorithm}

\paragraph{Subroutine \compdebts.}
This algorithm computes the initial debts for both Phragmén variants (see \Cref{alg:compute_debts}). The input of the algorithm is an approval profile $A$ and a sequence of implemented candidates $X$ and it outputs for each voter $i \in N(A)$ a non-negative real number $d_i$ which is the amount of debt voter $i$ received to accommodate the costs of the already implemented candidates in $X$. Thus it holds $\sum_{i \in N(A)} d_i = |X|$.
The algorithm iterates over $X$ and for each candidate distributes the cost of 1 among all voters that approve of that candidate. More formally, let $x \in X$ be an already implemented candidate and let $N_x$ be the supporters of $x$. We divide the cost of 1 for $x$ among $N_x$ in a way that minimises the total debt across all voters in $N_x$. This is the same approach that is used in the load-balancing version of sequential Phragmén. New in our case is that the difference between debt (or load) of two voters assigned in previous steps of the algorithm can be arbitrarily high. 
(For example, if the first $k$ implemented candidates in $X$ are only supported by a single voter, this voter gets assigned a debt of $k$ before any other voter gets assigned any debt.)
This may lead to a situation where, in order to minimise the maximal total debt across voters in $N_x$, we distribute the cost only over a subset of $N_x$. Thus, for each $x \in X$ we first have to determine the correct subset of $N_x$ to distribute the debt to. We do this by ordering voters in $N_x$ non-decreasingly by $d_i$ and inspect any prefix of voters in this order. For every such prefix $N_x' \subseteq N_x$ we compute the debt voters in $N_x'$ have after distributing the additional debt of 1 by 
\[d_\mathrm{new} \coloneqq \frac{1 + \sum_{i \in N_x'}d_i}{|N_x'|}. \]
Let $v_{+1} \in N_x \setminus N_x'$ be that supporter of $x$ not in $N_x'$ with the next lowest debt $d_{v_{+1}}$. It is possible to lower $d_\mathrm{new}$ by adding $v_{+1}$ to $N_x'$ if and only if $d_{v_{+1}} < d_\mathrm{new}$.
The algorithm \compdebts does this computation for each $x \in X$ in the order given by the sequence $X$ itself and then outputs the debts computed this way.
\begin{algorithm}[h]
    \DontPrintSemicolon
    \SetKwInOut{Input}{Input}
    \SetKwInOut{Output}{Output}
    \newcommand\commfont[1]{\small\texttt{#1}}
    \SetCommentSty{commfont}
    \Input{approval profile $A$, sequence of implemented candidates $X$}
    \Output{$(d_i)_{i \in N(A)}$, debts for all $i \in N(A)$}
    $N \coloneqq N(A)$ \tcp*{voters}
    \ForAll{$i \in N$}{
        $d_i \coloneqq 0$ \tcp*{initial debt}
    }
    \For{$x \in X$}{
        $N_x \coloneqq (i \in N \mid x \in A_i, \text{sorted$\nearrow$ by } d_i)$ \tcp*{\parbox[t]{17em}{\raggedright supporters of $x$ sorted non- decreasingly by initial debt}}
        \For{$j \leq |N_x|$}{
            $N_x' \coloneqq (N_x)_{\leq j}$ \tcp*{first $j$ voters in $N_x$}
            $d_\mathrm{new} \coloneqq \frac{1 + \sum_{i \in N_x'}d_i}{j}$ \tcp*{distribute debt among first $j$ supporters}
            \If(\tcp*[f]{check next supporter's debt}){$j < |N_x|$ and $d_\mathrm{new} \leq d_{j+1}$}{
                \ForAll{$i \leq j$}{
                    $d_i = d_\mathrm{new}$ \tcp*{assign new debt}
                    break \tcp*{break inner for-loop}
                }
            }
            \If(\tcp*[f]{all supporters share debt}){$j = |N_x|$}{ 
                \ForAll{$i \in N_x$}{
                    $d_i = d_\mathrm{new}$ \tcp*{assign new debt}
                }
            }
        }
    }
    \Return{$(d_i)_{i \in N(A)}$}
    \caption{\compdebts}\label{alg:compute_debts}
\end{algorithm}
\begin{proposition}\label{prop:compute_debts}
Given an approval profile $A$ and a sequence of implemented candidates $X$, \Cref{alg:compute_debts} computes the debts for all voters in $N(A)$ according to $X$ in time $\mathcal{O}(mn^2)$.
\end{proposition}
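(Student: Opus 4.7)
The proof has two parts: correctness and running time. For correctness, I would use induction on the outer loop index. The invariant after processing the first $\ell$ elements of $X$ is that $(d_i)_{i \in N}$ equals the debt vector produced by $\ell$ rounds of the load-balancing procedure described in \Cref{sec:dynamic_rules}. Reducing to a single round, it suffices to show: given a candidate $x$ and current debts $(d_i)$, an optimal distribution of cost $1$ over $N_x$ minimising $\max_{i \in N_x} d_i$ equalises the first $j^\ast$ voters (sorted non-decreasingly by $d_i$) at level $L(j^\ast) := (1 + \sum_{i \le j^\ast} d_i)/j^\ast$, where $j^\ast$ is the smallest $j$ with $j = |N_x|$ or $L(j) \le d_{j+1}$.

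The correctness of this characterisation follows from two observations. First, an exchange argument shows that some optimum is attained by a prefix of the debt-sorted order: swapping a voter with high debt for one with low debt never worsens the objective. Second, the identity $L(j+1) - L(j) = (d_{j+1} - L(j))/(j+1)$ implies that as long as $L(j) > d_{j+1}$ (i.e.\ before the stopping condition triggers), the equalised level $L$ strictly decreases while feasibility $L(j) \ge d_j$ is preserved by induction; conversely, once $L(j) \le d_{j+1}$, extending the prefix cannot reduce the maximum debt among $N_x$. Combining these identifies $j^\ast$ as the optimum. Inspecting \Cref{alg:compute_debts} then confirms that its inner loop performs exactly this forward scan: it evaluates $d_\mathrm{new}$ as $L(j)$, stops at the first $j$ where the stopping condition holds, and assigns $d_i = L(j)$ to the first $j$ voters.

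For the running time, the outer loop has at most $|X| \le m$ iterations. Within each iteration, sorting $N_x$ takes $\mathcal{O}(n \log n)$ and the inner for loop runs for at most $n$ rounds. Each inner round spends $\mathcal{O}(n)$ evaluating the prefix sum $\sum_{i \in N_x'} d_i$ needed for $d_\mathrm{new}$, and the single round that triggers the assignment spends an additional $\mathcal{O}(n)$. Hence each outer iteration costs $\mathcal{O}(n^2)$, giving the total bound $\mathcal{O}(mn^2)$. Termination is immediate from the finite index sets.

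The main obstacle is the correctness step---verifying that the algorithm's forward scan lands on the true optimum $j^\ast$. The cleanest handle is the recursion for $L(j+1) - L(j)$, which makes transparent both that feasibility $L(j) \ge d_j$ is preserved along the scan (so that equalisation at level $L(j)$ is always admissible at the point where the algorithm stops) and that the first prefix where $L(j) \le d_{j+1}$ is globally optimal.
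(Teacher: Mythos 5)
Your proposal is correct, and on the part that the paper actually argues---termination and the $\mathcal{O}(mn^2)$ running time via ``outer loop of length $\le m$, sorting in $\mathcal{O}(n\log n)$, inner loop of length $\le n$ with an $\mathcal{O}(n)$ prefix-sum evaluation per round''---it coincides with the paper's proof essentially line for line. Where you genuinely diverge is in supplying a correctness argument: the paper's proof only walks through what the algorithm does (sort $N_x$, scan prefixes, check whether adding the next supporter lowers $d_{\mathrm{new}}$, fall back to all of $N_x$ otherwise) and treats agreement with the informal specification of the first phase as self-evident, whereas you prove via an exchange argument plus the recursion $L(j+1)-L(j)=(d_{j+1}-L(j))/(j+1)$ that the forward scan lands on an optimal prefix and that feasibility $L(j)\ge d_j$ is maintained. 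That is a worthwhile strengthening. One caveat: the bare objective ``minimise $\max_{i\in N_x} d_i$'' is degenerate here, because supporters outside the chosen prefix keep their old (possibly already large) debts, so $\max_{i\in N_x}d_i$ after the update typically equals the unchanged $\max_{i\in N_x}d_i$ before it and is attained by many distributions; what the water-filling prefix actually pins down is the lexicographically minimal sorted debt vector (equivalently, the minimal level among the voters who pay), subject to debts only increasing. Your argument goes through under that sharper reading, and the paper is no more precise about the specification than you are, so this is a refinement rather than a gap.
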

\begin{proof}
Concerning termination of \Cref{alg:compute_debts}, consider a candidate $x \in X$ and the sequence of corresponding supporters $N_x \subseteq N(A)$. The sequence $N_x$ gets sorted non-increasingly in Line 5. In the loop starting in Line 6 each prefix of voters approving $x$ is considered. For each such prefix $N_x'$ the algorithm calculates the debt that each voter in $N_x'$ would get if they together had to pay 1 credit for candidate $x$ (Line 8). As described above, the algorithm then checks whether this debt can be decreased by adding the next candidate in $N_x \setminus N_x'$ to $N_x'$ (Line 9). If not, the debt is distributed among voters in $N_x'$. Otherwise, the loop continues. The if-statement in Line 13 guarantees that in the end, if no proper subset of candidates $N_x' \subset N_x$ was singled out, all supporters of $x$ share the debt.

Concerning running time, the loop starting in Line 4 has length of at most $m$. Sorting candidates into $N_x$ can be done in time $n \log(n)$. On the other hand, the loop in Line 6 has length at most $n$ and the calculation in Line 8 can be performed in $\mathcal{O}(n)$ operations. The assignment of debts in Lines 11 or 15 would contribute another $n$ operations, which are however dominated by the calculation in Line 8.
Since the running time of the loop starting in Line 6 dominates the running time of the operation in Line 5, the overall running time of the algorithm is in $\mathcal{O}(mn^2)$.
\end{proof}

\paragraph{Subroutine \comptime.}
As stated above, it is easier to not think of giving credits to voters continuously when analysing the running time of dynamic Phragmén. Instead in each iteration of the ranking process, we calculate for each unranked candidate $c$ the minimal time we have to wait until the supporters of $c$ can buy $c$. Then we rank the candidate with the smallest such time next and let the corresponding supporters pay for $c$. 
Note that not necessarily all supporters of $c$ have to pay, as it might be the case that some of them have to much debt and the rest of the supporters are able to raise 1 credit before those with a lot of debt can participate in the buying process.
To compute that minimal time and the corresponding set of supporters of a candidate $c$, we use the subroutine \comptime. This algorithm takes as input the approval profile $A$, a candidate $c \in C(A)$ and the current credit balance of all voters $(\credit_i)_{i \in N}$. The function outputs the minimal additional time until a subset of supporters of candidate $c$ will have accumulated a positive budget of 1 credit. Formally, the function finds the minimal $t_c \in [0,1]$ such that there is a set of voters $V \subseteq N_c$ with
\[ \sum_{i \in V} (\credit_i + t_c) \geq 1. \]
Because of the minimality of $t_c$ we have $\credit_i + t_c \geq 0$ for all $i \in V$.
Practically, $t_c$ and the corresponding set of supporters can be computed as follows. Set $V = \Set{i \in N_c \mid \credit_i \geq 0}$ and sort all remaining supporters $i \in N_c \setminus V$ by non-increasing budget (i.e., voters with less debt are sorted to the top). For each $k = 0,1,\ldots,|N_c \setminus V|$ consider the set $V_k$ containing voters in $V$ and the first k voters in $N_c\setminus V$. That means $V_0 = V$ and for $k = 1,2,3,\ldots$ the set $V_k$ will additionally contain the $k$ voters in $N_c \setminus V$ with the fewest debt.
For each of these $V_k$ compute
\[ t_c^k \coloneqq 
\begin{cases} 
    0, &\text{ if } \sum_{i \in V_k} \credit_i \geq 1 \\
    \frac{1 - \sum_{i \in V_k}\credit_i}{|V_k|}, &\text{ else.}
\end{cases} \]
The function \comptime then outputs the lowest $t_c^k$ and the corresponding set of voters $V_k$, breaking ties by smaller $k$.
We now consider the running time of this algorithm. Setting up the sets $V$ and sorting $N_c \setminus V$ can be done in $\mathcal{O}(n\log(n))$. This gets dominated by the running time of the loop over all $k \leq |N_c \setminus V|$. This loop has length at most $n$ and for each $k$ the calculation of $t_c^k$ is possible in $\mathcal{O}(n)$. Thus \comptime runs in time $\mathcal{O}(n^2)$.
We have therefore proven the following result.
\begin{proposition}\label{prop:compute_buying_time}
Given an approval profile $A$, a candidate $c \in C(A)$ and the credit balance of all voters $(\credit_i)_{i \in N}$, \comptime outputs $t_c$ and $V_c$ in time $\mathcal{O}(n^2)$, where $t_c$ is the minimal additional time until a subset of voters $V_c \subseteq N_c$ have a joint budget of 1 credit.
\end{proposition}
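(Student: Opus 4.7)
The plan is to verify correctness and the running-time bound for \comptime. Correctness amounts to showing that the returned $(t_c, V_c)$ is indeed the minimum $t_c \in [0,1]$ together with a witness set $V_c \subseteq N_c$ satisfying both $\sum_{i \in V_c}(\credit_i + t_c) \geq 1$ and $\credit_i + t_c \geq 0$ for all $i \in V_c$. The running-time bound will then follow from a straightforward counting of operations inside the loop.

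First, I would use a standard exchange argument to show that some optimal witness $V^*$ has the ``prefix'' structure enumerated by the algorithm. Defining $V := \{i \in N_c : \credit_i \geq 0\}$, any such voter can be added to $V^*$ without violating feasibility while only loosening the budget constraint, so without loss of generality $V^* \supseteq V$. Next, if $V^*$ contains a voter $i$ with $\credit_i < 0$ but excludes some voter $j \in N_c \setminus V^*$ with $\credit_j > \credit_i$, then swapping $j$ for $i$ increases $\sum_{k \in V^*}\credit_k$ and relaxes the non-negativity constraint on the least-credit voter, strictly improving the solution. Hence we may assume $V^*$ is one of the sets $V_k$ obtained by adding to $V$ the $k$ voters from $N_c \setminus V$ with the highest credit, and it suffices for the algorithm to enumerate these prefixes.

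For a fixed $V_k$, the smallest $t \geq 0$ making $\sum_{i \in V_k}(\credit_i + t) \geq 1$ is exactly the quantity $t_c^k = \max(0, (1 - \sum_{i \in V_k}\credit_i)/|V_k|)$ computed by the algorithm. The main obstacle I anticipate is subtle: this formula does not explicitly enforce $\credit_i + t_c^k \geq 0$ for the voter in $V_k$ with the most negative credit. The key observation is that if this feasibility constraint were violated at the $k$ minimizing $t_c^k$, then a short algebraic rearrangement shows that $V_{k-1}$ already achieves $t_c^{k-1} \leq t_c^k$ while remaining feasible, contradicting the minimality of $t_c^k$. Combined with the claim in the previous paragraph, this confirms that the output pair $(t_c^{k^*}, V_{k^*})$ is both feasible and optimal.

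For the running-time bound, splitting $N_c$ into $V$ and $N_c \setminus V$ and sorting the latter by credit takes $\mathcal{O}(n \log n)$. The loop over $k$ runs at most $|N_c \setminus V| + 1 = \mathcal{O}(n)$ times, and each iteration recomputes $\sum_{i \in V_k}\credit_i$ and $|V_k|$ in $\mathcal{O}(n)$ time, with a final $\mathcal{O}(n)$ pass to select the minimum. The overall cost is therefore $\mathcal{O}(n^2)$, as claimed.
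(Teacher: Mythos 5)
Your proof is correct and follows essentially the same route as the paper: the paper's argument consists of describing the same prefix-enumeration algorithm and counting operations exactly as you do. The exchange argument and the check that the minimizing $V_k$ automatically satisfies $\credit_i + t_c^k \ge 0$ are details the paper asserts without proof (``because of the minimality of $t_c$\dots''), and your verification of them is sound.
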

With this, we can prove the following result concerning the running time of dynamic Phragmén.
\begin{proposition}
Given an approval profile $A$ and a sequence of implemented candidates $X$, \Cref{alg:dyn_phrag} computes a ranking of all not-yet-implemented candidates in time $\mathcal{O}(m^2n^2)$.
\end{proposition}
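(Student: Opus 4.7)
The plan is to establish termination by the obvious decreasing-cardinality argument on $C$, and then bound the running time by combining the complexities of the two subroutines (Propositions on \compdebts and \comptime) with the loop structure of Algorithm~3.

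First, I would observe that correctness/termination follows from the fact that the \textbf{while}-loop in Line~7 strictly decreases $|C|$ by one per iteration (a candidate is appended to $r$ and removed from $C$ in Lines~13--14), so the loop runs exactly $m - |X|$ times, which is $\mathcal{O}(m)$.

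Next, I would account for the one-time setup cost. Lines~1--3 and~5--6 run in $\mathcal{O}(n)$ (initializing the unranked-candidate set, the ranking, and the credit balances). The call to \compdebts in Line~4 costs $\mathcal{O}(m n^2)$ by the earlier proposition on Algorithm~\ref{alg:compute_debts}. This setup cost is therefore dominated by the main loop analysis below.

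The main obstacle — though it is really just bookkeeping rather than a conceptual difficulty — is accounting for the nested cost inside the \textbf{while}-loop. In a single iteration: the \textbf{foreach}-loop in Lines~8--9 performs at most $m$ calls to \comptime, each costing $\mathcal{O}(n^2)$ by the earlier proposition, for a per-iteration cost of $\mathcal{O}(m n^2)$; the $\arg\min$ in Line~10 costs $\mathcal{O}(m)$; the credit update in Lines~11--12 touches at most $n$ voters in $V_c$, costing $\mathcal{O}(n)$; and Lines~13--14 cost $\mathcal{O}(1)$ with standard data structures. Hence one iteration of the main loop is $\mathcal{O}(m n^2 + m + n) = \mathcal{O}(m n^2)$. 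Multiplying by the $\mathcal{O}(m)$ iterations of the \textbf{while}-loop and adding the dominated setup cost $\mathcal{O}(m n^2)$, the total running time is $\mathcal{O}(m^2 n^2)$, as claimed.
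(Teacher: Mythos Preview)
Your proof is correct and follows essentially the same approach as the paper: both argue termination trivially, then bound the two nested loops of length $\mathcal{O}(m)$ each around the $\mathcal{O}(n^2)$ call to \comptime, and observe that the $\mathcal{O}(mn^2)$ cost of \compdebts and the remaining bookkeeping are dominated. Your version is slightly more explicit about the per-line costs, but the structure and the final accounting are identical.
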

\begin{proof}
Termination of the algorithm is straightforward.
Regarding running time we begin with the loops starting in Lines 7 and 8 of \Cref{alg:dyn_phrag}. Both loops have length at most $m$. By \Cref{prop:compute_buying_time} we know that computing the buying time in Line 9 can be done in $\mathcal{O}(n^2)$. This (in conjunction with the loop in Line 8) dominates the running time of choosing a candidate with minimum $t_c$ (Line~10) and the loop in Line 11. Further, also the running time of the subroutine \compdebts of $\mathcal{O}(mn^2)$ is dominated by this.
Thus, the overall running time is in $\mathcal{O}(m^2n^2)$.
\end{proof}

Myopic Phragmén also uses the debts from \compdebts, but in a greedy way. Here, we first iterate over all candidates that are to be ranked and for each $c \in C(A) \setminus X$ do the following. We append $c$ to the sequence $X$, obtaining the sequence $X_c$, and then compute the debts for all voters according to $X_c$. We then rank those candidates highest for which the maximal voter debt is lowest. This, in a way, checks which candidate is most suited to be implemented next. Formally, for each $c \in C(A) \setminus X$ we first sort the debts $(d_i^c){i \in N}$ of all voters according to $X_c$ non-increasingly. Then we rank the candidates $c \in C(A) \setminus X$ by comparing the sorted vectors $(d_i^c)_{i \in N}$ lexicographically, ranking lexicographically smaller candidates (i.e., candidates that induce a lower maximum debt) higher.
\begin{algorithm}[h]
    \DontPrintSemicolon
    \SetKwInOut{Input}{Input}
    \SetKwInOut{Output}{Output}
    \newcommand\commfont[1]{\small\texttt{#1}}
    \SetCommentSty{commfont}
    \Input{approval profile $A$, sequence of implemented candidates $X$}
    \Output{ranking $r \in \mathcal{L}(C(A) \setminus X)$}
    $C \coloneqq C(A) \setminus X$ \tcp*{unranked candidates}
    $N \coloneqq N(A)$ \tcp*{voters}
    \ForAll{$c \in C$}{
        $X_c \coloneqq X.\mathrm{append(c)}$ \tcp*{append $c$ to $X$}
        $(d^c_i)_{i \in N} \coloneqq \mathrm{\compdebts}(A,X_c)$ \tcp*{compute debts w.r.t.\ $X_c$}
        $(\hat{d}^c_i)_{i \in N} \coloneqq$ sorted$\searrow$ $(d^c_i)_{i \in N}$\tcp*{sort debts  non-increasingly}
    }
    $r \coloneqq (c \in C \mid \text{sorted$\nearrow$ lexicographically by } (\hat{d}^c_i)_{i \in N})$\tcp*{\parbox[t]{13.5em}{\raggedright rank candidates w.r.t.\ max.\ incurred debt}}
    \Return{$r$}
    \caption{Myopic Phragmén}\label{alg:myopic_phrag}
\end{algorithm}
\begin{proposition}
Given an approval profile $A$ and a sequence of implemented candidates $X$, myopic Phragmén outputs a ranking of all not-yet-implemented candidates $c \in C(A) \setminus X$ in time $\mathcal{O}(m^2n^2)$.
\end{proposition}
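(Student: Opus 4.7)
The plan is to verify termination (which is immediate, as the only loops are bounded by $|C| \le m$) and then bound the running time of \Cref{alg:myopic_phrag} by analyzing each step in turn, comparing against the $\mathcal{O}(mn^2)$ bound on \compdebts established in \Cref{prop:compute_debts}.

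First I would examine one iteration of the \textbf{for}-loop starting at Line~3. The append operation in Line~4 takes constant time (or $\mathcal{O}(m)$ if a fresh copy is made, which is dominated). The call to \compdebts in Line~5 operates on the profile $A$ together with the sequence $X_c$ of length at most $m$; by \Cref{prop:compute_debts}, this takes time $\mathcal{O}(mn^2)$. Sorting the resulting debt vector in Line~6 takes $\mathcal{O}(n \log n)$, which is dominated by \compdebts. Hence one iteration of the outer loop runs in $\mathcal{O}(mn^2)$, and since the loop executes at most $m$ times, the total cost of Lines~3--6 is $\mathcal{O}(m^2 n^2)$.

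It remains to analyze the final sorting step in Line~7. We sort at most $m$ candidates by lexicographic comparison of their sorted debt vectors $(\hat{d}^c_i)_{i \in N}$, each of length $n$. A single lexicographic comparison thus costs $\mathcal{O}(n)$, and a standard comparison-based sort performs $\mathcal{O}(m \log m)$ comparisons, so this step runs in $\mathcal{O}(m n \log m)$ time. This is clearly dominated by $\mathcal{O}(m^2 n^2)$. Summing all contributions yields the claimed overall bound of $\mathcal{O}(m^2 n^2)$.

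The main ``obstacle'' here is really just bookkeeping: one has to make sure that the quadratic-in-$n$ factor coming from \compdebts is charged only once per candidate, and that the lexicographic sort at the end does not inflate the asymptotic cost. Since both observations are straightforward, I expect the proof to be short and mechanical, mirroring the running-time analysis already carried out for \Cref{alg:dyn_phrag}.
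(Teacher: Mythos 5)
Your proposal is correct and follows essentially the same route as the paper's own proof: bound the outer loop by $m$ iterations, charge $\mathcal{O}(mn^2)$ per iteration for the call to \compdebts via \Cref{prop:compute_debts}, note that the per-candidate sort in $\mathcal{O}(n\log n)$ and the final lexicographic sort in $\mathcal{O}(mn\log m)$ are both dominated, and conclude $\mathcal{O}(m^2n^2)$. Nothing is missing.
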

\begin{proof}
Termination of the algorithm is straightforward. 
Regarding running time we start with the loop in Line 3 of \Cref{alg:myopic_phrag} which has length at most $m$. Line 4 takes constant time and Line 5 can be performed in $\mathcal{O}(mn^2)$ operations, as shown in \Cref{prop:compute_debts}. Sorting the debts w.r.t.\ $X_c$ (Line 6) needs additional $n \log(n)$ operations, which is dominated by the computation of the debts.
Lastly, ranking all candidates lexicographically w.r.t.\ $(d_i^c)_{i \in N}$ can be done in $\mathcal{O}(n m\log(m))$ operations. This is however dominated by the loop starting in Line 3.
Thus the overall running time is in $\mathcal{O}(m^2n^2)$.
\end{proof}

We note that the (asymptotic) running time for computing myopic Phragmén can be improved based on the following observation. For two distinct candidates $c, c' \in C(A)$, the sequences of candidates $X_c$ and $X_{c'}$ only differ in the last entry. This is because $c$ and $c'$ both get appended to the same sequence $X$.
When computing the debts w.r.t.\ $X_c$ in Line 5 of \Cref{alg:myopic_phrag}, the computation of the debts for all candidates in $X \subseteq X_c$ is the same, independent of the choice of candidate $c$. Thus, it is possible to compute the debts for all voters according to $X$ first, before entering the loop in Line 3, by calling $\mathrm{\compdebts}(A,X)$ once, and reuse these debts for each candidate $c \in C$ in the loop. To compute the debts w.r.t.\ $X_c$ in Line 5, only one more iteration of the calculations of \compdebts is needed (i.e., Lines 5 to 15 in \Cref{alg:compute_debts}). This is possible in time $\mathcal{O}(n^2)$, bringing the overall running time of myopic Phragmén down into $\mathcal{O}(mn^2)$.

\newpage
\section{Additional Details on \Cref{sec:impl_mono}}\label{app:implmono}

Here we study notions of monotonicity in the context of dynamic ranking rules more in-depth.
\subsection{Both Phragmén variants fail monotonicity}
We first consider myopic Phragmén. Recall the adapted example from the proof of \Cref{thm:implmono} for myopic seqPAV.
\begin{align*}
    2 &\times \Set{a}, ~ 15 \times \Set{a,b}, ~ j + 6 \times \Set{b}, ~ 10 \times \Set{c}, \\
    10 &\times \Set{d}, ~ j + 6 \times \Set{a,c,d}, ~ j + 16 \times \Set{e}.
\end{align*}
We will now argue that myopic Phragmén also fails to satisfy $(h,\alpha)$-monotonicity on this example for all $j\in \mathbb{N}$. This rule computes for each candidate $c \in C\setminus X$ the debt of voters that is induced by buying the candidates in $X$ of the current iteration (in the order of implementation) and afterwards buying candidate $c$. All candidates then get ranked by comparing the so computed debts of the voters lexicographically.
In the first iteration $X = ()$ holds and thus myopic Phragmén is equivalent to AV. The ensuing ranking is $r^1 = (a,b,c,d,e)$, independent of $j$. Now assume that the DM implements candidate $x_1 = b$. In the second iteration thus each supporter of $b$ has a debt of $\frac{1}{21 + j}$, since there are $21 + j$ voters who approve $b$ and they all share the price of 1 credit for buying $b$ into the ranking. Note that in this example it is always favorable to balance the debt induced by a candidate equally among its supporters (this might not be the case if a candidate that is only supported by very few voters got implemented before). Thus for every candidate $c \in C\setminus X$ we can compute the debt each of its supporters would have if $c$ would be bought next by $s_c^{(2)} = \frac{1}{|N_c|}(1+\sum_{i\in N_c} d_i)$, where $d_i$ is voter $i$'s debt induced by $X$. 
Myopic Phragmén now ranks the candidates in non-increasing order of $s_c^{(2)}$ since this is the relevant part in comparing the debts of all voters by their maximum as described in the definition of myopic Phragmén. We can compute $s^{(2)}_a = \frac{36+j}{(23+j)(21+j)}$ and $s_c^{(2)} = s_d^{(2)} = s_e^{(2)} = \frac{1}{16+j}$. To prove that for all $j \in \mathbb{N}$ myopic Phragmén fails to satisfy group implementation monotonicity in this example we have to validate that $s_a^2 > s_e^2$ holds independent of~$j$, which can be done by basic calculus.

We can show the claim regarding dynamic Phragmén in a similar manner. For that we use the example as presented in the proof of \Cref{thm:implmono}. Here the computation gets far more technical as the debts that get compared during the ranking process of the candidates change in each step of the ranking and not just once per iteration as was the case for the myopic variant of the rule. Nevertheless, we obtain a system of inequalities that ensures that the candidates get ranked by dynamic Phragmén in a similar way as by dynamic sequential PAV (and thus dynamic Phragmén also violates the monotonicity axiom). We can then again check that these inequalities hold for all $j$.

\subsection{Weaker Version of Monotonicity}\label{app:weakmono}
Here we want to consider a weaker form of $(h,\alpha)$-monotinicity. Recall that in \Cref{sec:impl_mono} the group we considered for checking monotonicity did not approve of the implemented candidate but that there was always a (sizable) part of the electorate that approved of the implemented candidate and at least one of the candidates that the group we considered approved of. We will now show that we can not recover $(h,\alpha)$-monotonicity if we only consider election where there is no voter that approves of the implemented candidate and any of the candidates the group we consider approves of for the dynamic rules. On the other hand the myopic counterparts satisfy this weaker version of the axiom. Formally we define weak $(h,\alpha)$-monotonicity as follows.

\begin{definition}
For $h \geq 1$ and $\alpha \in (0,1]$, a dynamic ranking rule satisfies \emph{weak $(h,\alpha)$-monotonicity} if, for all profiles and all groups of voters $V \subseteq N$ of size $|V| \geq \alpha\cdot |N|$, the following holds. For every iteration $t$ where there is no $c \in \bigcup_{k \in V} A_k$ with $\{c,x_t\}\subseteq A_i$ for some $i \in N$
we have
\[ \avg_V(r^{t+1}_{\leq h}) \geq \avg_V(r^t_{\leq h}). \]
\end{definition}
We will now show that both dynamic rules fail to satisfy even this weaker axiom but then argue that the myopic rules satisfy it.

\paragraph{Dynamic sequential PAV.}
For dynamic seqPAV consider the following profile of 177 voters and 5 candidates.
\begin{align*}
    4 &\times \Set{a}, ~ 27 \times \Set{a,b}, ~ 27 \times \Set{b}, ~ 30 \times \Set{c}, \\
    9 &\times \Set{c,d}, 9 \times \Set{d}, ~ 36 \times \Set{a,d}, ~ 35 \times \Set{e}.
\end{align*}
The  rule outputs the ranking $r^1 = (a,b,c,e,d)$ in the first iteration. If we now assume that the DM implements candidate $x_1 = b$ again then dynamic sequential PAV outputs $r^2 = (d,a,e,c)$. Now consider the group of voters $V$ that consists of the 39 supporters of $c$ (i.e., the 30 voters that approve only of $c$ and the 9 voters that approve of $c$ and $d$). For $h =3$ we have $\avg_V(r^1_{\leq h}) = 1$ but $\avg_V(r^2_{\leq h}) = \frac{9}{39}$ which is a violation of the above axiom.
We can see that this again holds for larger $h$ by introducing clones of candidate $e$ and its 35 supporters. 

Similar to what we did in the proof of \Cref{thm:implmono} we can increase the relative size of $V$ without changing the rankings dynamic seqPAV outputs. For that consider the following adapted profile, where $j = 2 \cdot y$ for some $y \in \mathbb{N}$.
\begin{align*}
    4 &\times \Set{a}, ~ \left(2x + 27\right) \times \Set{a,b}, ~ 27 \times \Set{b}, ~ 30 \times \Set{c}, \\
    \left(x + 9\right) &\times \Set{c,d}, ~ 9 \times \Set{d}, ~ 36 \times \Set{a,d}, ~ \left(\frac{x}{2} + 35\right) \times \Set{e}.
\end{align*}
For $j \rightarrow \infty$ we see that $\frac{|V|}{|N|} \rightarrow \frac{2}{7}$. Combining this with the cloning of candidate $e$ and its $\frac{x}{2} + 35$ supporters we obtain an example where weak $(h,\alpha)$-monotonicity is violated for every $h \geq 3$ by a group of size nearly $\frac{2}{4+h}$.

\paragraph{Dynamic Phragmén.}
We use a similar example as above where we only add an additional clone of candidate $e$ and its 35 supporters.
\begin{align*}
    4 &\times \Set{a}, ~ 27 \times \Set{a,b}, ~ 27 \times \Set{b}, ~ 30 \times \Set{c}, \\
    9 &\times \Set{c,d}, ~ 9 \times \Set{d}, ~ 36 \times \Set{a,d}, \\
    35 &\times \Set{e_1}, ~ 35 \times \Set{e_2}.
\end{align*}
In the first iteration dynamic Phragmén outputs the ranking $r^1 = (a, c, b, e_1, e_2, d)$. If we again assume that the DM implements candidate $x_1 = b$ then it outputs $r^2 = (d, e_1, e_2, c, a)$. We again consider the group of voters $V$ that consists of the 39 supporters of $c$. For $h =3$ we have $\avg_V(r^1_{\leq h}) = 1$ but $\avg_V(r^2_{\leq h}) = \frac{9}{39}$ which is a violation of the above axiom.
This again holds for larger $h$ via an introduction of clones of candidate $e_1$ and its 35 supporters. 
To increase the relative size of the voter group $V$ consider the following adaption of the instance for an even natural number $j = 2\cdot y$ with $y \in \mathbb{N}$.
\begin{align*}
    4 &\times \Set{a}, ~ \left(2x + 27\right) \times \Set{a,b}, ~ 27 \times \Set{b}, ~ 30 \times \Set{c}, \\
    (x &+ 9) \times \Set{c,d}, ~ 9 \times \Set{d}, ~ 36 \times \Set{a,d}, \\
    \Bigl(\frac{x}{2} &+ 35\Bigr) \times \Set{e_1}, ~ \left(\frac{x}{2} + 35\right) \times \Set{e_2}.
\end{align*}
By checking the inequalities arising from this example in the same manner as described earlier for the stronger axiom we can verify that this example works out for all $j\rightarrow \infty$. Again combining this with the idea of cloning candidate $e_1$ and its $\frac{x}{2} + 35$ supporters we obtain an example where weak $(h,\alpha)$-monotonicity is violated by dynamic Phragmén for every $h \geq 3$ by a group of size nearly $\frac{2}{5+h}$.

\paragraph{Myopic Rules.}
Note that the two myopic rules satisfy weak monotonicity. To see this, consider a group of voters $V$ and any candidate $c$ that a voter in $V$ approves. If $c$ is not supported by any voter (not necessarily in $V$) that also supports the candidate that gets implemented next then the voting power or the debt (depending on the rule we are interested in) of $c$'s supporters  does not change from this iteration to the next. Since the voting power (or debt) of the supporters of other candidates can only decrease (or increase, respectively), $c$'s position in the ranking can not drop when going from one iteration to the next.

\section{Additional Details on \Cref{sec:exp_guarantee}}\label{app:exposure}
We first introduce the \emph{proportionality degree} as defined by \citet{Skow18a}.
\begin{definition}
Let a depth restriction $h \leq |C|$ and a profile $A$ be given and let $g: \mathbb{N} \times \mathbb{N} \rightarrow \mathbb{R}$.
We say set of voters $V\subseteq N$ is $\ell$-large w.r.t.\ $h$ if $|V| \geq \ell \cdot \frac{n}{h}$. A dynamic ranking rule $\R$ satisfies $h$-proportionality degree of $g$ if for all $\ell$-large sets of voters $V$ and all iterations $t\in \mathbb{N}$ the ranking $r^{t+1} = \R(A,X^t)$ satisfies
\[ \lambda^t(V) \geq g(\ell, h) \Rightarrow \avg_V(r_{\leq h}) \geq g(\ell, h). \]
Let $\mathcal{G}_h$ be the set of all such $h$-proportionality degrees of $\R$ then we say $\R$ satisfies proportionality degree of 
\[ d_\R(\ell) \coloneqq \min_h \sup_{g\in \mathcal{G}_h} g(\ell,h). \]

This means that the proportionality degree of $\R$ is the best guarantee on the above objective that holds for all depth restrictions $h \leq |C|$.
\end{definition}
Note that as was the case with the $\kappa$ functions used for group representation that did not only depend on $\alpha$ and $\lambda$, but also on the set $V$ and on the sequence $X$ of previously implemented candidates, we will simplify notation for the proportionality degree in a similar manner.
While $\kappa$-group representation as defined in the main text describes how far down a ranking a group of voters $V$ has to go to be guaranteed a certain happiness, the proportionality degree lower bounds the happiness of $V$ given a certain depth restriction (or more precisely the ensuing $\ell$-value of the voter group).

\subsection{Proportionality of Rankings for Dynamic Phragmén}
In this section we will prove bounds on the two measures of proportionality for dynamic Phragmén. We start with the proportionality degree. As before, let $d_i$ be the debt agent $i$ starts with in dynamic Phragmén and let $d_\text{avg} = \frac{1}{|V|} \sum_{i \in V} d_i$ be the average starting debt of agents in $V$.
\begin{theorem}\label{thm:propdeg_dphrag}
Given integer $h$, dynamic Phragmén satisfies proportionality degree of
\[d(\ell) \geq \frac{\ell - 1}{2} - \frac{m}{2} - \frac{s\cdot |V|}{4},\]
where $m = |\bigcup_{i\in V} A_i \cap X|$ and $s = \sum_{i\in V} (d_i - d_\text{avg})^2$.
\end{theorem}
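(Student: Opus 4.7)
The plan is to mimic the potential-function argument used by \citet{Skow18a} for (non-dynamic) sequential Phragmén, adapting it to account for the debts $\{d_i\}_{i\in N}$ inherited from phase~1. I would first switch to the load-balancing view of dynamic Phragmén: each voter $i$ starts phase~2 with a load equal to the debt $d_i$ produced by phase~1 (so $\sum_{i\in N}d_i=|X|$), and each newly ranked candidate $c$ adds a total load of $1$ distributed among its supporters in $N_c$ so as to minimize the resulting maximum load among already-positive-budget voters. Time then corresponds to the current maximum load, and if $t^\star$ is the maximum load after $h$ candidates have been ranked in phase~2, a global accounting gives $|V|\cdot t^\star \leq \sum_{i\in V}d_i + L_V(h)$, where $L_V(h)$ is the total load contributed by the first $h$ ranked candidates to voters in $V$. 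Since each approved candidate contributes at most $1$ to $L_V$, this ultimately lower-bounds $\sum_{i\in V}|A_i\cap r_{\leq h}|$ by a linear function of $t^\star$.

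Next, I would introduce the quadratic potential $\Phi_t = \sum_{i\in V}(\ell_i^t)^2$, where $\ell_i^t$ is the load of voter $i$ after $t$ candidates have been ranked in phase~2. At $t=0$, we have $\Phi_0 = |V|\cdot d_\mathrm{avg}^2 + s$, and moreover $m$ candidates in $X$ have already contributed to the debts of voters in $V$, which is the source of the $m/2$ correction. Following Skowron, I would then prove the key step-wise inequality: whenever a candidate approved by some voter in $V$ is ranked next, the decrease in the "slack" quantity $\tfrac{1}{2}(|V|t^2-\Phi_t)$ (or equivalently the increase of $\Phi_t$ relative to the squared maximum load) can be controlled by the marginal contribution to $V$'s satisfaction. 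Summing this step-wise inequality over all $h$ iterations of phase~2 and using convexity ($\sum_i (\ell_i^t)^2 \geq |V|\cdot (\text{avg load})^2$) yields
\begin{equation*}
    \mathrm{avg}_V(r_{\leq h}) \;\geq\; \frac{|V|\cdot t^\star - \tfrac{1}{2}(|V|\cdot d_\mathrm{avg}^2 + s) - m/2}{|V|} \cdot (\text{scaling factor}),
\end{equation*}
which, after simplifying the scaling, gives the claimed bound.

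Finally, to close the loop, I would relate $t^\star$ to $\ell$ using $\ell$-largeness. Since $|V|\geq \ell\cdot n/h$ and the total credit earned by all voters up to time $t^\star$ suffices to pay for the $h$ ranked candidates (after absorbing the initial debts), one obtains $t^\star \geq \ell/|V|\cdot h\cdot(1/n)\cdot|V|$-type inequality, which plugged into the previous display produces the stated $\tfrac{\ell-1}{2}-\tfrac{m}{2}-\tfrac{s|V|}{4}$ lower bound, with the $-1$ absorbing the gap between the $h$-th and $(h{+}1)$-th iteration.

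The main obstacle I anticipate is the bookkeeping in phase~1: unlike in the non-dynamic proof, where all voters begin with zero load, here the initial loads $d_i$ are heterogeneous, and phase~1 may have distributed some debts only to a strict subset of supporters (because of the rule that mimics load balancing). Getting the variance term $s$ to appear cleanly in $\Phi_0$ (rather than a looser surrogate) requires carefully using that phase~1 minimizes the maximum debt after each step, and hence the distribution of $d_i$ within $V$ is as balanced as the approval structure allows. Making this step tight, and ensuring that the $m$ candidates in $\bigcup_{i\in V}A_i\cap X$ contribute exactly the $-m/2$ correction (and not more), will be the delicate part of the argument.
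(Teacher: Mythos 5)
Your overall strategy is the same as the paper's: adapt Skowron's potential-function argument for sequential Phragmén, let the initial potential carry the debt variance $s$, and recover the $-m/2$ term by bounding $\sum_{i\in V} d_i$ via the at most $m$ implemented candidates that can charge voters in $V$. However, there is a genuine gap at the heart of the argument: the ``key step-wise inequality'' is asserted but never established, and it is the entire technical content of the proof. The paper works in the credit view with the \emph{centered} potential $\phi(\theta)=\sum_{i\in V}(p_i(\theta)-p_{\mathrm{avg}}(\theta))^2$, computes the exact change $\Delta_\phi = p_j(\theta)\bigl(2p_{\mathrm{avg}}(\theta)-\tfrac{(|V|+1)\,p_j(\theta)}{|V|}\bigr)$ when a voter $j\in V$ pays, and then uses $p_{\mathrm{avg}}(\theta)\le n/|V|\le h/\ell$ together with $\phi\ge 0$ and $\phi(0)=\bar s$ to turn the total potential drop into a lower bound on the number of payments made by $V$, hence on $\avg_V(r_{\le h})$. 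Your version replaces this with a display containing an unspecified ``(scaling factor),'' so the derivation of the claimed constants $\tfrac{\ell-1}{2}$, $\tfrac{m}{2}$ and $\tfrac{s\cdot|V|}{4}$ is simply not present. In addition, the accounting inequality $|V|\cdot t^\star\le\sum_{i\in V}d_i+L_V(h)$ is reversed: the maximum load times $|V|$ \emph{upper}-bounds the total load of $V$, so this step as written would fail; the paper instead lower-bounds the total credits \emph{spent} by $V$ (total credits earned, minus the at most $n$ credits $V$ can hold without being able to buy a commonly approved candidate, minus the initial debts).

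A smaller point: the difficulty you anticipate---that one must exploit the load-minimizing structure of phase~1 to make $s$ appear cleanly---is not actually an issue. In the paper's proof, $s$ enters only as the value of the potential at time $0$, whatever debts phase~1 happens to produce; no balancedness property of the phase-1 distribution is used (only $\sum_{i\in V}d_i\le m$, up to scaling). Conversely, if you insist on the uncentered potential $\Phi_t=\sum_{i\in V}(\ell_i^t)^2$ with $\Phi_0=|V|\cdot d_{\mathrm{avg}}^2+s$, you still owe an explanation of how the extra $|V|\cdot d_{\mathrm{avg}}^2$ term is cancelled so that only the variance $s$ survives in the final bound; your sketch does not address this.
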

If $X = ()$ then $m=s=0$ and we have the proportionality degree that was also proved in \citep{Skow18a} for the non-dynamic setting.
We prove the results by a similar potential function approach as provided for the respective non-dynamic result given by \citet{Skow18a} while taking into account the added complexity of the dynamic setting.
\begin{proof}
Let the depth restriction $h \in \mathbb{N}$ and an iteration $t \in \mathbb{N}$ be given. We set $g(\ell, h) = \frac{\ell - 1}{2} - \frac{m}{2} - \frac{s\cdot |V|}{4}$. To make the prove more consistent with \citep{Skow18a}, assume w.l.o.g.\ that a candidate costs $n$ instead of 1 credit.
Towards a contradiction we assume that there is a set of candidates $r^t_{\leq h}$ that is ranked in the first $h$ positions of the ranking put out by dynamic Phragmén at iteration $t$ and a group of voters $V$ such that $\lambda^t \geq g(\ell, h)$ but $\avg_V(r^t_{\leq h}) < g(\ell, h)$.
We will now investigate the ranking process of dynamic Phragmén more closely. In order to do that we imagine the process of buying candidates and waiting for credits for the voters as a time-dependent process. (Note that this process reflects the ranking of $r^t$ for a fixed iteration $t \in \mathbb{N}$, we denote the time elapsing during this process by $\theta$.) For each point in time $\theta > 0$ of the ranking process at iteration $t$ we define $p_i(\theta)$ to be the amount of credits voter $i \in V$ possesses at that moment and $p_\text{avg}(\theta) = \frac{1}{|V|} \sum_{i \in V} p_i(\theta)$. 
With that we can define the potential function 
\[ \phi(\theta) = \sum_{i \in V} (p_i(\theta) - p_\text{avg}(\theta))^2.\]
In contrast to the proof given by \citet{Skow18a}, the voters in our setting might start with negative money, that is the debt assigned to them by the dynamic Phragmén rule because of implemented candidates in $X$. This means that at time $\theta=0$ we might have $p_i(\theta) < 0$ for some voters and thus the potential function at time $\theta=0$ is not 0 but might be some positive real number $\bar{s} = \phi(0) = \sum_{i\in V} (p_i(0) - p_\text{avg}(0))^2 = n^2 \cdot s$. (Here, $\bar{s}$ corresponds to $s = \sum_{i \in V}(d_i - d_\text{avg})^2$ scaled according to the new costs of $n$ credits per candidate.)

The proof now proceeds as the proof by \citet{Skow18a} for Theorem 2. After $h$ time units in the ranking process at most $h\cdot n$ credits can be amassed by all voters and the procedure can not stop before that point in time. Until then, the voters in $V$ have at most $|V| \cdot h$ credits.
Now fix a point in time $\theta$ at which a candidate supported by some voters in $V$ is bought and a voter $j \in V$ that pays for this candidate. That means that the average $p_\text{avg}(\theta)$ decreases by $\frac{p_j(\theta)}{|V|}$.
The calculation of $\Delta_\phi$ can be done in the same way as in \citep{Skow18a}, as it does not depend on the starting value of $\phi$. That gives us for a point in time $\theta$ in which a candidate is bought the following.
\begin{align*}
    \Delta_\phi &= \sum_{i \in V, j\neq i} \left( p_i(\theta) - \left(p_\text{avg}(\theta) \frac{p_j(\theta)}{|V|}\right)\right)^2 \\
    &\phantom{M} + \left( 0 - \left( p_\text{avg}(\theta) \frac{p_j(\theta)}{|V|}\right)\right)^2 - \sum_{i\in V} (p_i(\theta) - p_\text{avg}(\theta))^2 \\
    &= \sum_{i \in V} \left( p_i(\theta) - \left( p_\text{avg} - \frac{p_j(\theta)}{|V|} \right)\right)^2 \\
    &\phantom{M} + \left( p_\text{avg} - \frac{p_j(\theta)}{|V|} \right)^2 - \left( p_j(\theta) - \left( p_\text{avg} - \frac{p_j(\theta)}{|V|} \right)\right)^2 \\
    &\phantom{M} - \sum_{i\in V} (p_i(\theta) - p_\text{avg}(\theta))^2 \\
    &= \sum_{i \in V} \frac{p_j(\theta)}{|V|} \left( 2\cdot p_i(\theta) - 2\cdot p_\text{avg}(\theta) + \frac{p_j(\theta)}{|V|} \right)^2 \\
    &\phantom{M} - p_j(\theta)^2 + 2 \cdot p_j(\theta) \cdot \left( p_\text{avg}(\theta) - \frac{p_j(\theta)}{|V|} \right),
\end{align*}
where in the last step we used the binomial formulas. By definition we have 
\[ \sum_{i \in V} (2 \cdot p_i(\theta) - 2\cdot p_\text{avg}(\theta)) = 0 \]
which lets us conclude
\begin{align*}
    \Delta_\phi &= \frac{p_j(\theta)^2}{|V|} + p_j(\theta) \cdot \left(2\cdot p_\text{avg}(\theta)  - \frac{p_j(\theta)}{|V|} - p_j(\theta) \right) \\
    &= p_j(\theta) \cdot \left( 2 \cdot p_\text{avg}(\theta)  - \frac{p_j(\theta)(|V| + 1)}{|V|}\right). 
\end{align*}

Accordingly with \citet{Skow18a}, we can observe that at each time $\theta$ we have $p_\text{avg}(\theta) \leq \frac{n}{|V|} \leq \frac{h}{\ell}$. This is because otherwise voters in $V$ would have more than $n$ units of credits left and would have been able to buy a candidate they approve of at an earlier time of the ranking process. Using that fact and setting $y_{\theta,j} = p_j(\theta) - \frac{2 |V|}{|V| +1} \cdot \frac{h}{\ell}$ we obtain
\begin{align*}
    \Delta_\phi &\leq p_j(\theta) \cdot \left( \frac{2h}{\ell}  - \frac{y_{\theta,j}(|V| + 1)}{|V|} - \frac{2h}{\ell}\right) \\
    &= - y_{\theta,j}^2 \frac{|V|+1}{|V|} - 2y_{\theta,j} \frac{h}{\ell} \\
    &\leq - 2y_{\theta,j} \frac{h}{\ell}.  
\end{align*}
Following \citet{Skow18a} again, if $x_{\theta,j} > 0$, then $\phi$ decreases by at least $2|y_{\theta,j}| \cdot \frac{h}{\ell}$ and if $y_{\theta,j} \leq 0$, then $\phi$ increases by at most $2|y_{\theta,j}| \cdot \frac{h}{\ell}$. Since the potential value is always non-negative and starts at $s$, the net-change~$\sum \Delta_\phi$ has to be greater or equal to $-\bar{s}$, i.e.,
\[
-\bar{s} \leq \sum \Delta_\phi \leq \sum_{(\theta,j)\in \mathbb{R}\times V: j \text{ pays at time } \theta} -2 y_{\theta,j} \cdot \frac{h}{\ell}.
\]
Let 
\begin{align*}
    z &\coloneqq |\{(\theta,j)\in \mathbb{N}\times V: j \text{ pays at time } \theta\}| \\
    &\leq \sum_{i\in V} |r_{\leq h} \cap A_i|
\end{align*}
be the number of single payments the voters in $V$ issued during the whole procedure (which is less than or equal to the total satisfaction of the group $V$).
Rearranging the terms of the bound on $\bar{s}$ above and plugging in the definitions of $y_{\theta,j}$ and $z$ yields 
\begin{align*}
    \frac{\bar{s}}{2}\cdot \frac{\ell}{h} &\geq \sum_{(\theta,j)} p_j(\theta) - \frac{2|V|}{|V| + 1}\cdot \frac{h}{\ell} \\ 
    &= \sum_{(\theta,j)} p_j(\theta) - z\cdot \frac{2 |V|}{|V| + 1} \cdot \frac{h}{\ell},
\end{align*}
where $\sum_{(\theta,j)} p_j(\theta)$ is the total amount of credits voters in $V$ spend for candidates they approved. Our goal now is to lower bound $z$ and with that to lower bound the average satisfaction of voters in $V$.

For this, first note that we know that $\sum_{(\theta,j)} p_j(\theta) \geq |V| \cdot (h + |X|) - n - \sum_{i \in V} d_i$. Plugging this in the above equation and rearranging terms we obtain
\begin{align*}
    z &\geq \frac{|V| + 1}{2 |V|} \cdot \frac{\ell}{h} \cdot \left(|V|(h+|X|) -n -\frac{\bar{s}\cdot \ell}{2h} - \sum_{i\in V} d_i \right) \\
    &\geq \frac{1}{2} \cdot \frac{\ell}{h} \cdot \left(|V|(h+|X|) -n -\frac{\bar{s}\cdot \ell}{2h} - \sum_{i\in V} d_i \right) \\
    &\geq \frac{1}{2} \left( |V|(\ell -1) + \frac{|V|\cdot |X| \cdot \ell}{h} - \frac{\bar{s} \cdot \ell^2}{2h^2} - \frac{\ell}{h} \cdot \sum_{i\in V} d_i \right),
\end{align*}
where in the last step we used the fact that $n\cdot \frac{\ell}{h} \leq |V|$. We can now use this to get the desired lower bound on the average satisfaction of the voter group $V$.
\begin{align*}
    \frac{1}{|V|} &\sum_{i\in V} |r_{\leq h} \cap A_i| \geq \frac{1}{|V|} \cdot y\\
    &\geq \frac{\ell -1}{2} + \frac{|X|\cdot \ell}{2h} - \frac{1}{2n}\cdot \sum_{i\in V} d_i - \frac{\bar{s} \cdot \ell}{4n\cdot h} \\
    &\geq \frac{\ell -1}{2} - \frac{m}{2} - \frac{\bar{s}\cdot |V|}{4n^2} \\
    &= \frac{\ell -1}{2} - \frac{m}{2} - \frac{s\cdot |V|}{4},
\end{align*}
where we again used $n\cdot \frac{\ell}{h} \leq |V|$ and the fact that $m = |\bigcup_{i\in V} A_i \cap X| \geq \frac{1}{n} \sum_{i \in V} d_i$.
This contradicts our assumption that $\avg_V(r^t_{\leq h}) < g(\ell, h)$ and concludes the proof.
\end{proof}

This result is independent of the iteration $t$ which makes it rather strong. On the other hand the definition of proportionality degree (and thus this result) rely on a fixed $h$ to determine $\ell$-large groups of voters. But it is possible to translate the proportionality degree defined by \citet{Skow18a} into $\kappa$-group representation as defined by \citet{SLB+17a}. \citet{Skow18a} mentions this connection but to the best of our knowledge this is the first explicit translation from one of the proportionality measures to the other.
\begin{lemma}\label{lem:PDtoGR}
Let $\R$ be a (dynamic) ranking rule which satisfies proportionality degree of $d_\R(\ell)$ for all $\ell \in \mathbb{Q}$. Then $\R$ satisfies $\kappa$-group representation for 
\[ \kappa (\alpha, \lambda) = \left\lceil \frac{d_\R^{-1} (\lambda)}{\alpha} \right\rceil . \]
\end{lemma}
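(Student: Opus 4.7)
The plan is to show that $\kappa(\alpha,\lambda) = \lceil d_\R^{-1}(\lambda)/\alpha\rceil$ is precisely the depth at which the proportionality-degree guarantee for the parameter $\ell = d_\R^{-1}(\lambda)$ becomes applicable to every $\alpha$-fraction of voters. Concretely, I would fix a profile $A$, an iteration $t$, and a group $V \subseteq N$ with $|V| \geq \alpha n$ and $\lambda^t(V) \geq \lambda$, and then derive $\avg_V(r^t_{\leq \kappa(\alpha,\lambda)}) \geq \lambda$ directly from the proportionality-degree definition, by choosing $\ell$ and $h$ appropriately and verifying that the relevant largeness threshold is met.

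The key step is the parameter translation: set $\ell := d_\R^{-1}(\lambda)$ (so by construction $d_\R(\ell) = \lambda$) and set $h := \kappa(\alpha,\lambda) = \lceil \ell/\alpha\rceil$. First, I would verify that $V$ is $\ell$-large with respect to $h$, which reduces to the arithmetic check
\[ \frac{\ell \cdot n}{h} \;\leq\; \frac{\ell \cdot n}{\ell/\alpha} \;=\; \alpha n \;\leq\; |V|, \]
using only $h \geq \ell/\alpha$. Second, by the min-sup definition of $d_\R(\ell)$, we have $\sup_{g\in\mathcal{G}_h} g(\ell,h) \geq d_\R(\ell) = \lambda$ at our chosen depth $h$, so there is a proportionality-degree function $g \in \mathcal{G}_h$ with $g(\ell,h) = \lambda$. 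Plugging our $V$ and $t$ into the defining implication of $g \in \mathcal{G}_h$, the hypothesis $\lambda^t(V) \geq \lambda = g(\ell,h)$ together with the $\ell$-largeness of $V$ yields $\avg_V(r^t_{\leq h}) \geq g(\ell,h) = \lambda$, which is exactly the group-representation conclusion.

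The main subtlety—and the only step that is not routine—is the passage from the min-sup definition of $d_\R(\ell)$ to an actual $g \in \mathcal{G}_h$ attaining value $\lambda$ at $(\ell,h)$. The supremum need not be attained in general, but in every concrete instance in this paper the proportionality degree is produced from an explicit closed-form bound (e.g., the one in \Cref{thm:propdeg_dphrag}), so a suitable $g$ can be exhibited by restricting that bound to value $\lambda$ at $(\ell,h)$. Alternatively, an $\varepsilon$-argument works: for every $\varepsilon > 0$ pick $g_\varepsilon \in \mathcal{G}_h$ with $g_\varepsilon(\ell,h) \geq \lambda - \varepsilon$, apply the above reasoning at threshold $\lambda - \varepsilon \leq \lambda^t(V)$, and let $\varepsilon \to 0$ (using that $\avg_V(r^t_{\leq h})$ takes values in a discrete set bounded by $|C|$). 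With this technicality handled, the two numerical checks above give the lemma.
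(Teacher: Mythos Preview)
Your proof is correct and follows essentially the same route as the paper: set $\ell = d_\R^{-1}(\lambda)$ and $h = \lceil \ell/\alpha\rceil$, verify the $\ell$-largeness of $V$ via the same arithmetic inequality, and invoke the proportionality-degree guarantee at depth $h$. The paper's argument is more terse and simply ``applies the proportionality degree'' without unpacking the min--sup definition; your discussion of the supremum-attainment subtlety (and the $\varepsilon$-workaround) is extra rigor that the paper omits but that does not change the underlying strategy.
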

Note that normally the proportionality degree is defined for $\ell \in \mathbb{N}$. For technical reasons we need the function to hold for all rational $\ell$ which is however covered by our proof of \Cref{thm:propdeg_dphrag}.
\begin{proof}
Given a group of voters $V$ with proportion $\alpha = \frac{|V|}{|N|}$ and adapted cohesiveness $\lambda^t = |\bigcap_{i \in V} A_i \setminus X |$. Let $h = \left\lceil 1/\alpha \cdot d_\R^{-1}(\lambda^t)\right\rceil$ then by construction $V$ is $d_\R^{-1}(\lambda^t)$-large w.r.t. $h$ since
\[ |V| = \alpha \cdot |N| = \frac{|N|\cdot d^{-1}_\R(\lambda^t)}{(1/\alpha)\cdot d^{-1}_\R(\lambda^t)} \geq \frac{|N|}{h} \cdot d^{-1}_\R(\lambda^t). \]
Thus we can apply the proportionality degree with $\ell = d_\R^{-1}(\lambda^t)$ and obtain an average satisfaction for $V$ of
\[ \avg_V(r_{\leq h}) \geq d_\R(d_\R^{-1}(\lambda^t)) = \lambda^t. \]
Thus $\R$ satisfies $\kappa$-group representation for $\kappa(\alpha,\lambda^t) = h = \left\lceil1/\alpha \cdot d_\R^{-1}(\lambda^t) \right\rceil$.
\end{proof}
Plugging the proportionality degree from \Cref{thm:propdeg_dphrag} into this lemma we obtain the desired bound on the group representation as mentioned in \Cref{thm:gr_dphrag}.

\subsection{Proportionality of Rankings for Dynamic seqPAV}
We now consider dynamic seqPAV. We first prove the before mentioned bound on the group representation and afterwards consider the proportionality degree.
The proof of \Cref{thm:gr_dseqpav} follows the proof presented by \citet{SLB+17a} for the according result in the non-dynamic setting.
\begin{proof}[Proof of \Cref{thm:gr_dseqpav}]
Fix some $\alpha \in (0,1]$, $\lambda \in \mathbb{N}$ and profile such that in some iteration $t\in \mathbb{N}$ there exists a group of voters $V\subseteq N$ with $|V| \geq \alpha \cdot n$ and $\lambda^t(V) \geq \lambda$. Let $r$ be the ranking dynamic seqPAV outputs in iteration $t$ given the already implemented candidates $X$. Set $h \coloneqq \left\lceil \frac{2(\lambda + \avg_V(X) + 1)^2}{\alpha^2} \right\rceil$. Towards a contradiction assume that $\avg_V(r_{\leq h}) < \lambda$.
Now set 
\begin{align*}
z &\coloneqq |V| \cdot (\avg_V(r_{\leq h}) + \avg_V(X)) \\
&< |V| \cdot (\lambda + \avg_V(X)).
\end{align*}
In every step $k \in [h]$ of the ranking procedure of dynamic seqPAV (in iteration $t$) there exists at least one candidate $c \in \bigcap_{i \in V} A_i \setminus (X \cup r_{\leq h})$, i.e., a candidate that is neither ranked nor implemented but approved by all voters in $V$.
We now consider a step $k \in [h]$ of the ranking process of dynamic seqPAV. Let 
\[ a_i(k) \coloneqq |A_i \cap r_{\leq h}| + |A_i \cap X| ~~\text{and}~~ T(k) \coloneqq \sum_{i \in N} \frac{1}{a_i(k) + 1}. \]
Then $a_i(k) \leq a_i(k+1)$ and $T(k) \geq T(k+1)$ for all $k\in [h]$ with $n \geq T(1) \geq T(2) \geq ... \geq T(h+1) \geq 0$.
Further, for each $k \in [h]$ it holds 
\begin{align*}
    \sum_{i \in V} a_i(k) &= \sum_{i \in V}|A_i \cap r_{\leq k}| + |A_i \cap X| \\
    &\leq \sum_{i \in V}|A_i \cap r_{\leq h}| + |A_i \cap X| = z
\end{align*}
and thus it holds that
\begin{align*}
    \frac{z+|V|}{|V|} &\geq \frac{\sum_{i\in V} a_i(k) + |V|}{|V|} \\
    &= \frac{\sum_{i\in V} (a_i(k) +1)}{|V|} \\
    &\geq \frac{|V|}{\sum_{i \in V} \frac{1}{a_i(k) + 1}},
\end{align*}
where in the last step we used the arithmetic mean-harmonic mean inequality. Taking the inverse of this yields
\begin{align*}
    \sum_{i\in V} \frac{1}{a_i(k) + 1} &\geq \frac{|V|^2}{|V| + z} \\
    &> \frac{|V|^2}{|V| + |V| \cdot(\lambda+\avg_V(X)} \\
    &= \frac{|V|}{\lambda + \avg_V(X) + 1}.
\end{align*}
Now let $V'$ be the group of voters supporting candidate $c'$ that got ranked in round $k$ instead of $c$. Since dynamic seqPAV favored $c'$ over $c$ we know that
\begin{equation}\label{eq:gr_dseqpav_1}
   \sum_{i \in V'} \frac{1}{a_i(k) + 1} \geq \sum_{i \in V} \frac{1}{a_i(k) + 1} > \frac{|V|}{\lambda + \avg_V(X) + 1}. 
\end{equation}
We can now obtain 
\begin{align*}
    T(k) - T(k+1) &= \sum_{i \in V'} \left(\frac{1}{a_i(k) + 1} - \frac{1}{a_i(k) + 2} \right) \\
    &= \sum_{i \in V'} \left(\frac{1}{(a_i(k) + 1)(a_i(k) + 2)}\right) \\
    &\geq \sum_{i \in V'} \left(\frac{1}{2(a_i(k) + 1)^2}\right).
\end{align*}
Using the Cauchy-Schwarz inequality here we can bound this in the following way.
\begin{align*}
    \sum_{i \in V'} \left(\frac{1}{2(a_i(k) + 1)^2}\right) &\geq \frac{1}{2 |V'|} \left( \sum_{i \in V'} \frac{1}{a_i(k) + 1} \right)^2 \\
    &> \frac{1}{2 n} \left(\frac{|V|}{\lambda + \avg_V(X) + 1}\right)^2 \\
    &\geq \frac{\alpha^2\cdot n}{2(\lambda + \avg_V(X) + 1)^2},
\end{align*}
where in the second step we used \Cref{eq:gr_dseqpav_1}.
This holds for each $k \in [h+1]$ and thus we have
\begin{align*}
    T(1) - T(h+1) &= \sum_{k \in [h]} T(k) - T(k+1) \\
    &> h\cdot \frac{\alpha^2\cdot n}{2(\lambda + \avg_V(X) + 1)^2} \geq n.
\end{align*}
This yields $T(k+1) < T(1) - n \leq 0$ which is a contradiction.
\end{proof}
Note that it is not immediately clear how to convert a bound on the group representation into a result on the proportionality degree (i.e. whether the inverse version of \Cref{lem:PDtoGR} is possible). Still, using the same approach as in the proof above we are able to also prove a bound on the proportionality degree of dynamic seqPAV.

\begin{theorem}\label{thm:propdeg_dseqPAV}
Dynamic seqPAV satisfies an $h$-proportionality degree of 
\[ g(\ell,h) = \ell \cdot \sqrt{\frac{1}{2h}} - \avg_V(X) -1. \]
\end{theorem}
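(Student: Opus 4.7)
My approach is to adapt the potential-function argument from the proof of \Cref{thm:gr_dseqpav} directly to the proportionality-degree setting: I would parameterize through $\ell$ and $h$ rather than through $\alpha$ and $\lambda$, treating the target value $g = g(\ell, h) = \ell/\sqrt{2h} - 1 - \avg_V(X)$ as a (not necessarily integer) real threshold. If $g \leq 0$ the claim is vacuous, so I assume $g > 0$, fix an $\ell$-large group $V$ (so $|V| \geq \ell\cdot n/h$) with $\lambda^t(V) \geq g$, let $r = \R(A, X^t)$ be the ranking produced by dynamic seqPAV, and suppose for contradiction that $\avg_V(r_{\leq h}) < g$.

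First, I would verify that commonly approved and unranked candidates remain available throughout the first $h$ steps: because $|\bigcap_{i \in V} A_i \cap r_{\leq h}| \leq \avg_V(r_{\leq h}) < g \leq \lambda^t(V) = |\bigcap_{i \in V} A_i \setminus X|$, a candidate in $\bigcap_{i \in V} A_i \setminus (X \cup r_{\leq k-1})$ remains at every step $k \leq h$, justifying the inductive premise. Then I would set up the same potential $T(k) = \sum_{i \in N} 1/(a_i(k) + 1)$ with $a_i(k) = |A_i \cap r_{\leq k-1}| + |A_i \cap X|$, use $\sum_{i \in V} a_i(k) \leq z := |V|(\avg_V(r_{\leq h}) + \avg_V(X)) < |V|(g + \avg_V(X))$, and apply the AM--HM inequality to obtain $\sum_{i \in V} 1/(a_i(k)+1) > |V|/(g + \avg_V(X) + 1) = |V|\sqrt{2h}/\ell$. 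Since dynamic seqPAV picks the candidate of maximum marginal contribution and a commonly approved option is always available, the supporters $V'$ of the candidate chosen at step $k$ satisfy the same bound, and Cauchy--Schwarz on the sum of reciprocal squares then yields $T(k) - T(k+1) > h|V|^2/(n\ell^2)$.

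Summing the telescoping bound over $k = 1, \ldots, h$ and using $T(1) \leq n$ and $T(h+1) \geq 0$ gives $n > h^2|V|^2/(n\ell^2)$, which rearranges to $h|V| < n\ell$. This contradicts $\ell$-largeness $h|V| \geq n\ell$, completing the proof. The only delicate point---essentially the only thing beyond a verbatim replay of the proof of \Cref{thm:gr_dseqpav}---is tracking that the strict inequality originating in the contradiction hypothesis $\avg_V(r_{\leq h}) < g$ propagates cleanly through each step, so that the final comparison is strict enough to conflict with the non-strict $\ell$-largeness bound; this just requires a careful ``$<$ versus $\leq$'' check in the chain of inequalities.
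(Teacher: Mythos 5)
Your proposal is correct and follows essentially the same route as the paper's proof: the same potential $T(k)$, the same AM--HM bound on $\sum_{i\in V} 1/(a_i(k)+1)$, the same comparison with the chosen candidate's supporter set $V'$, and the same Cauchy--Schwarz telescoping estimate (your per-step decrement $h|V|^2/(n\ell^2)$ reduces to the paper's $n/h$ upon substituting $|V|\ge \ell n/h$, so landing the contradiction on $\ell$-largeness rather than on $T(h+1)\ge 0$ is only a cosmetic difference). Your explicit check that a commonly approved, unranked, unimplemented candidate survives through step $h$ is a small but welcome tightening of a step the paper merely asserts.
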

\begin{proof}
Fix some $h \leq |C|$ and let $r$ be the ranking dynamic seqPAV outputs for a given profile and set of already implemented candidates $X$.
Further let $V$ be a group of voters and $z \coloneqq \ell \cdot \sqrt{\frac{1}{2h}} - \avg_V(X) -1$. Towards a contradiction assume that $V$ is $\ell$-large w.r.t.\ $h$ and has cohesiveness $|\bigcap_{i \in V} A_i \setminus X| \geq z$ but the average satisfaction of voters in $V$ is only $\avg_V(r_{\leq h}) < z$. Thus there exists at least one candidate $c \in \bigcap_{i \in V} A_i \setminus (X \cup r_{\leq h})$, i.e., a candidate that is neither ranked nor implemented but approved by all voters in $V$.
We now consider the steps $k \in [h]$ of the ranking process of dynamic seqPAV. We again use 
\[ a_i(k) \coloneqq |A_i \cap r_{\leq h}| + |A_i \cap X| \qquad \text{and} \qquad T(k) \coloneqq \sum_{i \in N} \frac{1}{a_i(k) + 1}. \]
Then $a_i(k) \leq a_i(k+1)$ and $T(k) \geq T(k+1)$ for all $k\in [h]$ with $n \geq T(1) \geq T(2) \geq ... \geq T(h+1) \geq 0$.
Further, for each $k \in [h + 1]$ it holds using the arithmetic mean-harmonic mean inequality 
\[ \sum_{i\in V} \frac{1}{a_i(k) + 1} \geq |V|^2 \frac{1}{\sum_{i\in V}(a_i(k) + 1)} \]
and 
\begin{align*}
    \frac{1}{|V|^2} &\sum_{i \in V} a_i(k) + 1 \\
    &= \frac{1}{|V|^2} \sum_{i \in V} \left( |A_i \cap r_{\leq h}| + |A_i \cap X| + 1 \right) \\
    &= \frac{1}{|V|^2} \sum_{i \in V} (|A_i \cap X| + 1) + \frac{1}{|V|} \avg_V(r_{\leq h})\\
    &< \frac{1}{|V|^2} \sum_{i \in V} (|A_i \cap X| + 1) \\
    & \phantom{VV} + \frac{1}{|V|} \left( \ell \cdot \sqrt{\frac{1}{2h}} - \frac{1}{|V|} \cdot \sum_{i \in V} (|A_i \cap X| + 1) \right)\\
    &= \frac{1}{|V|} \cdot \frac{\ell}{h} \cdot \sqrt{\frac{h}{2}} = \frac{1}{n} \cdot \sqrt{\frac{h}{2}}.
\end{align*}
Plugging the second inequality into the first we obtain
\[ \sum_{i\in V} \frac{1}{a_i(k) + 1} > n \cdot \sqrt{\frac{2}{h}}. \]
Again, let $V'$ be the group of voters supporting candidate $c'$ that got ranked in round $k$ instead of $c$. Since dynamic seqPAV favored $c'$ over $c$ we know that
\begin{equation}\label{eq:propdeg_dseqpav_1}
   \sum_{i \in V'} \frac{1}{a_i(k) + 1} \geq \sum_{i \in V} \frac{1}{a_i(k) + 1} > n \cdot \sqrt{\frac{2}{h}}. 
\end{equation}
As in the previous proof we obtain using the Cauchy-Schwarz inequality
\begin{align*}
    T(k) - T(k+1) &\geq \frac{1}{2 |V'|} \left( \sum_{i \in V'} \frac{1}{a_i(k) + 1} \right)^2 \\
    &> \frac{1}{2 n} \cdot n \cdot \sqrt{\frac{2}{h}} = \frac{n}{h},
\end{align*}
where in the second step we used \Cref{eq:propdeg_dseqpav_1}.
This holds for each $k \in [h+1]$ and thus we have
\[ T(1) - T(h+1) = \sum_{k \in [h]} T(k) - T(k+1) > h \cdot \frac{n}{h} = n. \]
This yields $T(k+1) < T(1) - n \leq 0$ which is a contradiction.
\end{proof}
Note that this is the first known such bound in closed form for a seqPAV-variant. There are explicit bounds for small $h$ for the non-dynamic version provided by \citet{Skow18a}. The added generality of our closed form comes at the price of less accuracy when compared to those bounds. While \citet{Skow18a} shows that for $h=20$ the proportionality degree of (non-dynamic) seqPAV is greater or equal to $0.7503 \cdot \ell$ our bound only gives $0.1581\cdot \ell -1$ (when considering the non-dynamic case where $X=()$).

\subsection{Proportionality of Rankings for Myopic Rules}
We first provide a proof of \Cref{thm:gr_lazy} which uses the same counterexample as is provided by \citet{SLB+17a} for their Theorem~2.
\begin{proof}[Proof of \Cref{thm:gr_lazy}]
The first negative result follows simply by noting that both myopic rules are equal to AV if $X = ()$ and referring to the negative result for AV provided by \citet{SLB+17a}. We prove the second negative result by means of a counterexample which is again an adapted version of the one given by \citet{SLB+17a} for AV. Assume $\lambda^t, m$ and a function $\kappa(\alpha,\lambda)$ are given, set $\alpha \leq \frac{m+1}{m+2}$ and $h = \kappa(\alpha,\lambda^t)$.
Let $A = A_X \dotcup A_V \dotcup A_G$ with $|A_X| = m$ and $|A_V|=|A_G|=h$ be a set of candidates and $N = V \dotcup G$ an electorate composed of the disjoint union of voter groups $V$ and $G$ with $|V| < \alpha |N|$. Let the profile be such that all voters in $V$ approve of all candidates in $A_X$ and $A_V$ and all voters in $G$ approve of all $A_G$ and set $X = A_X$. Then both myopic rules rank all $a \in A_G$ higher than each candidate in $A_V$ and thus $\avg_V(r_{\leq h}) = 0$.
\end{proof}
Since AV and both myopic rules do not allow for any bound on $\kappa$-group representation for groups of voters that are not already a majority of the electorate it follows directly from \Cref{lem:PDtoGR} that they do not allow any bound on the proportionality degree for those groups either.
\begin{corollary}
AV does not satisfy any bound on the proportionality degree for groups of size $\alpha \leq \frac{1}{2}$.
Myopic seqPAV and myopic Phragmén do not satisfy any bound on the proportionality degree for groups of size $\alpha \leq \frac{m+1}{m+2}$, where $m \coloneqq |\bigcup_{i \in V} A_i \cap X|$.
\end{corollary}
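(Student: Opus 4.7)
The plan is to apply the contrapositive of \Cref{lem:PDtoGR}, which promotes a proportionality degree bound into a $\kappa$-group representation bound via $\kappa(\alpha,\lambda) = \lceil d_\R^{-1}(\lambda)/\alpha \rceil$. Since the negative results on $\kappa$-group representation already rule out the existence of any such $\kappa$ in the specified ranges of $\alpha$, no proportionality degree function $d_\R$ can exist on those ranges either.

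Concretely, for AV I would suppose toward contradiction that AV satisfies some proportionality degree $d_{\mathrm{AV}}(\ell)$ (for all rational $\ell$) on voter groups of size $\alpha \leq 1/2$. Then \Cref{lem:PDtoGR} supplies a concrete $\kappa$-group representation bound for AV on exactly that range, namely $\kappa(\alpha,\lambda) = \lceil d_{\mathrm{AV}}^{-1}(\lambda)/\alpha \rceil$. This contradicts Theorem~2 of \citet{SLB+17a}, which states that AV fails $\kappa$-group representation for \emph{every} choice of $\kappa$ whenever $\alpha \leq 1/2$. The argument for myopic seqPAV and myopic Phragmén is structurally identical: a hypothetical proportionality degree bound on groups with $\alpha \leq (m+1)/(m+2)$ would, through \Cref{lem:PDtoGR}, yield a $\kappa$-group representation bound, contradicting the second clause of \Cref{thm:gr_lazy}.

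The one technical point worth checking is that the $\kappa$ produced by \Cref{lem:PDtoGR} actually lies in the scope of the cited negative results for every $\alpha$ in the claimed range. Inspecting its proof, $\kappa(\alpha,\lambda) = \lceil d_\R^{-1}(\lambda)/\alpha \rceil$ is well-defined for every $\alpha \in (0,1]$ and every $\lambda$ for which $d_\R$ is invertible, so the contradiction goes through on the full $\alpha$-intervals specified by the statement. I do not foresee any real obstacle beyond this bookkeeping, since the heavy lifting has already been done by \Cref{thm:gr_lazy} together with the corresponding AV impossibility of \citet{SLB+17a}.
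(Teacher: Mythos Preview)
Your proposal is correct and matches the paper's own reasoning exactly: the corollary is obtained by taking the contrapositive of \Cref{lem:PDtoGR} and invoking the negative $\kappa$-group representation results for AV (Theorem~2 of \citet{SLB+17a}) and for the myopic rules (\Cref{thm:gr_lazy}). The paper states this in a single sentence preceding the corollary, so there is nothing to add.
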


\section{Additional Details on \Cref{sec:experiments}}\label{app:experiments}
We describe the models and experiments introduced in \Cref{sec:experiments} in more detail here. We start with the two models used to generate the approval profiles and then describe the selection procedure of the DM. 
Throughout our experiments we always used instances with 60 voters and 20 candidates.

\subsection{Generating Approval Profiles}
We first describe how we generate the approval profiles randomly. Since we want to study satisfaction of groups of voters with roughly similar approval preferences we somehow want to generate profiles with such voter groups. We used two different approaches.

\paragraph{Blurred Parties.}
Here we assign each of the 60 voters to one of two parties. The size of the parties will vary over the experiments and we will concentrate on the satisfaction of one of the parties. Additionally, we will also associate half of the candidates to one of the parties and the other half to the other, such that each party has 10 candidates associated with them. 
Here, a party is a group of voters that have the same probability of approving a candidate.
Now, for a voter we go over all candidates and say that the voter approves that candidate with probability 0.95, if it is a candidate of the voters party, and with probability 0.05, otherwise. This process is done independently for each voter and for each candidate. Thus, each voter in expectation approves of 95\% of the candidates in their party.

\paragraph{Spatial.}
This is an adaption of the 4-Gaussian model that was described by \citet{EFL+17a} for the setting of linear preferences. We again group voters and candidates into parties, using 3 parties this time. Thus two parties get associated 7 candidates and one party gets associated 6 candidates. The number of voters in the parties again varies but we are always concerned with the satisfaction of the first party, $V\subseteq N$, of size $|V|$ and set the sizes of the other two parties to $\left\lceil60-\frac{|V|}{2}\right\rceil$ and $\left\lfloor60-\frac{|V|}{2}\right\rfloor$, respectively. 
In this model we now take a spatial approach using the Euclidean plane. Each of the three parties gets assigned a point---their \emph{center}---that lies on the unit circle. To make the three points equidistant from each other we place them at 0, 120 and 240 degrees. Now the voters and candidates from each party get sampled as points on the Euclidean plane according to a 2-dimensional Gaussian (i.e., normal distribution) with standard deviation 0.4 around their party-center. We say a voter approves of a candidate if that candidate is at Euclidean distance at most 0.8. 
Note that if we assume that a voter gets assigned their party's center as point in the Euclidean plane, then by construction of the Gaussian distribution with standard distribution 0.4, this voter approves of roughly 95\% of the voters of their party in expectation.

\subsection{Selecting Candidates}
To decide which candidates the DM selects for implementation we also use a probabilistic approach. Since the ranking that is provided to the DM should somehow factor into the decision which candidate to select we try to mimic a realistic behavior. To this end we use an approximation of Google's so called \emph{click-through rates (CTRs)}. These rates describe how likely it is for a user to click on the first, second, and so on entry in a Google search. These rates get approximated experimentally by various companies. The specific values for the first 15 positions we use for our experiments are as follows.\footnote{These values are taken from \url{www.wikiweb.com/google-ctr/}, accessed January 20th, 2021.}
\begin{center} \small
\begin{tabular}{ lccccccccccccccc } 
 \toprule
  position\hspace{-0.8em} & 1 & 2 & 3 & 4 & 5 & 6 & 7 & 8 & 9 & 10 & 11 & 12 & 13 & 14 & 15\\  \midrule
 CTR & 32.5 & 17.6 & 11.4 & 8.1 & 6.1 & 4.4 & 3.5 & 3.1  & 2.6 & 2.4 & 1.0 & 0.8 & 0.7 & 0.6 & 0.4 \\
\bottomrule
\end{tabular}
\end{center}

\end{document}